\documentclass[prd,tightenlines,nofootinbib,superscriptaddress]{revtex4}

\usepackage{pre_all}
\usepackage{pre_QI}
\allowdisplaybreaks

\hypersetup{
pdfstartview = {FitH},
}
\hypersetup{
	colorlinks=true,         
	linkcolor=brown,          
	citecolor=red,        
	urlcolor=blue            
}
\begin{document}

\title{Quantum Group Intertwiner Space From Quantum Curved Tetrahedron}

\author{{\bf Muxin Han}}\email{hanm@fau.edu}
\affiliation{Department of Physics, Florida Atlantic University, 777 Glades Road, Boca Raton, FL 33431, USA}
\affiliation{Institut f\"ur Quantengravitation, Universit\"at Erlangen-N\"urnberg, Staudtstr. 7/B2, 91058 Erlangen, Germany}

\author{{\bf Chen-Hung Hsiao}}\email{chsiao2017@fau.edu}
\affiliation{Department of Physics, Florida Atlantic University, 777 Glades Road, Boca Raton, FL 33431, USA}

\author{{\bf Qiaoyin Pan}}\email{qpan@fau.edu}
\affiliation{Department of Physics, Florida Atlantic University, 777 Glades Road, Boca Raton, FL 33431, USA}

\date{\today}

\begin{abstract}

In this paper, we develop a quantum theory of homogeneously curved tetrahedron geometry, by applying the combinatorial quantization to the phase space of tetrahedron shapes defined in \cite{Haggard:2015ima}. Our method is based on the relation between this phase space and the moduli space of SU(2) flat connections on a 4-punctured sphere. The quantization results in the physical Hilbert space as the solution of the quantum closure constraint, which quantizes the classical closure condition $M_4M_3M_2M_1=1$, $M_\nu\in \SU(2)$, for the homogeneously curved tetrahedron. The quantum group  $\mathcal{U}_q(\mathfrak{su}(2))$ emerges as the gauge symmetry of a quantum tetrahedron. The physical Hilbert space of the quantum tetrahedron coincides with the Hilbert space of 4-valent intertwiners of $\mathcal{U}_q(\mathfrak{su}(2))$. In addition, we define the area operators quantizing the face areas of the tetrahedron and compute the spectrum. The resulting spectrum is consistent with the usual Loop-Quantum-Gravity area spectrum in the large spin regime but is different for small spins. This work closely relates to 3+1 dimensional Loop Quantum Gravity in presence of cosmological constant and provides a justification for the emergence of quantum group in the theory.

\end{abstract}

\maketitle
\tableofcontents

\section{Introduction}

Quantum tetrahedron is a key building block in the theory of Loop Quantum Gravity (LQG) and plays a crucial role in the boundary states of the spinfoam amplitude of LQG. In LQG with vanishing cosmological constant, the physical Hilbert space of the quantum flat tetrahedron is the 4-valent $\SU(2)$ intertwiner space labeled by irreducible representation $K_\nu$'s, each assigned to a face $\ell_\nu$ of the quantum flat tetrahedron. Furthermore, the space is the solution space of the quantum flat closure condition, \ie $\sum^{4}_{\nu=1}\hat{\bf J}_\nu=0$, where $\hat{\bf J}_\nu=(\hat{J}_1,\hat{J}_2,\hat{J}_3)_\nu$ are $\mathfrak{su}(2)$ generators acting on the $\nu$-th copy of irreducible representation. The area spectrum of each face $\ell_\nu$ of the quantum flat tetrahedron is discrete and is characterized by a spin label $K_\nu$. 

The quantum flat closure condition is a quantization of the classical closure condition $\sum_{\nu=1}^4 a_\nu\hat{n}_\nu=0$, where $a_\nu,\hat{n}_\nu$ are the area and unit normal to the face $\ell_\nu$ respectively. Classically, the correspondence between a set of solutions of flat closure condition and flat tetrahedron is guaranteed by the \textit{Minkowski theorem} \cite{minkowski1897allgemeine}. This theorem has been generalized to the curved case \cite{Haggard:2015ima}, where a curved closure condition applies. The curved Minkowski theorem allows us to reconstruct homogeneously curved tetrahedra (spherical or hyperbolic tetrahedra) from a family of four $\SU(2)$ holonomies $M_\nu$ ($\nu=1,\cdots,4$) that satisfy the curved closure condition 
\be
M_{4}M_{3}M_{2}M_{1}=\Id_{\SU(2)}.
\label{eq:closure}
\ee
Here, $M_\nu$ is the holonomy around the $\nu$-th face of the homogeneously curved tetrahedron. Although the quantization of the closure condition for a flat tetrahedron has been extensively studied in LQG, the quantization of the curved closure condition and curved tetrahedron has not been explored yet. The homogeneously curved tetrahedron has played an important role in the recent construction of the spinfoam model with cosmological constant \cite{Haggard:2014xoa,Haggard:2015nat,Han:2021tzw} in 3+1 dimensional LQG. It is anticipated that the quantization of a curved tetrahedron should define the building block for the boundary Hilbert space of the spinfoam model. 

In this paper, we study the quantization of the curved closure condition and a homogeneously curved tetrahedron. At the classical level, the phase space corresponds to the solution spaces of the curved closure condition. The solution space coincides with the moduli space of $\SU(2)$ flat connections on a four-puncture sphere. In our approach, we adopt the Poisson bracket on the moduli space inspired by \cite{Fock:1997ai}, and then proceed with the quantization method known as combinatorial quantization \cite{Alekseev:1994au, Alekseev:1994pa, Alekseev:1995rn} as an approach of canonical quantization for the moduli space of flat connections. A quantum algebra $\mathfrak{M}$ of observables for flat connections is obtained on the four-puncture sphere. The quantum group $\mathcal{U}_q(\mathfrak{su}(2))$ emerges naturally as the gauge symmetry. It turns out that elements in $\mathfrak{M}$ enjoy the invariance under the quantum group action. 

The curved closure condition, when quantized, becomes a quantum constraint that is imposed on the quantum states to obtain the physical Hilbert space $W^0$ (we use bold letters to denote quantum operators) such that
\be
{\bf M}_{4}{\bf M}_{3}{\bf M}_{2}{\bf M}_{1}\Psi=\zeta\Psi,\qquad \forall\,\Psi\in W^0
\ee
where ${\bf M}_\nu$ is the quantum holonomy operator and the constant $\zeta$ has the classical limit $\zeta\xrightarrow{q\rightarrow1}1$. By using the representation theory of the quantum algebra $\mathfrak{M}$, we demonstrate that the solution space $W^0$ of the quantum closure condition exists and coincides with the intertwiner space of the quantum group $\mathcal{U}_q(\mathfrak{su}(2))$. Our result is valid for all $q=e^{i\theta}$ with $\theta\in(0,2\pi)$ which includes $q=e^{\frac{2\pi i}{k+2}}$ ($k\in\mathbb{N}$) being a root of unity. The intertwiner space depends on four irreducible $\mathcal{U}_q(\mathfrak{su}(2))$ representations $K_\nu\in \mathbb{N}/2$ (and additionally $K_\nu\leq k/2$ for $q$ root of unity), $\nu=1,...,4$ labeling the four punctures and corresponding to the quanta of the tetrahedron's areas. The quantization of the curved closure condition is equivalent to the quantization of the curved tetrahedra, with the solution space of the quantum closure condition encoding the geometric information of the quantum curved tetrahedra. The main result of this paper can be summarized by the following commuting diagram:

\be\ba{ccc}
\text{classical closure condition }\quad & \xrightarrow{\quad \text{ solution space }\quad}\quad & 
\ba{c}\text{ moduli space of flat connection}\\\text{as the physical phase space}\ea\\
\phantom{\text{quantization}}\left\downarrow\rule{0cm}{1cm}\right.\text{\small quantization} &&
\phantom{\text{quantization}}\left\downarrow\rule{0cm}{1cm}\right.\text{\small quantization}\\
\text{quantum closure condition }\quad & \xrightarrow{\quad \text{ solution space }\quad}\quad & \ba{c}\text{representation space of quantum algebra}\\\text{as the physical Hilbert space}\ea
\ea
\label{eq:road_map}
\ee

For $q$ a root of unity with $\theta=\frac{2\pi}{k+2}$ $(k\in\mathbb{N})$, the solution space derived from solving the quantum closure condition matches the physical Hilbert space of the Chern-Simons theory with a compact Lie group $\SU(2)$ on a compact surface with zero genus and four punctures, and $k+2={24 \pi}({\ell^2_{\rm p}\gamma|\Lambda|})^{-1}$ relates the Chern-Simons level to the cosmological constant. 
Here, $\gamma$ is the Barbero-Immirzi parameter, $\ell_{\rm p}=\sqrt{8\pi G\hbar/c^3}$ is the Planck length and $\Lambda$ is the cosmological constant. 
For generic $q=e^{i\theta}$, the phase $\theta$ relates to the cosmological constant by $\theta=\frac{1}{12}\ell_{\rm p}^2\gamma |\Lambda|$.

One of the geometric information of the curved tetrahedra is the areas $a_\nu$'s of the faces of a curved tetrahedron. The area $a_\nu$ relates to the holonomy $M_\nu$ in the closure condition \eqref{eq:closure} by $\frac{1}{2}\mathrm{Tr}(M_\nu)=\cos{(\frac{|\Lambda|}{6}
a_\nu)}$ \cite{Haggard:2015ima}, where $\Lambda$ is the cosmological constant.
The quantization relates the area operator to the  $q$-deformed Wilson loop. As a result, the area spectrum labeled by spin $K_\nu$ is given by
\be\label{area0000}
{\rm Spec}^{K_\nu}(\hat{a}_\nu)=\left\{\begin{array}{ll}
\gamma \ell_{\mathrm{p}}^2\left(K_\nu+\frac{1}{2}\right), & 0 \leq K_\nu<\frac{1}{2} B \\[0.1cm]
\frac{12 \pi}{|\Lambda|}-\gamma \ell_{\mathrm{p}}^2\left(K_\nu+\frac{1}{2}\right), & \frac{1}{2} B \leq K_\nu<B
\end{array}, \quad B=\frac{12 \pi}{\ell_{\mathrm{p}}^2 \gamma|\Lambda|}-1\right.,
\ee
where $\hat{a}_\nu$ is the quantum area operator for the $\nu$-th face. 
The area spectrum is bounded above and below. This formula is valid for both $q$ root of unit and generic $q$ as a phase. The cosmological constant provides a cut-off to the area spectrum.
For vanishing cosmological constant (so the second case in \eqref{area0000} disappears) and large spin $K_\nu$, the area spectrum reduces to $\gamma \ell^2_{\rm p} K_\nu$, which is consistent with the usual LQG area spectrum $\gamma\ell_{\rm p}^2\sqrt{K_\nu(K_\nu+1)}$ for large $K_\nu$. However, for small $K_\nu$, there is a significant difference from LQG with vanishing $\Lambda$. In particular, the area spectrum here is restrictively positive, whereas the usual LQG spectrum gives a trivial area at $K_\nu=0$.

This paper is organized as follows. Section \ref{sec:classical_closure} reviews the main ideas regarding classical curved closure conditions and its solution space, which is the moduli space of $\SU(2)$ flat connections. In this section, we also collect the main theorem of \cite{Haggard:2015ima}. In Section \ref{sec:moduli_space}, following Fock and Rosly's idea \cite{Fock:1998nu} that one can replace a 2-dimensional surface with a homotopically equivalent fat graph with an additional structure called ciliation, the graph we choose is called a simple graph which contains one base point and four loops that are generators of the fundamental group on the four-punctured sphere.  With this setup, we give the Poisson bracket of holonomies. The Poisson bracket is defined by the classical $r$-matrix, which is the solution of the classical Yang-Baxter equation. Section \ref{sec:quantum_moduli} reviews the construction of quantum algebra of moduli space of $\SU(2)$ flat connections (moduli algebra) and its representations following \cite{Alekseev:1994au, Alekseev:1994pa, Alekseev:1995rn}. In Section \ref{sec:quantum_closure}, we prove that the solution space of the quantum closure condition coincides with the intertwiner space of quantum group $\mathcal{U}_q(\mathfrak{su}(2))$, which is the representation space of moduli algebra. The intertwiner space is characterized by four irreducible representations of $\mathcal{U}_q(\mathfrak{su}(2))$ labeled by $K_\nu, \nu=1,...,4$, labeling the four punctures and corresponding to the quanta of the tetrahedron's areas. 
In Section \ref{sec:area}, we define the quantum area operator and calculate its spectrum. We also compare it with the area spectrum obtained from standard LQG. We conclude the paper in Section \ref{sec:conclusion}.

\section{Classical closure condition of a curved tetrahedron}
\label{sec:classical_closure}

We start by reviewing the closure condition of a convex homogeneously curved tetrahedron whose (Gaussian) curvature, identified with the cosmological constant $\Lambda$, can be either positive or negative. That is each face of the tetrahedron is flatly embedded in a three-sphere $S^3$ or a hyperbolic three-space $\bH^3$. Tetrahedra in flat space are then obtained when $\Lambda\rightarrow 0$ in both cases. 
To unify the notations, we denote the sign of the curvature as $s:=\sgn(\Lambda)$ and the $n$-dimensional homogeneously curved space as $\bE^{n,s}$ hence $\bE^{3,+}=S^3$ and $\bE^{3,-}=\bH^3$. 
Let us also fix the notations for the simplices. For a tetrahedron, we label a vertex $v$ and its opposite face $f$ by the same number and label an oriented edge connecting the target vertex $v_1$ and source vertex $v_2$ by the number pair $(v_1v_2)$. The same edge with the opposite orientation is denoted as $(v_1v_2)^{-1}\equiv(v_2v_1)$. 
An illustration of spherical and hyperbolic tetrahedra and the notations for simplices are given in fig.\ref{fig:tetrahedra}. 
We focus on non-degenerate tetrahedra in this paper. 
\begin{figure}[h!]
\centering
\begin{subfigure}{0.4\linewidth}
     \begin{tikzpicture}[scale=2]
  \draw[black, thick] (3,3) arc (0:60:1);
 \draw[black, thick] (2.5,3.866) arc (100:152.5:0.95);
 \draw[black, thick] (2.5,3.866) arc (120:200:1);
 \draw[black, thick] (2.06,2.66) arc (-139.5:-184:1);
 \draw[black, thick] (3,3) arc (-40:-100:1);
 \draw[black, thick] (3,3) arc (-70:-127:1);
 \draw[black, thick] (1.823,3.380) arc (-147:-132:1);
 \draw[black] (2.5,3.866)  node[anchor=south]{4};
 \draw[black] (3,3)  node[anchor=west]{3};
 \draw[black] (2.05,2.667) node[anchor=north]{2};
 \draw[black] (1.8,3.4) node[anchor=east]{1};
 \end{tikzpicture} 

	\subcaption{}
	\label{fig:SphericalTetra}
\end{subfigure}
\begin{subfigure}{0.4\linewidth}
    \begin{tikzpicture}[scale=3]
  \draw[black,thick] (-3,-3) arc (-27:36:-1);
 \draw[black,thick] (-3,-3) arc (-28.2:-67.8:1.3);
 \draw[black,thick] (-3,-3) arc (10:50:-1);
 \draw[black,thick] (-2.66,-3.59) arc (-124:-88:-0.69);
 \draw[black,thick] (-3.65,-3.59) arc (-64.1
 :-90.5:-1.13);
 \draw[black,thick] (-2.66,-3.59) arc (-70:11:-0.4);
 \draw[black,thick] (-2.912,-4.049) arc (-147.5:-96:-1);
 \draw (-3,-3) node[scale=1.3,anchor=south]{4};
 \draw (-2.66,-3.59) node[scale=1.3,right]{3};
 \draw (-2.912,-4.049) node[scale=1.3,below]{2};
 \draw (-3.692,-3.549) node[scale=1.3,below]{1};
  \end{tikzpicture}
	\subcaption{}
	\label{fig:HyperbolicTetra}
 \end{subfigure}
\caption{
{\it (a)} A tetrahedron flatly embedded in $S^3$. {\it (b)} A tetrahedron flatly embedded in $\bH^3$. For both tetrahedra, each one of the four numbers 1,2,3,4 labels a vertex as well as the face opposite to the vertex. An edge is labelled by a pair of numbers. For instance, the oriented edge connecting the source vertex 4 and the target vertex 2 is $(24)$ and the edge with the opposite orientation is $(42)\equiv(24)^{-1}$. The orientations of the edges are not specified here. }
\label{fig:tetrahedra}
\end{figure}
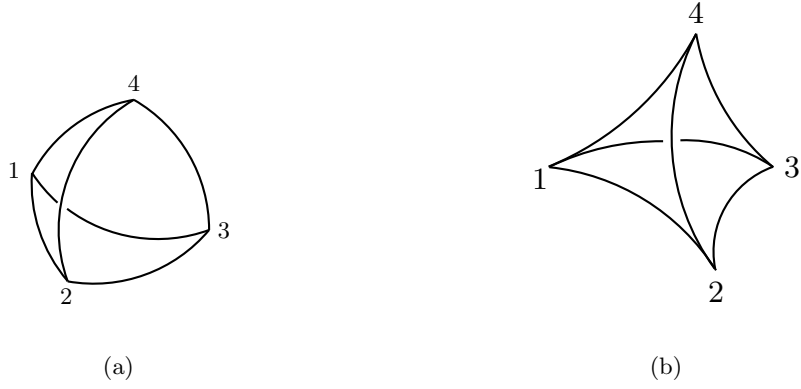

Each face of a tetrahedron is a triangle flatly embedded in a two-dimensional homogeneously curved subspace $\bE^{2,s}$ of $\bE^{3,s}$. The convexity guarantees that each edge of the triangle is the shortest geodesic on $\bE^{2,s}$ connecting the two end vertices of the edge. 
For each face, we choose a base point $p$ on the boundary and consider the oriented loop $\ell$ along the boundary starting and ending at $p$ whose orientation is counterclockwise when seen from the outside of the tetrahedron. Such an orientation generates an outward direction normal $\hat{n}_\ell(p)$ to the face at $p$ (and any other point within the face) by the right-hand rule, which is consistent with the topological orientation of the tetrahedron. We also denote the same loop with the opposite orientation as $\ell^{-1}$.

Indeed, a vector at $p$ tangent to the face gets rotated after parallel transport along $\ell$. The rotation angle is proportional to the area $a_\ell$ of the face enclosed by $\ell$. 
We denote the holonomy of the Levi-Civita connection along $\ell$ in the local frame of $p$ as $O_\ell(p)$. It is a group element of $\SO(3)$ for both curved tetrahedra embedded in $S^3$ and $\bH^3$ and encodes the information of $a_\ell$ by
\be\label{Ol0000}
O_\ell(p)=\exp\left[s \frac{|\Lambda|}{3}a_\ell\hat{n}_\ell(p)\cdot \vec{J}\right]\in \SO(3)\,,\qquad 
\frac{|\Lambda|}{3}a_\ell\in[0,2\pi]
\ee
where $\vec{J}=\{J_1,J_2,J_3\}$ are the generators of $\so(3)$ and the sign $s$ determines in which space the tetrahedron is embedded. 
For the convenience of constructing geometrical variables such as areas, dihedral angles and triple products, we lift the $\SO(3)$ holonomies $O_\ell$'s to the $\SU(2)$ ones $M_\ell$'s encoding the face areas in a similar way \cite{Haggard:2015ima}: 
\be
M_\ell(p)=\exp\left[s \frac{|\Lambda|}{3} a_\ell \hat{n}_\ell(p)\cdot\vec{\tau} \right]
\equiv \cos \left(\frac{|\Lambda|}{6}a_\ell\right)\Id-i\sin \left(\frac{|\Lambda|}{6}a_\ell\right)\vec{\sigma}\in\SU(2)\,,
\label{eq:H_definition}
\ee
where $\vec{\tau}=-\f{i}{2}\vec{\sigma}\in\su(2)$ and $\vec{\sigma}=\{\sigma_1,\sigma_2,\sigma_3\}$ are the Pauli matrices. 
Changing the base point corresponds to a conjugation action on $M_\ell(p)$ by an $\SU(2)$ group element, say $g$,
\be
M_\ell(p)\quad\longrightarrow\quad M_\ell(p')=gM_\ell(p)g^{-1}\,,\quad g\in\SU(2)\,.
\ee
In the rest of the paper, we only focus on $\SU(2)$ holonomies at the classical level and, when the base point is not necessarily specified, one denotes the holonomy simply by $M_\ell$. 

We have seen that the holonomy is taken in both the fundamental representation and the adjoint representation. In general, the holonomy can be expressed in an arbitrary irreducible representation of $\SU(2)$.
Every element of $\SU(2)$ is conjugated to an element in the abelian subgroup (maximal torus) consisting of all diagonal matrices in $\SU(2)$. 
The holonomy $M_{\ell}$ in \eqref{eq:H_definition} can be expressed as:
\be
M_{\ell}= g \begin{pmatrix}e^{i\frac{|\Lambda|}{6}a_{\ell}} & 0\\
0 & e^{-i\frac{|\Lambda|}{6}a_{\ell}}
\end{pmatrix}  g^{-1} =g e^{i\frac{|\Lambda|}{6}a_\ell H} g^{-1}\,,\quad
g\in \SU(2)\;,
\ee
where $H$ is the generator of the Cartan subalgebra of $\su(2)$. 
The right-most expression can be written in an arbitrary irreducible representation of $\SU(2)$. In particular, we are interested in the trace $\tr\lb M_\ell\rb$ of the holonomy, which stores the area information of the curved triangle that $\ell$ encircles (see \eqref{eq:half-trace_1}). Considering the irreducible represention labeled by $I\in\N/2$, the trace reads
\be
\tr^I(M^I_{\ell})=\tr^I(g^I (e^{i\frac{|\Lambda|}{6}a_{\ell}H})^I(g^{-1})^I)=\frac{\sin\left((2I+1)\frac{|\Lambda|}{6}a_{\ell}\right)}{\sin\lb\frac{|\Lambda|}{6} a_{\ell}\rb}\;.
\label{eq:area_general}
\ee
This result is the consequence of the conjugation-invariant property of the trace. We will see in Section \ref{sec:area} that Eq.\eqref{eq:area_general} has a straightforward quantum counterpart when we express the quantum version of $M_\ell$ in the $I$ representation. 

Changing the orientation of $\ell$ corresponds to changing $M_\ell$ to its inverse, $\ie$ $M_{\ell^{-1}}=M^{-1}_{\ell}$. 
For each curved tetrahedron, there exists a closure condition expressed as
\be
M_4M_3M_2M_1=\Id\,,\quad M_{\ell}\in\SU(2)\,,
\label{eq:closure_SU2}
\ee
 where all four holonomies are defined at the same base point. Indeed, it is easy to find a common point for three of the four holonomies. One then has to parallel transport the base point at least once through a specified path to define all the holonomies properly. As one of the simplest examples, choosing vertex 4 in fig.\ref{fig:tetrahedra} as the base point, $M_1(4),M_2(4),M_3(4)$ can all be defined directly by \eqref{eq:H_definition}. To define $M_4(4)$, we first define $M_4(2)$ based on vertex 2 by \eqref{eq:H_definition} and parallel transport it to vertex 4 through the edge (42).  We would like to stress here that, although holonomies in the closure condition \eqref{eq:closure_SU2} are written for the fundamental representation, they can be generalized to arbitrary irreducible representation $I\in\N/2$:
 \be
M^I_4M^I_3M^I_2M^I_1=\Id^I\,.
 \ee

A solution to \eqref{eq:closure_SU2} can be given by introducing the edge holonomy $h_{v_1v_2}$ for each oriented edge $(v_1v_2)$ with $h_{v_1v_2}^{-1}=h_{v_2v_1}$. Then
\be
\left\{\ba{l}
M_1=h_{43}h_{32}h_{24}\\
M_2=h_{41}h_{13}h_{34}\\
M_3=h_{42}h_{21}h_{14}\\
M_4=h_{42}M_4(2)h_{24}=h_{42}h_{23}h_{31}h_{12}h_{24}
\ea\right.
\label{eq:simple_solution}
\ee
is indeed a solution to \eqref{eq:closure_SU2}. The paths for the solution \eqref{eq:simple_solution} are illustrated in fig.\ref{fig:simple_pahts} for a spherical tetrahedron as an example and are the same for a hyperbolic tetrahedron. These paths are called the {\it simple paths} as they are the simplest set of paths up to the choice of the base point and the special edge \cite{Haggard:2015ima}.  
\begin{figure}[h!]
\begin{subfigure}[t]{0.2\linewidth}
 \begin{tikzpicture}[scale=2]
  \draw[red, thick,postaction={decorate},decoration={markings,mark=at position 0.6 with {\arrow[scale=1.5,>=stealth]{>}}}] (3,3) arc (0:60:1);
 \draw[black, thick, dashed] (2.5,3.866) arc (100:150:1);
 \draw[red, thick, postaction={decorate},decoration={markings,mark=at position 0.6 with {\arrow[scale=1.5,>=stealth]{>}}}] (2.5,3.866) arc (120:200:1);
 \draw[black, thick, dashed] (2.08,2.667) arc (-135:-180:1);
 \draw[red, thick, postaction={decorate},decoration={markings,mark=at position 0.6 with {\arrow[scale=1.5,>=stealth]{<}}}] (3,3) arc (-40:-100:1);
 \draw[black, thick, dashed] (3,3) arc (-70:-145:1);
 \filldraw[black] (2.5,3.866) circle(1pt) node[anchor=south]{4};
 \filldraw[black] (3,3) circle(1pt) node[anchor=west]{3};
 \filldraw[black] (2.05,2.667) circle(1pt) node[anchor=north]{2};
 \draw (1.8,3.3) node[anchor=east]{1};
 \end{tikzpicture} 
 \caption{}
\label{fig:ciliated_graph}
\end{subfigure}
\begin{subfigure}[t]{0.2\linewidth}
 \begin{tikzpicture}[scale=2]
  \draw[red, thick,postaction={decorate},decoration={markings,mark=at position 0.6 with {\arrow[scale=1.5,>=stealth]{<}}}] (3,3) arc (0:60:1);
 \draw[red, thick,postaction={decorate},decoration={markings,mark=at position 0.6 with {\arrow[scale=1.5,>=stealth]{<}}}] (2.5,3.866) arc (100:150:1);
 \draw[black, thick, dashed] (2.5,3.866) arc (120:200:1);
 \draw[black, thick, dashed] (2.08,2.667) arc (-135:-180:1);
 \draw[black, thick, dashed] (3,3) arc (-40:-100:1);
 \draw[red, thick, postaction={decorate},decoration={markings,mark=at position 0.6 with {\arrow[scale=1.5,>=stealth]{>}}}] (3,3) arc (-70:-147:1);
 \filldraw[black] (2.5,3.866) circle(1pt) node[anchor=south]{4};
 \filldraw[black] (3,3) circle(1pt) node[anchor=west]{3};
 \draw[black] (2.05,2.667) node[anchor=north]{2};
 \filldraw[black] (1.8,3.4) circle(1pt) node[anchor=east]{1};
 \end{tikzpicture} 
 \caption{}
\label{fig:ciliated_graph}
\end{subfigure}
\begin{subfigure}[t]{0.2\linewidth}
 \begin{tikzpicture}[scale=2]
  \draw[black,thick,dashed] (3,3) arc (0:60:1);
 \draw[red, thick,postaction={decorate},decoration={markings,mark=at position 0.6 with {\arrow[scale=1.5,>=stealth]{>}}}] (2.5,3.866) arc (100:150:1);
 \draw[red, thick, postaction={decorate},decoration={markings,mark=at position 0.6 with {\arrow[scale=1.5,>=stealth]{<}}}] (2.5,3.866) arc (120:200:1);
 \draw[red, thick, postaction={decorate},decoration={markings,mark=at position 0.6 with {\arrow[scale=1.5,>=stealth]{<}}}] (2.08,2.667) arc (-135:-180:1);
 \draw[black, thick, dashed] (3,3) arc (-40:-100:1);
 \draw[black, thick, dashed] (3,3) arc (-70:-147:1);
 \filldraw[black] (2.5,3.866) circle(1pt) node[anchor=south]{4};
 \draw[black] (3,3) node[anchor=west]{3};
 \filldraw[black] (2.05,2.667) circle(1pt) node[anchor=north]{2};
 \filldraw[black] (1.8,3.4) circle(1pt) node[anchor=east]{1};
 \end{tikzpicture} 
 \caption{}
\label{fig:ciliated_graph}
\end{subfigure}
\begin{subfigure}[t]{0.2\linewidth}
 \begin{tikzpicture}[scale=2]
  \draw[black,thick,dashed] (3,3) arc (0:60:1);
 \draw[black, thick,dashed] (2.5,3.866) arc (100:150:1);
 \draw[red, thick, postaction={decorate},decoration={markings,mark=at position 0.7 with {\arrow[scale=1.5,>=stealth]{>}}}] (2.48,3.864) arc (120:200:1);
 \draw[red, thick, postaction={decorate},decoration={markings,mark=at position 0.5 with {\arrow[scale=1.5,>=stealth]{<}}}] (2.53,3.866) arc (120:200:1);
 \draw[red, thick, postaction={decorate},decoration={markings,mark=at position 0.6 with {\arrow[scale=1.5,>=stealth]{>}}}] (2.04,2.667) arc (-138:-182:1);
 \draw[red, thick, postaction={decorate},decoration={markings,mark=at position 0.6 with {\arrow[scale=1.5,>=stealth]{>}}}] (3,3) arc (-40:-100:1);
 \draw[red, thick, postaction={decorate},decoration={markings,mark=at position 0.6 with {\arrow[scale=1.5,>=stealth]{<}}}] (3,3) arc (-70:-126:1);
 \draw[red, thick, postaction={decorate},decoration={markings,mark=at position 0.6 with {\arrow[scale=1.5,>=stealth]{>}}}] (1.8,3.38) arc (-140:-126:1);
 \filldraw[black] (2.5,3.866) circle(1pt) node[anchor=south]{4};
 \filldraw[black] (3,3) circle(1pt) node[anchor=west]{3};
 \filldraw[black] (2.05,2.667) circle(1pt) node[anchor=north]{2};
 \filldraw[black] (1.8,3.4) circle(1pt) node[anchor=east]{1};
 \end{tikzpicture} 
 \caption{}
\label{fig:ciliated_graph}
\end{subfigure}
\caption{The set of simple paths ({\it in red}) for holonomies $\{M_1,M_2,M_3,M_4\}$ defined in \eqref{eq:simple_solution} with vertex 4 as the base point and edge (42) as the special edge. They satisfy the closure condition \eqref{eq:closure_SU2}. }
\label{fig:simple_pahts}
\end{figure}
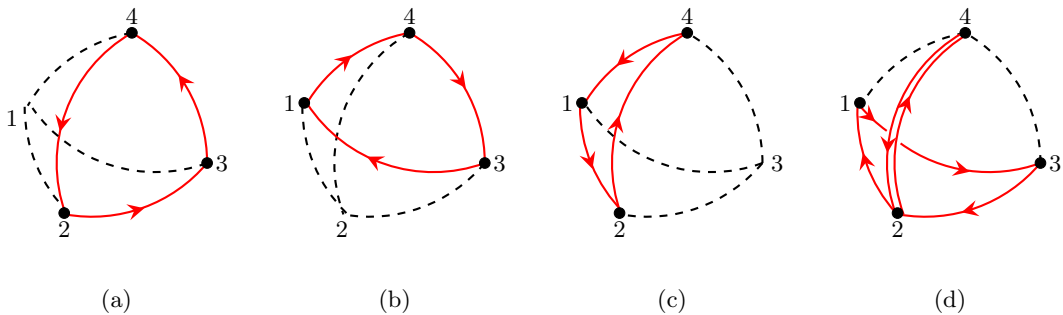

Note that, given a curved tetrahedron whose curvature is $\Lambda$, the full geometrical information can be described by the four holonomies $M_\ell$'s. We collect these geometrical interpretations in Appendix \ref{app:GeoTetra}. What is important for the rest of the construction is the sign, denoted by $\sgn(\det \Gram(M_\ell))$, of the determinant of the Gram matrix $\Gram(M_\ell):=\Gram(\cos\theta_{\ell_1\ell_2})$ defined in \eqref{eq:cos_H}. 

The key result of \cite{Haggard:2015ima} is that, reversely, each closure condition written in the form of \eqref{eq:closure_SU2} with a few extra restrictions identically defines a curved tetrahedron up to translation. This is called the {\it curved Minkowski theorem} for tetrahedra. In the flat limit, it coincides with the well-known Minkowski theorem for flat tetrahedra which was proven in 1897 \cite{minkowski1897allgemeine}. 
The curved Minkowski theorem is stated as follows. 
\begin{theorem}[The curved Minkowski theorem for tetrahedron]
\label{theorem:Minkowski}
\cite{Haggard:2015ima} Given four $\SU(2)$ holonomies $M_\ell$'s satisfying the non-degeneracy condition $\det \Gram(M_\ell)\neq 0$ and the closure condition $M_4M_3M_2M_1=\Id$, one can uniquely determine a non-degenerate curved tetrahedron in the following way.
\begin{enumerate}
	\item Label the sub-simplices of the tetrahedron as in fig.\ref{fig:tetrahedra}. The tetrahedron is flatly embedded in $S^3$ if $\sgn(\det \Gram(M_\ell))>0$ and flatly embedded in $\bH^3$ if $\sgn(\det \Gram(M_\ell))<0$;
	\item The holonomies $M_\ell$'s are associated to a set of simple paths with either the base point at vertex 4 and special edge (42) or the base point at vertex 3 and special edge (31) and the orientation of the paths determine the orientation of the face surrounded by the path;
	\item Each holonomy $M_\ell$ encodes the area $a_\ell$ of face $\ell$ and the outward direction normal $\hat{n}_\ell$ (when parallel transported to the base point) in its parametrization $M_\ell=\exp\lb s \frac{|\Lambda|}{6} a_\ell \hat{n}_\ell \cdot\vec{\tau}\rb$ with $s:=\sgn(\det \Gram(M_\ell))$.
\end{enumerate}
\end{theorem}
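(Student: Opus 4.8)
The plan is to invert the construction of Section \ref{sec:classical_closure}: rather than reading off holonomies from a given curved tetrahedron, I would reconstruct the tetrahedron from the data carried by the four $M_\ell$'s. The strategy proceeds in three stages: (i) extract local geometric data (face areas and outward unit normals) from each holonomy; (ii) translate the algebraic closure $M_4M_3M_2M_1=\Id$ into the geometric closure of these data, organized through the Gram matrix $\Gram(M_\ell)$; and (iii) invoke a Gram-matrix reconstruction of a simplex in a space of constant curvature, with $\sgn(\det\Gram(M_\ell))$ selecting the signature of the embedding space.

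First I would use the parametrization $M_\ell=\exp\lb s \frac{|\Lambda|}{6} a_\ell \hat{n}_\ell \cdot\vec{\tau}\rb$ to read off, from each $M_\ell$, the area $a_\ell$ from its trace via $\frac12\tr\lb M_\ell\rb=\cos\lb\frac{|\Lambda|}{6}a_\ell\rb$, and the outward normal $\hat{n}_\ell$ as the rotation axis. This extraction is well defined precisely away from the central elements $\pm\Id$, which the non-degeneracy hypothesis $\det\Gram(M_\ell)\neq0$ excludes. With the four normals in hand I would assemble $\Gram(M_\ell)=\Gram(\cos\theta_{\ell_1\ell_2})$ from the pairwise dihedral angles defined in \eqref{eq:cos_H}, and argue that non-degeneracy forces the normals into general position, so that they bound a genuine three-dimensional cell rather than a lower-dimensional degenerate configuration.

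The crux is stage (ii): showing that the ordered, non-abelian product $M_4M_3M_2M_1=\Id$ is equivalent to the curved closure of the four faces. Here I would exploit the simple-path description \eqref{eq:simple_solution}: tracking the edge holonomies $h_{v_1v_2}$ and their cancellations along the closed loop realizes the product as a single parallel transport around the boundary of the union of faces, which is trivial exactly when the faces glue into a closed tetrahedron. Linearizing in $\Lambda$ should recover the flat closure $\sum_\nu a_\nu\hat{n}_\nu=0$ and hence the classical Minkowski theorem \cite{minkowski1897allgemeine} in the limit, fixing the correct normalization and orientation conventions. For uniqueness and for the embedding space I would then appeal to the fact that a non-degenerate tetrahedron in $\bE^{3,s}$ is determined up to isometry by its Gram matrix of dihedral angles: positive-definiteness of the associated bilinear form (equivalently $\sgn(\det\Gram(M_\ell))>0$) yields a spherical tetrahedron in $S^3$, while a Lorentzian signature ($\sgn(\det\Gram(M_\ell))<0$) yields a hyperbolic tetrahedron in $\bH^3$, with the simple-path choice of base point (vertex 4 and special edge $(42)$, or the mirror choice at vertex 3) removing the residual discrete ambiguity.

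I expect the main obstacle to be precisely the non-abelian nature of the closure condition: in the flat case closure is a linear vector identity, whereas here one must show that the product of $\SU(2)$ elements taken in a definite cyclic order encodes the dihedral angles consistently, and that algebraic triviality of the product is not merely necessary but \emph{sufficient} for the curved faces to glue into an embedded tetrahedron. A secondary difficulty is controlling the global branch and sign data when passing from the $\SO(3)$ holonomies to their $\SU(2)$ lifts $M_\ell$ and when extracting the signature that distinguishes $S^3$ from $\bH^3$ uniformly across both signs of $\Lambda$.
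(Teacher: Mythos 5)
First, a structural point: the paper does not prove Theorem \ref{theorem:Minkowski}. The statement is imported wholesale from \cite{Haggard:2015ima}, and the text explicitly says ``we refer interested readers to \cite{Haggard:2015ima} for detailed proof,'' adding only two remarks (on the role of $\det\Gram(M_\ell)\neq0$ and on the equivalence of the two base-point choices). So there is no in-paper argument to compare against, and your proposal has to stand on its own. As written it is a plan rather than a proof: you correctly identify the two hard steps and then leave both of them open.

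Concretely, two gaps. (1) Stage (i) is not well defined as stated. Reading $(a_\ell,\hat{n}_\ell)$ off $M_\ell$ is two-to-one: $(a_\ell,\hat{n}_\ell)$ and $\lb\tfrac{12\pi}{|\Lambda|}-a_\ell,-\hat{n}_\ell\rb$ give the same $\SU(2)$ element, which is why \eqref{eq:cos_H} and the area relation in Appendix \ref{app:GeoTetra} carry undetermined signs $\epsilon_\ell=\pm$. These signs enter the very entries of $\Gram(M_\ell)$, so you cannot invoke $\det\Gram(M_\ell)\neq0$ to justify the extraction --- the Gram matrix is not even defined until the four $\epsilon_\ell$ (a sixteen-fold branch choice) are fixed, which is done by the triple-product inequalities of Appendix \ref{app:GeoTetra}, not by non-degeneracy. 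Your remark that non-degeneracy ``excludes the central elements'' addresses only the trivial part of this ambiguity. (2) Stage (ii), the sufficiency of $M_4M_3M_2M_1=\Id$ for geometric gluing, is argued by ``tracking the edge holonomies $h_{v_1v_2}$ and their cancellations''; but the decomposition \eqref{eq:simple_solution} into edge holonomies presupposes that the tetrahedron, its vertices and its edges already exist, which is exactly what is to be constructed. The reconstruction in \cite{Haggard:2015ima} runs in the opposite direction: fix the signs, form the Gram matrix, show that closure plus non-degeneracy force it to be the Gram matrix of a genuine non-degenerate simplex in $S^3$ or $\bH^3$ according to $\sgn(\det\Gram)$, rebuild the tetrahedron from dihedral angles via the spherical/hyperbolic laws of cosines, and only then verify that its face holonomies reproduce the given $M_\ell$'s. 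The flat limit recovering \cite{minkowski1897allgemeine} is a consistency check, not a substitute for this construction at finite $\Lambda$.
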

We refer interested readers to \cite{Haggard:2015ima} for detailed proof of this theorem. Let us only make two comments on the theorem. Firstly, the requirement of $\det \Gram(M_\ell)\neq 0$ is to exclude the degenerate curved tetrahedra, which cannot be reconstructed uniquely with holonomies in the above way. $\det \Gram(M_\ell)=0$ does not correspond to a flat tetrahedron for nonzero $|\Lambda|$. Indeed, the theorem is based on a constant $\Lambda$, while the flat limit corresponds to $\Lambda\rightarrow 0$, and the linearization of the closure condition recovers the closure condition for the flat tetrahedron. Secondly, the two choices of pairs of (base point, special edge), $\ie$ (4, (42)) and (3, (31)) reproduce the same curved tetrahedron. The solution of the four holonomies $M_\ell$'s in terms of the edge holonomies $h_{v_1v_2}$'s for the first case is given in \eqref{eq:simple_solution} and those for the second case is 
\be
\left\{\ba{l}
M_1=h_{34}h_{42}h_{23}\\
M_2=h_{31}h_{14}h_{43}\\
M_3=h_{31}M_3(1)h_{13}=h_{31}h_{12}h_{24}h_{41}h_{13}\\
M_4=h_{32}h_{21}h_{13}
\ea\right..
\label{eq:simple_solution_2}
\ee
These holonomies will be used to describe the moduli space of flat connections on a four-punctured sphere in the next section, which are closely related to the curved tetrahedron encoded in $\{M_\ell\}$. 

\section{Moduli space $\cM_{\Flat}$ of flat connections}
\label{sec:moduli_space}

In the previous section, we have seen that a homogeneously curved tetrahedron can be one-to-one described by four $\SU(2)$ holonomies satisfying the closure constraint (with some extra conditions) as given in Theorem \ref{theorem:Minkowski}. The closure constraint \eqref{eq:closure_SU2} appears in the same shape as the defining equation for the moduli space of flat connection on a four-punctured sphere, denoted as $\Sfour$ (0 denoting the genus of a sphere). The goal of this section is to show that the $\SU(2)$ holonomies fully describing a homogeneously curved tetrahedron are exactly those describing the moduli space of $\su(2)$ flat connection on a $\Sfour$. 

Let us consider a {\it simple graph} on $\Sfour$ containing one node and four loops $\ell_1,\ell_2,\ell_3,\ell_4$, each of which starts and ends at the node and surrounds a puncture as illustrated in fig.\ref{fig:simple_graph}. The orientations of the loops match that of the two-sphere. 
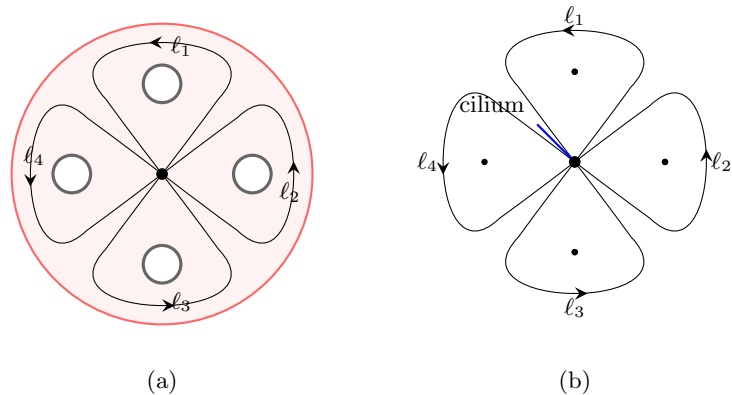
\begin{figure}
\begin{subfigure}[t]{0.3\linewidth}
\begin{tikzpicture}
 \filldraw[color=red!60, fill=red!5, thick](0,0) circle (2);
 \puncture{0,0}{0.55,0.42}{1,0.75}{2,1.7}{2,-1.7}{1,-0.75}{0.55,-0.42}; 
 \puncture{0,0}{-0.42,0.55}{-0.75,1}{-1.7,2}{1.7,2}{0.75,1}{0.42,0.55};
 \puncture{0,0}{-0.55,-0.42}{-1,-0.75}{-2,-1.7}{-2,1.7}{-1,0.75}{-0.55,0.42};
 \puncture{0,0}{0.42,-0.55}{0.75,-1}{1.7,-2}{-1.7,-2}{-0.75,-1}{-0.42,-0.55};
 \filldraw[black] (0,0) circle(2pt);
 \draw (0,1.7) node[anchor=west] {$\ell_1$};
 \filldraw[black] (0,1.2) circle(1pt);
 \draw (0,-1.7) node[anchor=west] {$\ell_3$};
 \filldraw[black] (0,-1.2) circle(1pt);
  \draw (1.7,0) node[anchor=north] {$\ell_2$};
 \draw (-1.7,0) node[anchor=south] {$\ell_4$};
 \filldraw[black] (1.2,0) circle(1pt);
 \filldraw[color=black!60, fill=white!5, very thick](-1.2,0) circle (0.25);
 \filldraw[color=black!60, fill=white!5, very thick](1.2,0) circle (0.25);
 \filldraw[color=black!60, fill=white!5, very thick](0,-1.2) circle (0.25);
 \filldraw[color=black!60, fill=white!5, very thick](0,1.2) circle (0.25);
\end{tikzpicture}
\caption{}
\label{fig:simple_graph}
\end{subfigure}
\begin{subfigure}[t]{0.3\linewidth}
\begin{tikzpicture}
   \pun{0,0}{0.55,0.42}{1,0.75}{2,1.7}{2,-1.7}{1,-0.75}{0.55,-0.42}; 
 \pun{0,0}{-0.42,0.55}{-0.75,1}{-1.7,2}{1.7,2}{0.75,1}{0.42,0.55};
 \pun{0,0}{-0.55,-0.42}{-1,-0.75}{-2,-1.7}{-2,1.7}{-1,0.75}{-0.55,0.42};
 \pun{0,0}{0.42,-0.55}{0.75,-1}{1.7,-2}{-1.7,-2}{-0.75,-1}{-0.42,-0.55}; 
 \draw[thick,blue] (0,0) -- (-0.5,0.5);
 \draw (-0.55,0.55) node[anchor=south east]{cilium};
 \filldraw[black] (0,0) circle(2pt);
 \draw (0,1.7) node[anchor=south] {$\ell_1$};
 \filldraw[black] (0,1.2) circle(1pt);
 \draw (0,-1.7) node[anchor=north] {$\ell_3$};
 \filldraw[black] (0,-1.2) circle(1pt);
  \draw (1.7,0) node[anchor=west] {$\ell_2$};
 \draw (-1.7,0) node[anchor=east] {$\ell_4$};
 \filldraw[black] (1.2,0) circle(1pt);
 \filldraw[black](-1.2,0) circle (1pt);
 \filldraw[black](1.2,0) circle (1pt);
 \filldraw[black](0,-1.2) circle (1pt);
 \filldraw[black](0,1.2) circle (1pt);
\end{tikzpicture}
\caption{}
\label{fig:ciliated_graph}
\end{subfigure}
\caption{{\it (a)} A four-punctured sphere $\Sfour$ ({\it surface in red}) and the simple graph ({\it loops in black})  on it. Every loop $\ell_\nu\,(\nu=1,2,3,4)$ surrounding the $\nu$-th puncutre in the simple graph is oriented counter-clockwise when seen outside $\Sfour$. {\it (b)} A ciliated graph based on the simple graph in (a). A cilium ({\it in blue}) is added between loops $\ell_1$ and $\ell_4$ to fix the linear order $\ell_1\prec\ell_2\prec\ell_3\prec\ell_4$.}
\end{figure}
(The homotopy equivalence classes of) these loops generate the fundamental group $\pi_1(\Sfour)$ on the four-punctured sphere 
\be
\pi_1(\Sfour)=\{\ell_1,\ell_2,\ell_3,\ell_4: \ell_4\circ\ell_3\circ\ell_2\circ\ell_1=1\}\,.
\ee
The representation of $\pi_1(\Sfour)$ can be used to define the moduli space of $G$ flat connections on $\Sfour$ as follows. 
\be
\cM_{\Flat}^0(\Sfour, G):=\{\text{flat } G \text{ connection on } \Sfour \}/\text{gauge}\simeq 
\Hom(\pi_1(\Sfour),G)/G\,,
\ee
where the quotient in the last expression is by the conjugate action of $G$. This means the moduli space of $\SU(2)$ flat connections can be represented as
\be
\cM_{\Flat}^0(\Sfour, \SU(2))=\{G_1,G_2,G_3,G_4\in\SU(2):G_4G_3G_2G_1=\Id_{\SU(2)}\}/\SU(2)\,,
\label{eq:flat_connection_def}
\ee
where each $G_\nu\, (\nu=1,2,3,4)$ is the $\SU(2)$ holonomy of the loop $\ell_\nu$ around the $\nu$-th puncture. One immediately observes the form of the closure condition in \eqref{eq:flat_connection_def}. 
Remarkably, if we identify these holonomies $\{G_\nu\}$ with the holonomies $\{M_\nu\}$ that identify a constant curved tetrahedron as described in Theorem \ref{theorem:Minkowski}, $\cM_{\Flat}(\Sfour, \SU(2))$ 
can be one-to-one mapped to a constant curved tetrahedron (up to orientation).

$\cM_{\Flat}^0(\Sfour, \SU(2))$ is equipped with a Poisson bracket induced from the Chern-Simons theory on the three-manifold $\Sfour\times \R$
\be
\{A^a_i(x_1),A^b_j(x_2)\}=-\f{2\pi}{k} \delta^{ab}\epsilon_{ij}\delta^{(2)}(x_1-x_2)\,,\quad x_1,x_2\in \Sfour\,,
\label{eq:Poisson_cont}
\ee
where $A$ is an $\su(2)$-valued one-form, $a,b$ are Lie algebra indices, $i,j$ are coordinate indices on $\Sigma_{0,4}$, and the prefactor $-\f{2\pi}{k}$ is introduced here to relate to the Chern-Simons theory with $k\in\N$ being the Chern-Simons level. We are interested in the symplectic leaves obtained from $\cM_{\Flat}^0(\Sfour, \SU(2))$ by fixing the conjugacy classes of the holonomies $\{G_\nu\}_{\nu=1,2,3,4}$ each labelled by the eigenvalue $\lambda_\nu$, which relates to the area of the face $\nu$ in the geometrical interpretation of tetrahedron. 
We denote this symplectic space as $\cM_{\Flat}(\Sfour, \SU(2))$. This is the phase space of the Chern-Simons theory equipped with the symplectic form known as the Atiyah-Bott-Goldmann two-form \cite{Atiyah:1982fa,goldman1984symplectic,Alekseev:1993rj}:
\be
\Omega=\int_{\Sigma_{0,4}}\tr\lb \delta A \wedge \delta A \rb\,,
\label{eq:symplectic_form}
\ee
where $\delta$ is the exterior differential on the phase space and $\tr$ is the non-degenerate Killing form on $\su(2)$. 

Quantizing \eqref{eq:Poisson_cont} directly leads to an infinite-dimensional quantum algebra which is complicated to deal with. An alternative approach is called the {\it combinatorial quantization}  \cite{Alekseev:1994pa,Alekseev:1994au,Alekseev:1995rn} whose classical setup 
follows the description of Fock and Rosly \cite{Fock:1998nu}. 
The idea is to replace the Riemann surface, here $\Sfour$, by a so-called {\it ciliated fat graph} and represent the moduli space $\cM_{\Flat}(\Sfour,\SU(2))$ by a finite-dimensional space whose Poisson structure of graph connections is consistent with the one given in \eqref{eq:Poisson_cont}. The consistency is checked by the fact that the Poisson brackets of observables recover those in the continuous theory. 

To build the ciliated flat graph on $\Sfour$, we start with the simple graph as shown in fig.\ref{fig:simple_graph}.  
The cyclic order of (half-)links incident to the node is set to be clockwise.  
 On top of that, we also (randomly) fix a linear order of these links. One can graphically add a {\it cilium} sitting at the node to separate the first and the last links. We say that link $l_1$ is of a lower order than $l_2$, denoted as $l_1\prec l_2$ or $l_2\succ l_1$, if the cilium sweeps through $l_1$ before $l_2$ in the clockwise direction. An example of linear order is illustrated in fig.\ref{fig:ciliated_graph}. 
The Poisson structure is represented by the Poisson brackets of holonomies $\{U_l\}$ along the links $l$'s. The linear order on the node is important to present the full Poisson algebra \cite{Fock:1998nu}. 

To express the Poisson bracket in a concise way, one makes use of the {\it classical $r$-matrix} $r\equiv \sum_a r_a^{(1)}\otimes r_a^{(2)}\in \su(2)\otimes\su(2)$ which is a solution to the {\it classical Yang-Baxter equation} (CYBE) \cite{Alekseev:1994pa,chari1995guide,kosmann2004lie}:
\be
[r_{12},r_{13}]+[r_{12},r_{23}]+[r_{13},r_{23}]=0\,,
\label{eq:CYBE}
\ee
where $r_{12}=\sum_a r_a^{(1)}\otimes r_a^{(2)}\otimes \Id,r_{13}=\sum_a r_a^{(1)}\otimes \Id\otimes r_a^{(2)}$ and $r_{23}=\sum_a  \Id \otimes r_a^{(1)}\otimes r_a^{(2)}$. 
We also denote the transpose of the $r$-matrix by $r':=\sum_a r_a^{(2)}\otimes r_a^{(1)}$. 
Under the basis $\tau_a=\f{1}{2i}\sigma_a$ of $\su(2)$, the $r$-matrix can be expressed as 
\be
r=\tau_3\otimes \tau_3 + 2\,\tau_+\otimes \tau_-\,,\quad
r'=\tau_3\otimes \tau_3 + 2\,\tau_-\otimes \tau_+\,,
\label{eq:r_su2}
\ee
where $\tau_\pm\equiv \tau_1\pm i \tau_2$. 
The full algebra is rather complicated for a general Riemann surface. Here we only give the two types of Poisson brackets relevant to us and refer interested readers to \cite{Fock:1998nu} for other types\footnotemark{}. Denote the holonomy along a loop $\ell$ by $M_\ell$ and $\Ml_\ell:=M_\ell\otimes \Id, \Mr_\ell:=\Id \otimes M_\ell$. 
\begin{enumerate}[i]
	\item For a single loop $\ell$, the Poisson bracket of the holonomy $M_\ell$ reads
\be
\{\Ml_\ell, \Mr_\ell \} = r \Ml_\ell \Mr_\ell + \Mr_\ell r'\Ml_\ell - \Ml_\ell r \Mr_\ell - \Ml_\ell \Mr_\ell r'\,.
\label{eq:Poisson_1}
\ee
\item For two loops $\ell$ and $\ell'$ with linear order $\ell\prec \ell'$, the Poisson bracket of the holonomies $M_\ell$ and $M_{\ell'}$ reads
\be
\{\Ml_\ell, \Mr_{\ell'}\} = r \Ml_\ell \Mr_{\ell'}+\Ml_\ell \Mr_{\ell'} r - \Mr_{\ell'}r\Ml_\ell -\Ml_\ell r \Mr_{\ell'}\,.
\label{eq:Poisson_2}
\ee
\end{enumerate}
 \footnotetext{On a general graph embedded on a Riemann surface, a classical $r$-matrix is assigned to each node and only the symmetric part of these $r$-matrices are required to be equal for all nodes \cite{Fock:1998nu}. The CYBE \eqref{eq:CYBE} ensures that the Poisson brackets expressed in terms of the $r$-matrices satisfy the Jacobi identity.}

As in the lattice gauge theory, the gauge transformation acts at the nodes of the graph. For a given oriented link $l$, denote the source and target as $s(l)$ and $t(l)$ respectively. The holonomy $U_l$ transforms under gauge transformation as
\be
U_l \quad\rightarrow \quad U_l^g=g_{s(l)}^{-1} \,U_l\, g_{t(l)}\,.
\label{eq:gauge_symmetry}
\ee
In the simple graph that we are interested in, only one node is relevant hence the gauge transformation acts on the holonomies by conjugation
\be
M_\ell \quad\rightarrow \quad M_\ell^g = g^{-1}M_\ell g\,.
\label{eq:gauge_symmetry2}
\ee
To preserve the Poisson brackets of holonomies, the gauge transformation $g_v$ at each node $v$ is equipped with the following Poisson brackets.
\be
\{\gl_v,\gr_v\}=\gl_v\gr_vr-r\gr_v\gl_v\,,\quad
\{\gl_v,\gr_{v'}\}=0\,,\quad v\neq v'\,.
\label{eq:Poisson_Lie}
\ee
It shows that the symmetry group forms a Poisson Lie group. 
 Observables are constructed by functions of holonomies which are invariant under the gauge transformation. Examples are the traces of products of holonomies as given in \eqref{eq:half-traces}.
Our goal next is to quantize the holonomy algebra given by \eqref{eq:Poisson_1} and \eqref{eq:Poisson_2} as well as the gauge symmetry \eqref{eq:gauge_symmetry2} into quantum algebras and construct the quantum observables.


\section{Quantization of $\cM_{\text{flat}}(\Sfour,\SU(2))$}
\label{sec:quantum_moduli}

Thanks to the expression with the classical $r$-matrix, the Poisson structure of $\cM_{\Flat}(\Sfour,\SU(2))$ possesses a natural quantization given by the commutation relations in terms of quantum matrices and the quantum version of the $r$-matrix.
The quantum theory is described 
based on the so-called quasitriangular ribbon (quasi) Hopf algebra $\UQ$ which is a deformation of the universal enveloping algebra of $\su(2)$ with a deformation parameter $q\in\bC$ and some extra algebraic structures, \eg the $*$-structure. 
It is important to note that $\UQ$ has different algebraic as well as representation structures for different values of $q$. We are in particular interested in the case that $q$ is a root-of-unity, \ie $q^p=1$ for some $p\in\N_+,p\geq 2$ and that $q$ is a phase but not a root-of-unity, \ie $|q|=1,q^p \neq 1\,,\forall q\in\N_+$. (In this paper, we denote the case of $q$ root-of-unity as $q^p=1$ and $q$ not a root-of-unity as $|q|=1,q^p\neq1$.) 
$\UQ$ is a Hopf algebra in the latter case while it is a quasi Hopf algebra in the former case (\eg \cite{chari1995guide,majid2000foundations}). 
We collect the necessary knowledge of $\UQ$ for both cases and their representation theory in Appendix \ref{sec:math}. 

The $\SU(2)$ gauge transformation is also quantized to the non-commutative quantum symmetry described by the (quasi) Hopf-$*$ algebra $\UQ$ in order to preserve the commutation relations. In this section, we first construct the quantum observables which are conservative under the quantum symmetry on $\Sfour$. They are mathematically described by the {\it invariant algebra}. These quantum observables are then used to build up the {\it moduli algebra} which is recognized as the quantum version of $\cM_\Flat(\Sfour,\SU(2))$. We will finally construct the representation of this moduli algebra. Both cases of $|q|=1,q^p\neq1$ and $|q|=1,q^p=1$ will be considered in this section without specifying unless necessary and keeping in mind that admissible irreducible representations are $I\in\N/2$ in the former case but only $I\in\N/2,\ 0\leq 2I \leq p-2$ in the latter case. $q$ being a root of unity relates to the SU(2) Chern-Simons theory by $q=\exp(\frac{2\pi i}{k+2})$, \ie $p=k+2$. Both cases can describe the quantization of a curved tetrahedron. 

\subsection{Quantum holonomies and quantum symmetry}

Denote the quantum holonomy for a loop $\ell$ as $\bM_\ell$. Given the quantum version of the $r$-matrix, the Poisson brackets \eqref{eq:Poisson_1} and \eqref{eq:Poisson_2} can be naturally quantized to commutators of quantum holonomies. Explicitly, \eqref{eq:Poisson_1} is quantized to
\be
R^{-1}\,\bMl_\ell\, R\,\bMr_\ell = \bMr_\ell\, R'\, \bMl_\ell\, {R'}^{-1}
\label{eq:Commutation_1}
\ee
and \eqref{eq:Poisson_2} is quantized to
\be
R^{-1}\,\bMl_\ell\,R\,\bMr_{\ell'} = \bMr_{\ell'}\,R^{-1}\,\bMl_\ell\,R\,,
\label{eq:Commutation_2}
\ee
where $R$ is the quantum $\cR$-matrix of $\UQ$ defined in \eqref{eq:R_def}. 
One can easily check that they recover the Poisson brackets at the first-$\hbar$ order by taking the expansion \eqref{eq:R_r} of $R$.  
We are more interested in the irreducible representations $\rho^I\otimes\rho^J$ of the above commutation relations:
\begin{subequations}
\begin{align}
(R^{-1})^{IJ}\,\bMlI\, R^{IJ}\,\bMrJ &= \bMrJ\, (R')^{IJ}\, \bMlI\, ({R'}^{-1})^{IJ}\,,
\label{eq:commutation_rep_1}
\\
(R^{-1})^{IJ}\,\bMlI\,R^{IJ}\,\bMrJp &= \bMrJp\,(R^{-1})^{IJ}\,\bMlI\,R^{IJ}\,,\quad  \ell\prec\ell'\,,
\label{eq:commutation_rep_2}\\
(R')^{IJ}\,\bMlI\,(R'^{-1})^{IJ}\,\bMrJp &= \bMrJp\,(R')^{IJ}\,\bMlI\,(R'^{-1})^{IJ}\,,\quad  \ell\succ\ell',.
\label{eq:commutation_rep_3}
\end{align}
\label{eq:commutation_rep}
\end{subequations}
We demand that inverting the orientation of the loop $\ell\rightarrow \ell^{-1}$ maps $\bM^I_\ell$ to $\bM^I_{\ell^{-1}}$ such that
\be
\bM^I_\ell \bM^I_{\ell^{-1}}=\bM^I_{\ell^{-1}} \bM^I_\ell=e^I\,.
\label{eq:inverse_M}
\ee

As the classical holonomies $\{M_{\ell}\}$ are in $\SU(2)$. A product of two holonomies for the same loop in two different representations should admit the recoupling theory of $\SU(2)$ by means of the CG maps. Explicitly,
\be
\MlI \MrJ=\sum_K C_0[IJ|K]^* M_\ell^K C_0[IJ|K]\,,
\label{eq:M_CG}
\ee
where $C_0[IJ|K]:V^I\otimes V^J\rightarrow V^K$ and $C_0[IJ|K]^*:V^K\rightarrow V^I\otimes V^J$ are the CG maps of $\SU(2)$. 
Such a recoupling relation needs to be deformed for $\bM_\ell$ in order to preserve the commutation relations \eqref{eq:commutation_rep}. The result is given in terms of the CG maps for $\UQ$ (see \eqref{eq:CG_def1}) and \eqref{eq:CG_def2}) and reads
\begin{align}
 \bMlI\,R^{IJ}\, \bMrJ = \sum_K C[IJ|K]^* \bM_\ell^K C[IJ|K]\,.
 \label{eq:holonomy_recoupling}
\end{align}
Mathematically, the quantum holonomies $\{\bM^I_\ell\}$ of the simple graph on $\Sfour$ can be used to define the  {\it graph algebra} denoted as $\cL_{0,4}$.

\medskip

\noindent {\it Graph algebra $\cL_{0,4}$. }
The graph algebra $\cL_{0,4}$ for the simple graph $\Gamma$ on $\Sfour$ is generated by the matrix elements of $\{\bM^I_\ell\}_{I\in\N/2,\ell\in\Gamma}$ associated to loops $\ell$'s in $\Gamma$ satisfying the commutation relations \eqref{eq:commutation_rep}, the invertibility relation \eqref{eq:inverse_M} and the recoupling relations \eqref{eq:holonomy_recoupling} \cite{Alekseev:1994au,Alekseev:1995rn}. Here, $I$ runs through all the irreducible representations of $\UQ$. 
In this sense, one can understand $\bM_\ell^I$ as an element in $\End(V^I)\otimes \cL_{0,4}$ and each matrix element is in $\cL_{0,4}$.
Consistently, the Poisson bracket \eqref{eq:Poisson_Lie} of the gauge transformation elements is quantized to 
\be
R\,\bgl\bgr = \bgr\bgl R\,,\quad \bgl\bgrp=\bgrp\bgl\,,\quad v\neq v'\,,
\label{eq:quantum_symmetry}
\ee
where ${\bf g}_v$ is the quantized version of $g_v$. This commutation relation reveals that the underlying quantum symmetry is $\UQ$. 

We define the action of a $\UQ$ element $\xi$ on the generators $\bM^I_\ell$ of $\cL_{0,4}$ as follows.
\be
\xi(\bM^I_\ell)=\sum_a \rho^I(S(\xi^{(1)}_a))\,\bM^I_\ell\,\rho^I(\xi^{(2)}_a)\,.
\label{eq:quantum_gauge_action}
\ee
This is the quantization of the conjugacy action \eqref{eq:gauge_symmetry2} and can be checked to preserve \eqref{eq:commutation_rep} subject to the commutation relation \eqref{eq:quantum_symmetry} of the gauge symmetry. 
Take $\xi$ to be the generators $\{X,Y,q^{\f{H}{2}}\}$ of $\UQ$, the actions are explicitly
\begin{subequations}
\begin{align}
q^{\frac{H}{2}}(\bM_\ell^I)&=\rho^{I}(q^{-\frac{H}{2}})\bM_\ell^{I}+\bM_\ell^{I}\rho^{I}(q^{\frac{H}{2}})\,,\\
X(\bM_\ell^{I})&=-q\rho^{I}(X)\bM_\ell^{I}\rho^{I}(q^{\f{H}{2}})+\rho^{I}(q^{\f{H}{2}})\bM_\ell^{I}\rho^{I}(X)\,,\\
Y(\bM_\ell^{I})&=-q^{-1}\rho^{I}(Y)\bM_\ell^{I}\rho^{I}(q^{\f{H}{2}})+\rho^{I}(q^{\f{H}{2}})\bM_\ell^{I}\rho^{I}(Y)\,.
\end{align}
\end{subequations}
The action can be generalized to the whole $\cL_{0,4}$ by the property
\be
\xi(\alpha\beta)=\sum_a\xi^{(1)}_a(\alpha)\xi^{(2)}_a(\beta)\,,\quad \forall \alpha,\beta\in\cL_{0,4}\,.
\ee

Especially, for a one-punctured sphere, there is only one holonomy $\bM_\ell$ along the single puncture. In this case, the graph algebra $\cL_{0,1}$ generated by matrix elements of $\bM_\ell^I$ is also called the {\it loop algebra}. We denote the loop algebra corresponding to the $\nu$-th puncture ($\nu=1,2,3,4$) of $\Sfour$ as $\cL_\nu$ and denote the matrix of its generators as $\bM_\nu^I$ for notation conciseness. The centers of these loop algebras (introduced below) will be shown to be important for defining the moduli algebra. 

As a graph algebra, $\cL_{0,4}$ is not equipped with a natural $*$-operation. However, to define the observables later, it is necessary to define a $*$-algebra. To this end, we consider the semi-direct product of $\cL_{0,4}$ and the gauge algebra $\UQ$ and construct $\cS_{0,4}=\UQ\ltimes \cL_{0,4}$ \cite{Alekseev:1994au,Alekseev:1995rn}. 
It is generated by generators of $\cL_{0,4}$ and $\mu^I(q^{\frac{H}{2}}),\mu^I(X)$ and $\mu^I(Y)$ defined in the following way.
\be
\mu^I(\xi):=(\rho^I\otimes\Id)\Delta(\xi)\,,\quad \forall \xi\in\UQ\,.
\label{eq:def_mu_xi}
\ee
The commutation relation between $\mu^I(\xi)$ and $\bM_\ell^I$ is\footnotemark{}
\be
\mu^I(\xi)\bM_\ell^I=\bM_\ell^I\mu^I(\xi)\,,
\label{eq:commute_xi_M}
\ee
which is also called the covariance property. 
\footnotetext{
The commutation relation \eqref{eq:commute_xi_M} takes a simple form because we are studying a simple graph on $\Sfour$. For a general graph with more nodes, the commutation relation of a gauge element, say $\xi_v$, on node $v$ and a quantum holonomy, say $U^I_l$, on link $l$ also relies on the relative location between $v$ and $l$. See \cite{Alekseev:1994pa,Alekseev:1994au} for more details.
}
One can now define the $*$-operation on $\bM_\ell^I$ which is now viewed as an element in $\End(V^I)\otimes \cS_{0,4}$:
\be
(\bM_\ell^I)^*:=(e^I\otimes\kappa^{-1}) (R^I\bM^I_{\ell^{-1}}(R^{-1})^I)\,(e^I\otimes \kappa)
\in\End(V^I)\otimes \cS_{0,4}\,.
\label{eq:star_graph_algebra}
\ee
Here, $R^I:=(\rho^I\otimes\Id)(R)$ and similarly for $(R^{-1})^I$. $\kappa$ is the unitary (in the sense that $\kappa^*=\kappa^{-1}$) element of $\UQ$ obtained from the central element $v\equiv \kappa^2$ defined in \eqref{eq:def_v}. Such a $*$-operation preserves the commutation relation \eqref{eq:commutation_rep} when extended to $\cS_{0,4}$.

\subsection{Moduli algebra $\fM_{0,4}^{K_\nu}$ as the quantization of $\cM_\Flat(\Sfour,\SU(2))$ }
\label{subsec:moduli_algebra}

The graph algebra $\cL_{0,4}$ contains the quantized Poisson structure of quantum holonomies in the simple graph $\Gamma$ and it allows us to define the quantum gauge action. To proceed, we look for the subalgebra of $\cL_{0,4}$ which is invariant under the quantum gauge action \eqref{eq:quantum_gauge_action} hence the algebra of observables. It is called the {\it invariant algebra}, denoted as $\cA_{0,4}$ \cite{Alekseev:1994pa,Alekseev:1994au,Alekseev:1995rn}. 

\medskip

\noindent {\it Invariant algebra $\cA_{0,4}$. }
The algebra $\cA_{0,4}$ is defined as a subalgebra of the graph algebra $\cL_{0,4}$ containing all elements in $\cL_{0,4}$ that are invariant with respect to the action \eqref{eq:quantum_gauge_action} of $\UQ$, \ie 
\be
\cA_{0,4}=\{A\in\cL_{0,4}|\xi(A)=A\epsilon(\xi)\}\,.
\ee
The elements of $\cA_{0,4}$ are linear combinations of the form \cite{Alekseev:1995rn} 
\be
\tr_{q}^{I}(C[I_1I_2I_3I_4|I]\bM_{1}^{I_1}\bM_{2}^{I_2}\bM_{3}^{I_3}\bM_{4}^{I_4}C[I_1I_2I_3I_4|I]^*)\,,
\label{eq:generator_A04}
\ee
where $\tr_q^I$ is the quantum trace defined in \eqref{eq:quantum_trace_g} and $C[I_1I_2I_3I_4|I]$ is the intertwining map for $\UQ$ action defined similarly as in \eqref{eq:CG_def1} and $C[I_1I_2I_3I_4|I]^*$ is defined similarly as in \eqref{eq:CG_def2}. Explicitly, they are defined through the intertwining relations 
\begin{align}
C[I_1I_2I_3I_4|I]\lb\rho^{I_1}\otimes\rho^{I_2}\otimes\rho^{I_3}\otimes\rho^{I_4}\rb\Delta^{(3)}(\xi)&=\rho^J(\xi)C[I_1I_2I_3I_4|I]\,,\\
\lb\rho^{I_1}\otimes\rho^{I_2}\otimes\rho^{I_3}\otimes\rho^{I_4}\rb\Delta^{'(3)}(\xi)C[I_1I_2I_3I_4|I]^*&=C[I_1I_2I_3I_4|I]^*\rho^I(\xi)\,.
\end{align}
$C[I_1I_2I_3I_4|I]$ can be separated into a series of three CG maps and different ways of separation are related through the $6j$-symbols of $\UQ$ as in \eqref{eq:CG-6j-identity_1} for the case $|q|=1,q^p\neq 1$ and as in \eqref{eq:CG-6j-identity} for the case $q^p=1$. 
The $*$-operation on $\cA_{0,4}$ inherits from that of $\cS_{0,4}$.

For the loop algebra $\cL_\nu$, the invariant subalgebra is simply proportional to (the representation of) the quantum trace $\tr^I_q(\bM_{\nu}^I)$ defined in \eqref{eq:quantum_trace_g}. We define the {\it central elements} $\{c_\nu^I\}_{I}$ of $\cL_\nu$ as
\be
c_\nu^I=\kappa^I \tr^I_q(\bM_{\nu}^I)\,.
\label{eq:c_def}
\ee
They are invariant elements of $\cL_\nu$
and they satisfy the {\it fusion rule} 
\be\ba{l}
c_{\nu}^{I}c_{\nu}^{J}=c_{\nu}^K\,,\quad\\[0.15cm]
c_{\nu}^{I}c_{\mu}^{J}=c_{\mu}^{J}c_{\nu}^{I}\,,\quad \mu\neq\nu\,,
\ea
\label{eq:fusion_rules}
\ee
which can be proved using the properties \eqref{eq:g_properties} of the group-like element $g$, in particular $\Delta(g)=g\otimes g$ and $gS(\xi)=S^{-1}(\xi)g$, the recoupling \eqref{eq:holonomy_recoupling} of the quantum holonomies and the normalization \eqref{eq:CG_normalization} of the CG map. See also \cite{Alekseev:1994au}. 

The central element has $*$-operation given as $(c_{\nu}^{I})^{*}=c_{\nu}^{\Bar{I}}$, where the $\Bar{I}$ is the representation conjugate to $I$. 
The element $c_{\nu}^I$ commutes with holonomies around punctures other than $\nu$, \ie $c_{\nu}^{I}\bM_{\mu}^{J}=\bM_{\mu}^{J}c_{\nu}^{I}$ for $\mu\neq \nu$.

The above concepts can be well-defined for both cases $q^p\neq 1$ and $q^p=1$. To proceed, we will have to consider these two cases separately in order to define the moduli algebra properly. 
\medskip

\noindent {\bf The case $q^p=1$. }

Since the root-of-unity case has finitely many physical representations, one can define a symmetric and invertible ``$S$-matrix'' $S_{IJ}$ in the following way.
\be
S_{IJ}:=\cN (\tr^I_q\otimes\tr^J_q)(R'R)\,,\quad
\cN=\f{1}{\lb\sum_I d_I^2\rb^\f12}\,,
\label{eq:S-matrix_def}
\ee
where the summation runs through all the physical representations $I=0,\f12,\cdots,\f{p-2}{2}$. 
Indeed, if $I$ can take up to $\infty$, which is the case for $q^p\neq 1$, $\cN\rightarrow 0$ hence $S_{IJ}$ is ill-defined.  
The $S$-matrices satisfy the following properties \cite{Frohlich:1990ww,Alekseev:1994au}.
\be
S_{IJ}=S_{JI}\,,\quad S_{0J}=\mathcal{N}d_{J}\,,\quad
\sum_{J}S_{IJ}S_{JK}=\delta_{IK}\,,\quad
\sum^{u(I,J)}_{K=|I-J|}S_{KL}=\f{S_{JL}S_{IL}}{\cN d_{L}}\,,
\label{eq:S-property}
\ee
where $u(I,J)=\min(I+J,p-2-I-J)$.
The properties above show that the inverse element of the $S$-matrix is $S_{IJ}$ itself. 

With the help of the $S$-matrix, the {\it characters} in $\cA_{0,4}$, denoted as $\chi_\nu^J$ with $\nu=0,1,2,3,4$, can be defined as
\be
\chi_{\nu}^{J}=\cN d_{J}\sum_KS_{JK}c_{\nu}^{\Bar{K}}\,.
\label{eq:character_def}
\ee
A character is indeed a central element. It is also easy to prove that it is an orthogonal projector in $\cA_{0,4}$ satisfying
\be
\chi_{\nu}^{I}\chi_{\nu}^{J}=\delta_{IJ}\chi_{\nu}^{I}\,,\quad
(\chi_{\nu}^{J})^{*}=\chi_{\nu}^{J}\,.
\ee
Specially, for $\nu=0$,  $\bM_{0}^{J},\, c^J_0$ and $\chi^0$ are defined as
\be
\bM_0^J:=\kappa_{J}^{3}\bM_{4}^{J}\bM_{3}^{J}\bM_{2}^{J}\bM_{1}^{J}\,,\quad
c_{0}^{J}:=\kappa_J\tr_{q}^{J}(\bM_{0}^{J})\,,\quad
\chi_0^0= \cN^2\sum_J d_J\kappa_J^4 \tr_q^J\lb\bM_{4}^{J}\bM_{3}^{J}\bM_{2}^{J}\bM_{1}^{J}\rb\,.
\label{eq:bM_0}
\ee

We are now ready to give the definition of moduli algebra for the root-of-unity case.

\medskip

\noindent {\it Moduli algebra $\fM^{K_\nu}_{0,4}$ for $q^p=1$. } 
The moduli algebra $\fM_{0,4}^{K_\nu}$ of a four-punctured sphere, each of which is associated with an irreducible physical representation $K_{\nu}\,(\nu=1,2,3,4)$, is a $*$-algebra defined as \cite{Alekseev:1995rn}
\be
\fM_{0,4}^{K_{\nu}}:=\chi_{0}^{0}\chi_{1}^{K_{1}}\chi_{2}^{K_{2}}\chi_{3}^{K_{3}}\chi_{4}^{K_{4}}\cA_{0,4}\,.
\label{eq:def_moduli_algebra}
\ee
The expression above means that each element in $\fM_{0,4}^{K_{\nu}}$ is obtained by an element in the invariant algebra $\cA_{0,4}$ multiplied by the five characters
$\chi_{0}^{0}\,,\,\chi_{1}^{K_{1}}\,,\,\chi_{2}^{K_{2}}\,,\,\chi_{3}^{K_{3}}\,,\,\chi_{4}^{K_{4}}\in\cA_{0,4}$.

\medskip

\noindent {\bf The case $|q|=1, q^p\neq 1$. }
\medskip

For the case that $q$ is not a root-of-unity, the unboundedness of representations results in an ill-defined $S$-matrix \eqref{eq:S-matrix_def}, then the moduli algebra needs to be defined separately. In fact, the re-definition should start from the loop algebra and the graph algebra. 

We first consider the loop algebra in this case. Define $\bX=\kappa^{-1}\bX_+(\bX_-)^{-1}$ with $\bX_+\equiv R'$ and $\bX_-\equiv R^{-1}$. Then its representation reads $\bX^{I}=\kappa^{-1}_I(R'R)^{I}\equiv (\rho^I\otimes \Id)(R'R)$, where the pre-factor $\kappa^{-1}_I$ depends on the normalization condition of the CG maps. 
One can easily check that $\bX^I$ satisfies the same properties \eqref{eq:commutation_rep_1} and \eqref{eq:holonomy_recoupling} as the generators $\{\bM_\ell^I\}$ of the loop algebra. 
This shows that the loop algebra $\cL_{0,1}$ is isomorphic to $\UQ$ in this case\footnote{
Strictly speaking, the isomorphism is between $\cL_{0,1}$ and $\UQsl$ as proven in \cite{Alekseev:1993gh}. By such isomorphism, the $*$-structure of $\UQsl$, leading to $\UQ$, can induce a $*$-structure on $\cL_{0,1}$, which means that $\cL_{0,1}$ is $*$-isomorphic to $\UQ$. 
For the same reason, $\cL_{0,4}$ is $*$-isomorphic to $\UQ^{\otimes4}$.
\label{footnote:isomorphism}} 
\cite{Alekseev:1993gh,Alekseev:1995rn}. 

The isomorphism can be extended to the graph algebra as $\cL_{0,4}\cong \UQ^{\otimes 4}$. More precisely, we assign two elements $\bX_{\nu,+}\equiv R'_{1,\nu+1}$ and $\bX_{\nu,-}\equiv R_{1,\nu+1}^{-1}$ to the $\nu$-th puncture and define
\be
\bX_{\nu}^{I}:=\kappa_{I}^{-1} \bX^I_{1,+}\bX^I_{2,+}\cdots \bX^I_{\nu,+}(\bX^I_{\nu,-})^{-1}\cdots (\bX^I_{2,+})^{-1}(\bX^I_{1,+})^{-1}\in\End(V^I)\otimes \UQ^{\otimes \nu}\otimes e^{\otimes (4-\nu)}\,,
\label{eq:iso_M_graph_algebra}
\ee
where $e$ is the identity in $\UQ$.
They satisfy the commutation relations \eqref{eq:commutation_rep} and the recoupling relation \eqref{eq:holonomy_recoupling}. (See the detailed proof in Appendix \ref{app:isomorphism}.) 
We also define 
\be
\bX_0=\kappa^{-1}\bX_{0,+}(\bX_{0,-})^{-1}\,,\quad \text{ where } \bX_{0,\pm}:=\prod_{\nu=1}^{4}\bX_{\nu,\pm}\,,\quad 
\ee
and $\bX_0^I$ is its representation in $V^I$. 
$\bX_0$ can also be viewed as an embedding $\UQ\hookrightarrow \bigotimes_{\nu=1}^4\UQ$. 
Then we observe the same expression as in \eqref{eq:bM_0}, \ie
\be
\bX_0^I=\kappa_I^{-1}\bX_{0,+}^I\lb\bX_{0,-}^I\rb^{-1}
=\kappa^3_I \bX_{4}^I\bX_{3}^I\bX_{2}^I\bX_{1}^I\,.
\label{eq:X0}
\ee
From the expression above, the $\bX^I_0$ is the holonomy of the biggest cycle. This will be related to the quantum closure condition we define in Section \ref{sec:quantum_closure} below.

The element $c_\nu^I:=\kappa_I \tr_q^I(\bX_{\nu}^I)$ is defined in the same way as \eqref{eq:c_def} but with the new holonomy definition \eqref{eq:iso_M_graph_algebra}. 

The quantum trace of any element $Y^I\in\End(V^I)\otimes e\otimes\UQ$ ($e$ is the identity element in $\UQ$) satisfies the property 
\be
\tr_q^I(Y^I )=\tr_q^I((R^{-1})^{I} Y^I R^{I})=\tr_q^I(R'^{I}Y^I(R'^{-1})^{I})\,.
\label{eq:trace_lemma_0}
\ee
It is proven in Lemma \ref{lemma:trace_lemma}.
One can then show that $c_\nu^I$ is a central element of the copy of $\UQ$ corresponding to $\nu$-th puncture. That is, it commutes with $\bX_{\mu}^J\,,\,\forall\mu=1,\cdots, 4$ as we now prove. We consider $\mu=\nu$ and $\mu\neq \nu$ separately
for any representation $J\in\N/2$, 
\begin{subequations}
\be\begin{split}
\rho^J(c^I_\nu) \bX_{\nu}^J = \kappa_I\tr_q^I((R^{-1})^{IJ}\bXlnuI R^{IJ})\bX_{\nu}^J 
&\equiv \kappa_I\tr_q^I((R^{-1})^{IJ}\bXlnuI R^{IJ}\bXrnuJ)\\
&=\kappa_I\tr_q^I(\bXrnuJ (R')^{IJ}\bXlnuI (R'^{-1})^{IJ})
=\bX_{\nu}^J\rho^J(c^I_\nu)\,, 
\end{split}\ee
\be\begin{split}
\rho^J(c^I_\nu) \bX_{\mu}^J = \kappa_I\tr_q^I((R^{-1})^{IJ}\bXlnuI R^{IJ})\bX_{\mu}^J 
&\equiv \kappa_I\tr_q^I((R^{-1})^{IJ}\bXlnuI R^{IJ}\bXrmuJ)\\
&=\kappa_I\tr_q^I(\bXrmuJ (R^{-1})^{IJ}\bXlnuI R^{IJ})
=\bX_{\mu}^J\rho^J(c^I_\nu) \,,\quad
\ell_\nu\prec\ell_\mu\,,
\end{split}\ee
\be\begin{split}
\rho^J(c^I_\nu) \bX_{\mu}^J = \kappa_I\tr_q^I((R')^{IJ}\bXlnuI (R'^{-1})^{IJ})\bX_{\mu}^J 
&\equiv \kappa_I\tr_q^I((R')^{IJ}\bXlnuI (R'^{-1})^{IJ}\bXrmuJ)\\
&=\kappa_I\tr_q^I(\bXrmuJ (R')^{IJ}\bXlnuI (R'^{-1})^{IJ})
=\bX_{\mu}^J\rho^J(c^I_\nu) \,,\quad
\ell_\nu\succ\ell_\mu\,,
\end{split}\ee
\end{subequations}
where we have used \eqref{eq:trace_lemma_0} for the first and last equalities and \eqref{eq:commutation_rep} for the third equalities in all three equations. 

Due to the isomorphism with $\UQ$, one can define the moduli algebra for the case $|q|=1, q^p\neq 1$ in terms of the central elements, instead of the characters, of the loop algebra $\cL_\nu$. We define the representations of the central element $c^{I_\nu}_\nu$ as
\be
\rho^{K_\nu}(c^{I_\nu}_\nu)=\f{s_{I_\nu K_\nu}}{d_{K_\nu}}\Id_{K_\nu}\,, \quad \text{ with }
s_{IJ}:=(\tr_q^I\otimes \tr_q^J)(R'R)\,.
\label{eq:rep_c_qgen}
\ee

The moduli algebra can then be defined in the following way. 

\medskip

\noindent {\it Moduli algebra $\fM^{K_\nu}_{0,4}$ for $|q|=1,q^p\neq 1$. } 
Let $K_1,K_2,K_3,K_4$ be a set of 4 irreducible representations of $\UQ$ with $|q|=1,q^p\neq 1$ assigned to the four punctures of $\Sfour$. The moduli algebra $\fM^{K_\nu}_{0,4}$ is defined as the quotient of the invariant algebra $\cA_{0,4}$ by the relations $\rho^0(c^I_0)\,,\rho^{K_1}\lb c^{I_1}_1\rb\,,\rho^{K_2}\lb c^{I_2}_2\rb\,,\rho^{K_3}\lb c^{I_3}_3\rb\,,\rho^{K_4}\lb c^{I_4}_4\rb$, \ie \cite{Alekseev:1995rn,Alekseev:1993gh}
\be
\fM^{K_\nu}_{0,4}= \cA_{0,4}/\left\{\rho^0(c^I_0)\,,\rho^{K_1}\lb c^{I_1}_1\rb\,,\rho^{K_2}\lb c^{I_2}_2\rb\,,\rho^{K_3}\lb c^{I_3}_3\rb\,,\rho^{K_4}\lb c^{I_4}_4\rb\right\}_{I,I_{\nu}}\,,
\label{eq:def_M0}
\ee
where $I,I_1,I_2,I_3,I_4$ run through all irreducible representation of $\UQ$. 

From the construction above, it is reasonable to view the invariant algebra $\cA_{0,4}$ as the quantization of $\SU(2)$ flat connection space on $\Sfour$ and the moduli algebra $\fM^{K_\nu}_{0,4}$ as the quantization of the moduli space $\cM_\Flat(\Sfour,\SU(2))$ of $\SU(2)$ flat connections on $\Sfour$ with fixed eigenvalues of holonomies around all punctures. In the next subsection, we will construct the (irreducible) representations of the moduli algebra which allows us to build the Hilbert space of intertwiners. 

\subsection{Representation of the moduli algebra and the physical Hilbert space}
\label{sec:physical_Hilbert_space}

Having defined the moduli algebra $\fM^{K_\nu}_{0,4}$, we now proceed to construct its representation theory. The ``representation labels'' $K_1,\cdots,K_4$ defining the moduli algebra are representations of holonomies around punctures of $\Sfour$, which are fixed when defining the generators of $\cL_\nu,\cL_{0,4}$ and $\fM^{K_\nu}_{0,4}$, and do not {\it in priori} denote the representation of the moduli algebra itself. 
We will nevertheless find that only the irreducible representation space with the same labels can carry the representation of the moduli algebra. 

As the moduli algebra has a $*$-structure inherited from its generators, one can naturally construct the $*$-representations of it, which compose a Hilbert space. We denote such Hilbert space as the {\it physical Hilbert space} in the sense that states therein are invariant under the quantum gauge transformations. 
We will start by constructing the representation of the loop algebra $\cL_{0,1}$. Recall that the subalgebra $\cL_\nu$ of the graph algebra $\cL_{0,4}$ defined on each puncture $\nu$ is isomorphic to $\cL_{0,1}$. Then the representation of $\cL_{0,4}$ can be constructed by the tensor product of four representations of $\cL_{0,1}$ \cite{Alekseev:1995rn}. 
In this section, we will mainly focus on the case of $q^p=1$. For $|q|=1,q^p\neq1$ case, the results turn out to be the same as the $q^p=1$ case hence we skip the full derivations. We refer to \cite{Alekseev:1994au,Alekseev:1993gh} and Appendix \ref{app:reps_q_generic} for details. 

When $q$ is a root-of-unity, $\UQ$ is a finite-dimensional algebra whose total number of irreducible representations is finite. However, it has no semi-simplicity property, which brings obstacles in constructing the representation theory for the moduli algebra, which requires semi-simplicity. In order to restore the semi-simplicity, one considers its truncated algebra $\TUQ$ which has a simple representation theory despite a complex underlying algebraic structure. 
The representation theory of moduli algebra with quantum symmetry $\TUQ$ is constructed by applying the {\it substitution rule} (see \eqref{eq:substitution_rule} below) \cite{Alekseev:1994au,Alekseev:1995rn} on the representation theory of moduli algebra constructed below. 
In the following, we construct the representation theory of quantum algebra with quantum symmetry $\TUQ$, which can be easily generalized for semi-simple Hopf algebra with a finite number of irreducible representations \cite{Alekseev:1995rn}.

\medskip
\noindent {\it Representation of $\cL_{0,1}$. }
The loop algebra $\cL_{0,1}$ has a series of representations $\{D^{I}\}$ realized in the representation spaces $V^{I}$ of $\UQ$, $I\in \N/2$. The generator $\bM_\ell^J\in\End(V^J)\otimes \cL_{0,1}$ can be expressed in the representation $D^I$ as \cite{Alekseev:1995rn}

\be
D^{I}(\bM_\ell^{J}):=\kappa^{-1}_{J}(R'R)^{JI}\,.
\label{eq:rep_loop_algebra}
\ee

This representation preserves the commutation relation \eqref{eq:commute_xi_M} due to the intertwining property of $R'R$\footnotemark{}. 
\footnotetext{
Take the $J$ representation of \eqref{eq:commute_xi_M}, one gets $D^J(\mu^I(\xi)\bM_\ell^I)=(\rho^I\otimes \rho^J)\Delta(\xi)(R'R)^{IJ}=(R'R)^{IJ}(\rho^I\otimes \rho^J)\Delta(\xi)=D^J(\bM_\ell^I\mu^I(\xi))$ due to the definition \eqref{eq:def_mu_xi} of $\mu^I(\xi)$ and the property $\Delta(\cdot)(R'R)=(R'R)\Delta(\cdot)$ of $R'R$. See the last equality in \eqref{eq:def_v}. This proves that the representation \eqref{eq:rep_loop_algebra} of the loop algebra generator preserves the commutation relation \eqref{eq:commute_xi_M}.}
It has been proven in \cite{Alekseev:1995rn} that the set of representations $\{D^{I}\}$ defined above is faithful and, in the absence of truncation, each $D^{I}$ is an irreducible representation. Here faithfulness means that the dimension of loop algebra and the dimension of space of representation matrices under ${D^{I}}$ are the same.  

The representation of the central element $c^I$ (defined in \eqref{eq:c_def}) and the projector $\chi^I$ (defined in \eqref{eq:character_def}) of the loop algebra is given by\footnote{Notice that $c^J$ is a central element, $D^I(c^J)$ is proportional to $\Id_I$ by Schur's lemma. The proportionality is obtained by taking the trace of the representation. }

\be
 D^I(c^{J})=\frac{s_{JI}}{d_I} \Id_{I}\,,\quad
 D^{I}(\chi^J)=\delta_{IJ}\,,\quad
 \text{ where }\, 
 s_{JI}=(\tr_q^J\otimes \tr_q^I)(R'R)\,.
\label{eq:rep_central}
\ee

\medskip
\noindent {\it Representation of $\cL_{0,4}$. }
The representation of each sub-algebra $\cL_\nu$ can be realized in the space $V^{I_{\nu}}$ of $\UQ$, where the label $I_\nu$ is assigned at $\nu$-th puncture.  
Denote the representation of $\cL_{0,4}$ as $D^{{I_1},{I_2},{I_3},{I_4}}$. It is realized in the tensor product space $\cT(I_{1},I_{2},I_{3},I_{4})$ which is decomposable due to semi-simplicity:
\be
\cT (I_{1},I_{2},I_{3},I_{4}):=V^{I_1}\otimes V^{I_2} \otimes V^{I_3} \otimes V^{I_4} 
= \bigoplus_{I}V^{I}\otimes W^{I}(I_{1},I_{2},I_{3},I_{4}) \,.
\label{eq:T1234}
\ee
Here, $I$ runs through all admissible irreducible representations of $\UQ$ and $W^{I}(I_{1},I_{2},I_{3},I_{4})$ is the multiplicity space whose dimension is $\sum_{K_1K_2}N^{I_1I_2}_{K_1}N^{K_1I_3}_{K_2}N^{K_2I_4}_{I}$ with $N_I^{JK}=1$ if $I,J,K$ satisfy the triangular inequality $|I-J|\leq K\leq I+J$ and $N_K^{IJ}=0$ otherwise. 

 The representation of the generators of $\cL_{0,4}$ can be expressed as \cite{Alekseev:1995rn}
 \be
 D^{{I_1},{I_2},{I_3},{I_4}}(\bM_{\nu}^J)=(\rho^J\otimes\iota_{\nu})(R')D_{\nu}^{I_\nu}(\bM_{
 \ell_\nu}^J)(\rho^J\otimes\iota_{\nu})((R')^{-1})\in\End(V^J)\otimes\lb\bigotimes\limits_{\mu=1}^4\End(V^{I_\mu})\rb\,,\quad\nu=1,2,3,4\,.
 \label{eq:rep_graph_algebra}
\ee
The notations above mean:
\be
\begin{split}
D_{\nu}^{I_\nu}(\bM_{\nu}^J)&=\Id_{I_1}\otimes...\otimes \Id_{I_{\nu-1}}\otimes D^{I_\nu}(\bM_{\nu}^J)\otimes \Id_{I_{\nu+1}}\otimes...\otimes \Id_{I_4}\,,\\
\iota_{\nu}(\xi)&=(\rho^{I_1}\boxtimes...\boxtimes \rho^{I_{\nu-1}})(\xi)\otimes \Id_{I_{\nu}}\otimes...\Id_{I_4}\,,
\quad \forall\, \xi\in\UQ\,.
\end{split}
\ee
where $\rho^{I_1}\boxtimes...\boxtimes\rho^{I_m}(\xi)=(\rho^I_1\otimes...\otimes\rho^{I_m})\Delta^{(m-1)}(\xi)$. 
Explicitly,
\be
D^{I_1,I_2,I_3,I_4}(\bM_{\nu}^{J})=\kappa_{J}^{-1} \lb R'_{12}R'_{13}\cdots R'_{1\nu} R'_{1,\nu+1}R_{1,\nu+1}R^{'\,-1}_{1\nu}\cdots R^{'\,-1}_{13}R^{'\,-1}_{12}\rb^{J I_1\cdots I_\nu}\otimes e^{I_{\nu+1}}\otimes...\otimes e^{I_{4}}\,,
\label{eq:rep_M_explicit}
\ee
In order to define the representation for $(\bM_{\nu}^J)^*$, it is necessary to 
extend the representation $D^{{I_1},{I_2},{I_3},{I_4}}$ to quantum symmetry $\UQ$ by 

\be
 D^{{I_1},{I_2},{I_3},{I_4}}(\xi)=\iota_{4+1}(\xi)\equiv(\rho^{I_1}\boxtimes \rho^{I_2} \boxtimes \rho^{I_3} \boxtimes \rho^{I_4})(\xi)\,,\quad \forall\, \xi\in\UQ\,.
 \label{eq:rep_QU}
\ee
Same as the representations $\{D^I\}$ of the $\cL_{0,1}$, the representations $\{D^{{I_1},{I_2},{I_3},{I_4}}\}$ are also faithful and, in the absence of truncation, irreducible.
However, different from the fact that $D^{I}$ is a *-representation, the $*$-property of $D^{{I_1},{I_2},{I_3},{I_4}}$, if exists, is {\it not} compatible with the standard inner product of vector space $\cT(I_1,I_2,I_3,I_4)$. 
This is because the tensor product of two unitary representations is in general not unitary. 
Nevertheless, we can define an inner product that preserves the unitarity for the tensor product of two unitary representations. Such an inner product was first proposed in \cite{Durhuus:1991dk} (see also \cite{Alekseev:1995rn}) to be 
\be
\la x,y\ra_{R}:=\la x,(\rho\otimes\rho')(R\Delta(\kappa)(\kappa^{-1}\otimes\kappa^{-1}))y\ra\,,\quad
\forall\,x,y\in V\otimes V'
\label{eq:Scalar_reps}
\ee
where $\rho,\rho'$ are two $*$-representations of $\UQ$ on Hilbert spaces $V$ and $V'$, and $\la\cdot,\cdot\ra$ denotes the standard scalar product on $V\otimes V'$ \st $\la v_1\otimes v_1',v_2\otimes v_2'\ra=\la v_1,v_2\ra\la v_1',v_2'\ra,\,\forall v_1,v_2\in V,\,\forall v'_1, v'_2\in V'$. 

The scalar product $\la x,y\ra_R$ defined as such is positive definite, with respect to which $D^{{I_1},...,{I_4}}$ are $*$-representations.
To show that the inner product \eqref{eq:Scalar_reps} is preserved under the same unitary transformation, we consider $x,y\in V^I\otimes V^J$ and $\xi\in\UQ$. Then\footnotemark{}
\be\begin{split}
\la \rho^I\boxtimes\rho^J(\xi)x,y\ra_{R}&=
q^{I(I+1)+J(J+1)}\la \rho^I\boxtimes\rho^J(\xi)x,R^{IJ}(\rho^I\otimes\rho^J)\Delta(\kappa) y\ra \\
&=q^{I(I+1)+J(J+1)} \la x, \lb(\rho^I\otimes\rho^J)\Delta(\xi)\rb^* R^{IJ}(\rho^I\otimes\rho^J)\Delta(\kappa)y\ra\\
&=q^{I(I+1)+J(J+1)}\la x,(\rho^I\otimes\rho^J)\Delta'(\xi^*)R^{IJ} (\rho^I\otimes\rho^J)\Delta(\kappa)y\ra\\
&=q^{I(I+1)+J(J+1)}\la x, R^{IJ}(\rho^I\otimes\rho^J)\Delta(\kappa\xi^*)y\ra \\
&= \la x, \rho^I\boxtimes\rho^J(\xi^*) y\ra_R
\equiv \la x, \lb \rho^I\boxtimes\rho^J(\xi)\rb^* y\ra_R\,,
\end{split}\ee
where we have used the representation $\kappa^I=q^{-I(I+1)}$ of $\kappa$ (see \eqref{eq:v_kappa_J}), which is a scalar hence can be extracted out of the inner product, on the first line, the identity $\Delta'(\xi)R=R\Delta(\xi)\,,\forall\,\xi\in\UQ$ to obtain the second line, and the fact that $\kappa$ is a central element, \ie $\kappa\xi=\xi\kappa\,,\forall\,\xi\in\UQ$, to obtain the third line. 
\footnotetext{
The inner product $\la\cdot,\cdot \ra_R$ is preserved under unitary transformation even without the insertion of $\Delta(\kappa)(\kappa^{-1}\otimes\kappa^{-1})$ in \eqref{eq:Scalar_reps}. Such an insertion is there for a normalization purpose. In particular, given $x=(R')^{IJ}C[IJ|K]^*z\,,\, y=(R')^{IJ}C[IJ|K]^*z' \in V^I\otimes V^J=\oplus_K V^K$ such that $z,z'\in V^K$. The inner product defined as in \eqref{eq:Scalar_reps} realizes a simple decomposition $\la x,y \ra_R=\oplus_K \la z,z' \ra$ with no extra factors \cite{Durhuus:1991dk}.
}

\medskip
\noindent {\it Representation of $\cA_{0,4}$. }
 The $*$-algebra $\cA_{0,4}$ of gauge invariant elements is the subalgebra of $\cL_{0,4}$. Therefore, the representations $D^{I_{1},I_{2},I_{3},I_{4}} $ can be restricted to $\cA_{0,4}$ directly. 
 Since $\cA_{0,4}$ is an invariant algebra, its representation restricted to $V^I$ in the decomposition \eqref{eq:T1234} must be carried in the multiplicity space $W^I(I_1,I_2,I_3,I_4)$. 
 On the other hand, the dimension of $\cA_{0,4}$, counted from the number of its generators \eqref{eq:generator_A04}, is $\sum_{I,I_1,I_2,I_3,I_4}\dim \lb C[I_1I_2I_3I_4|I] \rb^2$, which matches exactly the dimension of the representation space carried by $\bigoplus_{I,I_1,I_2,I_3,I_4}W^I(I_1,I_2,I_3,I_4)$.  
 The faithfulness of the representations of $\cA_{0,4}$, inherited from the faithfulness of the representation of $\cL_{0,4}$, guarantees that the full multiplicity space $W^{I}(I_{1},I_{2},I_{3},I_{4})$ for each admissible set $(I,I_1,I_2,I_3,I_4)$ carries an irreducible $*$-representation of $\cA_{0,4}$.

 \medskip
\noindent {\it Representation of $\fM^{K_\nu}_{0,4}$. }
 Recall that the moduli algebra \eqref{eq:def_moduli_algebra} is defined as $\chi^{0}_{0}\prod_{\nu}^{4}\chi^{K_\nu}_{\nu}\cA_{0,4}$. 
 Given the representation of the invariant algebra above, the only extra ingredients to obtain the representation of the moduli algebra are the representation of the characters $\chi^0_0$ and $\chi^{K_\nu}_\nu\,,\,\nu=1,\cdots,4$. We first consider the latter one. Using the same definition \eqref{eq:rep_graph_algebra} of the representation, it is easy to compute 
 \be
D^{I_1,I_2,I_3,I_4}\lb \chi_\nu^{K_\nu} \rb =\delta_{I_\nu K_\nu} \Id_{I_1}\otimes \cdots \otimes \Id_{I_4}\,,\quad
\forall \, \nu=1,\cdots,4\,.
 \ee
Now that $\prod_{\nu=1}^{4}\delta_{I_\nu K_\nu}$ is imposed, we consider the following representation of $\chi^0_0$:
\be
D^{K_1,K_2,K_3,K_4}(\chi_0^0) 
=\cN^2\sum_{K} d_K \kappa_K^4 \tr_q^K \left[ D^{K_1,K_2,K_3,K_4}\lb \bM_{4}^K \bM_{3}^K \bM_{2}^K \bM_{1}^K \rb  \right]\,.
\ee
Using the definition \eqref{eq:rep_graph_algebra} (or directly the explicit expression \eqref{eq:rep_M_explicit}), one gets 
\be
D^{K_1,K_2,K_3,K_4}(\chi_0^0) 
= \cN^2\sum_K d_K \tr_q^K \left[\Delta^{(3)}\lb  R'R\rb\right]^{KK_1K_2K_3K_4} 
\equiv\cN^2 \sum_K d_K \tr_q^K\left[ (\rho^K\otimes \iota_5)(R'R) \right]\,,
\ee
where $\iota_5=\iota_{4+1}$ is defined in \eqref{eq:rep_QU}. 
Since $\fM_{0,4}$ is a subalgebra of $\cA_{0,4}$, its $J$ representation lives in the multiplicity space $W^J(K_1,K_2,K_3,K_4)$. For this reason, we are only interested in the representation restricted to $W^J(K_1,K_2,K_3,K_4)$, which we denote as $W^J$ for conciseness:
\be
D^{K_1,K_2,K_3,K_4}(\chi_0^0)|_{W^J} 
=\cN^2 \sum_K d_K \tr_q^K \lb\rho^K\otimes\rho^J \rb (R'R) \equiv \cN^2\sum_K d_K \f{S_{KJ}}{\cN \kappa_K d_K} \Id_{W^J} 
= \sum_K \f{S_{0K}S_{KJ}}{\kappa_J d_J}\Id_{W^J} =\delta_{0J} \Id_{W^0}\,,
\label{eq:chi0_rep}
\ee
where we have used the representation \eqref{eq:rep_central} of the central element and the property \eqref{eq:S-property} of the $S$-matrix.  

We finally conclude that, given four punctures labeled by representations $K_1,\cdots,K_4$ respectively, there exists only one irreducible $*$-representation space of the moduli algebra $\fM_{0,4}^{K_\nu}$, which is $W^0(K_1,K_2,K_3,K_4)$. 
Because of the faithfulness of the representations theory, other representations cannot exist.

\medskip

Up to now, we have dealt with the representation theory of $\cL_{0,1}$, $\cL_{0,4}$, $\cA_{0,4}$ and $\fM_{0,4}^{K_\nu}$ for $\UQ$ with $q$ a root-of-unity assuming the semi-simplicity of the quasi Hopf algebra. However, semi-simplicity is there only for the truncated algebra $\TUQ$ for $q^p=1$. The representation theory of $\TUQ$ is achieved from the one described above followed by applying the certain  substitution rule\footnotemark{} \cite{Alekseev:1994au,Alekseev:1995rn}:
\be
\begin{split}
&{C}^{a}[IJ|K] \rightarrow  \widetilde{C}^{a}[IJ|K] := C[IJ|K](\varphi^{-1})^{IJ}\,, \qquad
{C}^{a}[IJ|K]^{*}\rightarrow  \widetilde{C}^{a}[IJ|K]^{*} := (\varphi_{213}')^{IJ}C[IJ|K]^{*}\,, \\
&R^{IJ} \rightarrow \mathcal{R}^{IJ} := (\rho^{I}\otimes \rho^{J}\otimes \Id)(\varphi_{213}R\varphi^{-1})\,,\qquad
 d_I \rightarrow \tilde{d}_I := \tr^I(\rho^I(gS(\beta)\alpha))\,,\qquad
 R^I  \rightarrow  R^I \equiv(\rho^I\otimes \Id)R\,,\\
 &\tr_{q}^{I}(X)\rightarrow \widetilde{\tr}_q^I(X):=\tr^{I}(m^{I}Xw^{I}g^{I})\,,\quad
 \text{with }\; m^{I}=\rho^{I}(S(\phi^{(1)})\alpha \phi^{(2)})\phi^{(3)}\,,\quad  w^{I}=\rho^{I}(\varphi^{(2)}S^{-1}(\varphi^{(1)}\beta))\varphi^{(3)}\,,
\end{split}  
\label{eq:substitution_rule}
\ee
where $\varphi,\phi\equiv\varphi^{-1}\in\TUQ\otimes\TUQ\otimes\TUQ$ and $\alpha,\beta\in\TUQ$ are the defining elements for $\TUQ$ (see Section \ref{subsec:TUQ}). 
\footnotetext{
Due to some subtlety, the substitution rule \eqref{eq:substitution_rule} can not be applied to formulas containing the grouplike element $g$ (see \eqref{eq:g_properties}) except for computing the quantum trace $\tr_q$, which is the only places in this paper where we encounter $g$. We refer to \cite{Alekseev:1994au} for dealing with this subtlety.
}
Upon the imposition of the substitution rule, all the formulas of representation theory above in this section work for $\TUQ$. 
Due to the theorem in \cite{MACK1992185}, the physical representations of $\TUQ$ and $\UQ$ with $q^p=1$ are the same, hence the representation theory of $\TUQ$ constructed above can also represent the physical representation of $\UQ$ with $q^p=1$. 

\medskip

Let us now briefly go through the case for $\UQ$ with $|q|=1,q^p\neq1$ (see Appendix \ref{app:reps_q_generic} for some details). 
Since the loop algebra $\cL_{0,1}$ is isomorphic to $\UQ$ in this case, its irreducible representation is the same as the one of $\UQ$, in particular, we choose the $*$-irreducible representations that preserve the $*$-structure of $\UQ$ for $|q|=1,q^p\neq 1$ labeled by $J\in \N/2$. The representation of the generator $\bX^I\equiv \kappa_I^{-1}(R'R)^I$ of $\cL_{0,1}$ in the space $V^J$ can be directly calculated to be
\be
\rho^{J}(\bX^I)=\kappa_I^{-1} (R'R)^{IJ}\,, 
\label{eq:rep_X}
\ee
which matches the definition \eqref{eq:rep_loop_algebra} for the case $q^p=1$. It turns out naturally that the representation theories of $\cL_{0,1}\,,\,\cL_{0,4}\,,\,\cA_{0,4}$ and $\fM^{K_\nu}_{0,4}$ for case $q^p=1$ discussed above before applying the substitution rule are the same for the case $|q|=1, q^p\neq 1$. In particular, \eqref{eq:rep_M_explicit} can be directly used to express the representation of generators for $\cL_{0,4}$, and the representation spaces for $\cA_{0,4}$ and $\fM^{K_\nu}_{0,4}$ are $\bigoplus_{I,I_1,I_2,I_3,I_4}W^I(I_1,I_2,I_3,I_4)$ and $W^0(K_1,K_2,K_3,K_4)$ respectively\footnotemark{}.
\footnotetext{
Indeed, since the definitions of $\cA_{0,4}$ and $\fM^{K_\nu}_{0,4}$ vary for the cases $q^p=1$ and $|q|=1,q^p\neq 1$ as described in Section \ref{subsec:moduli_algebra}, their representations are also defined differently. 
Nevertheless, the resulting representation spaces for both cases are the same. We refer interested readers to \cite{Alekseev:1994au} for the explicit construction of the representations for case $|q|=1,q^p\neq 1$. See also Appendix \ref{app:reps_q_generic}.
}

\medskip

The analysis for $|q|=1, q^p\neq 1$ can be directly generalized to $|q|\neq 1$ and the same results follow except for the ones related to $*$-structure (hence we are dealing with $\UQsl$ instead of $\UQ$). We therefore conclude that the representation space of the $\UQsl$ moduli algebra on $\Sfour$ for with any $q\in \bC$ is $W^0(K_1,K_2,K_3,K_4)$, given that the four punctures carry the irreducible representations $K_1,K_2,K_3,K_4$ respectively. 
$W^0(K_1,K_2,K_3,K_4)$ is nothing but the space of intertwiner $C[K_1K_2K_3K_4|0]$. 
In the next section, we will see that it is indeed the solution space of the quantized version of the closure condition hence it can be naturally understood as the quantization of $\cM_\Flat(\Sfour,\SU(2))$ given fixed eigenvalues of the holonomies, each around one puncture of $\Sfour$.

\section{Quantum closure condition and the solution space -- the intertwiner space}
\label{sec:quantum_closure}

We have seen in Section \ref{sec:moduli_space} that, classically, the moduli space $\cM_\Flat(\Sfour,\SU(2))$ of $\SU(2)$ flat connection on $\Sfour$ is the solution space of the closure condition $M_{4}M_{3}M_{2}M_{1}=\Id$ where the $M_{\nu}\in\SU(2)$ is the holonomy around the $\nu$-th ($\nu=1,\cdots,4$) puncture of $\Sfour$. 
On the other hand, the curved Minkowski theorem (Theorem \ref{theorem:Minkowski}) states that a set of four holonomies $\{M_{\nu}\}$ satisfying the closure condition uniquely maps to a homogeneously curved tetrahedron whose faces $\{\ell_\nu\}$ carry areas $\{a_{\nu}\}$ corresponding to the eigenvalues of $\{M_{\nu}\}$. 
Here, each face $\ell_\nu$ is isomorphic to the loop around the $\nu$-th puncture of $\Sfour$. 
Inspired by LQG, the areas should be quantized to operators with discrete spectra, each characterized by an irreducible representation of the underlying algebra, which is $\UQ$ in our case. 

In this spirit, given fixed $\{a_{\nu}\}$, the moduli space is quantized to the moduli algebra $\fM_{0,4}^{K_\nu}$ where each representation label $K_\nu\in\N/2$ encodes the information of (quantum) area of face $\ell_\nu$. 
It is then natural to expect that the representation space of such moduli algebra can be viewed as the solution space of some quantum version of the closure condition. We illustrate in this section that this is the case. 
More precisely, we show that the quantum closure condition is naturally read out from the construction of the moduli algebra as described in Section \ref{sec:quantum_moduli}, which realizes the left downward arrow of \eqref{eq:road_map} hence closes the loop.
Due to the different definitions of moduli algebra for the $|q|=1,q^p\neq1$ and $q^p=1$ cases, we will need to define the quantum closure conditions for the two cases separately. This section and the next contribute the core of this paper. 

\begin{definition}[Quantum closure condition for $\UQ$]
\label{def:quantum_closure}
For case $q^p=1$, given the quantum holonomy $\bM_0^I$ represented in $V^I$ representation defined in \eqref{eq:bM_0}, the quantum closure condition is 
\be
\bM_0^I\equiv
\kappa_I^3 \bM_{4}^I \bM_{3}^I \bM_{2}^I \bM_{1}^I
=\kappa_I^{-1} e^I\,,
\label{eq:quantum_closure_1}
\ee
where $\kappa_I\equiv q^{-I(I+1)}$ is a central element and $e^I=\rho^I(e)$ is the $I$ representation of the identity element of $\UQ$.   

For case $|q|=1,q^p\neq 1$, given $\cL_{0,4}$ generators $\bX^I_0\equiv \kappa_I^{-1}\bX^I_{0,+}\lb\bX^I_{0,-}\rb^{-1}$ defined in \eqref{eq:X0}, the quantum closure condition is defined as
\be
\bX^I_{0}\equiv \kappa^3_I \bX_{4}^I\bX_{3}^I\bX_{2}^I\bX_{1}^I
=\kappa_I^{-1}e^I\,.
\label{eq:quantum_closure_2}
\ee
\end{definition}
It is easy to see that the above expressions of the quantum closure conditions take a similar form as the classical closure condition \eqref{eq:closure_SU2}. \eqref{eq:quantum_closure_1} and \eqref{eq:quantum_closure_2} are the quantized version of \eqref{eq:closure_SU2} (restricted to $V^I$) under the following natural quantization map
\be
\Id_{\SU(2)}\rightarrow \kappa_I^{-1}e^I\,,\quad
M_{\nu} \rightarrow \left\{\ba{l}
\bM_{\nu}^I \,,\quad \text{for }\, q^p=1\\[0.15cm]
\bX_{\nu}^I\,,\quad \text{for }\, |q|=1,q^p\neq 1
\ea\right.,\quad \forall \, \nu=1,\cdots,4\,.
\ee

\begin{theorem}[Solution to the quantum closure condition]
Given the representation labels $K_1,K_2,K_3,K_4$ assigned to the punctures of $\Sfour$, the representation space $W^0(K_1,K_2,K_3,K_4)$ of the moduli algebra $\fM_{0,4}^{K_\nu}$ for $\UQ$, or the intertwiner space, is and is the only solution space to the quantum closure condition \eqref{eq:quantum_closure_1} or \eqref{eq:quantum_closure_2} in the sense that 
\begin{align}
\bM_0^I\lb  W^0(K_1,K_2,K_3,K_4) \rb &= \kappa^{-1}_I e^I \lb  W^0(K_1,K_2,K_3,K_4)\rb \,,\quad \text{for }\, q^p=1\,,\\
\bX_0^I\lb  W^0(K_1,K_2,K_3,K_4) \rb &= \kappa^{-1}_I e^I \lb  W^0(K_1,K_2,K_3,K_4)\rb \,,\quad \text{for }\, |q|=1,q^p\neq 1\,.
\end{align}
\end{theorem}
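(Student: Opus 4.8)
The plan is to work inside the faithful representation $D^{K_1,K_2,K_3,K_4}$ on the decomposed space $\cT(K_1,K_2,K_3,K_4)=\bigoplus_J V^J\otimes W^J$ of \eqref{eq:T1234} and read off the action of the big-loop holonomy $\bM_0^I$ summand by summand. First I would compute its representation: combining the explicit generator formula \eqref{eq:rep_M_explicit} with the definition $\bM_0^I=\kappa_I^3\bM_4^I\bM_3^I\bM_2^I\bM_1^I$ of \eqref{eq:bM_0}, the same manipulation that produced \eqref{eq:chi0_rep} yields $D^{K_1,K_2,K_3,K_4}(\bM_0^I)=\kappa_I^{-1}(\rho^I\otimes\iota_5)(R'R)$, which on the $J$-th summand restricts to $\kappa_I^{-1}(\rho^I\otimes\rho^J)(R'R)\otimes\Id_{W^J}$ with the multiplicity space inert. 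For the forward direction I would set $J=0$: since $\rho^0=\epsilon$ and $(\Id\otimes\epsilon)(R'R)=(\Id\otimes\epsilon)(R')(\Id\otimes\epsilon)(R)=e$, the monodromy trivializes to $(\rho^I\otimes\rho^0)(R'R)=\Id_{V^I}$, so that $\bM_0^I$ acts on $W^0$ exactly as $\kappa_I^{-1}e^I$. This is the quantum closure condition \eqref{eq:quantum_closure_1}, establishing that $W^0$ is a solution.

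For uniqueness I would take the quantum trace $\tr_q^I$ over $V^I$ of the closure equation. On any subspace where \eqref{eq:quantum_closure_1} holds, the central element of \eqref{eq:c_def} is forced to the value $c_0^I=\kappa_I\tr_q^I(\bM_0^I)=\kappa_I\,\kappa_I^{-1}\tr_q^I(e^I)=d_I$ for every $I$. Inserting $c_0^I=d_I$ into the character \eqref{eq:character_def}, which for $\nu=0$ reads $\chi_0^0=\cN^2\sum_I d_I c_0^I$, gives $\chi_0^0=\cN^2\sum_I d_I^2=1$ on the solution space. But \eqref{eq:chi0_rep} already shows $D^{K_1,K_2,K_3,K_4}(\chi_0^0)|_{W^J}=\delta_{0J}\Id$, i.e. $\chi_0^0$ is the orthogonal projector onto $W^0$; a subspace on which it acts as the identity therefore lies inside $W^0$. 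Together with the forward direction this pins the full solution space to exactly $W^0$.

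The case $|q|=1,q^p\neq1$ runs in parallel with $\bM_0^I$ replaced by $\bX_0^I$ of \eqref{eq:X0}, the forward computation being verbatim the same. Here the $S$-matrix, and hence $\chi_0^0$, are unavailable, so for uniqueness I would use the central element directly: the trace of \eqref{eq:quantum_closure_2} again forces $c_0^I=d_I$ for all $I$, while \eqref{eq:rep_c_qgen} gives $\rho^{J}(c_0^I)=s_{IJ}/d_J$ on the sector $W^J$. The requirement $s_{IJ}=d_I d_J$ for every $I$ then singles out the vacuum label $J=0$ (for instance $s_{I0}=\tr_q^I((\Id\otimes\epsilon)(R'R))=d_I$ is automatic, whereas no other $J$ reproduces all the products $d_I d_J$), so again the solution space is $W^0$.

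The step I expect to be delicate is uniqueness. A bare eigenvalue estimate of the monodromy $\kappa_I^2\kappa_J^2\kappa_L^{-2}$ on the components $V^L\subset V^I\otimes V^J$ can look trivial on higher sectors --- most treacherously the top spin $J=(p-2)/2$ at root of unity, where $V^I\otimes V^J$ fuses to a single object and the naive scalar collapses to one --- which would spuriously suggest extra solutions. What actually separates $W^0$ from every other $W^J$ is not the bare scalar but the full quantum-trace data, equivalently the orthogonality $\sum_K S_{0K}S_{KJ}=\delta_{0J}$ of \eqref{eq:S-property}. Making this rigorous at root of unity requires that the trace, $R$-matrix and quantum dimensions all be the substituted, quasi-Hopf ones prescribed by \eqref{eq:substitution_rule}; rather than recomputing these I would lean on \eqref{eq:chi0_rep}, which already incorporates the substitution and hence carries the correct distinguishing signs.
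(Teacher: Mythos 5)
Your proposal is correct in outline and, for the generic-$q$ case, essentially coincides with the paper's own argument: the paper likewise computes $\rho^{K_1\cdots K_4}(\bX_0^I)=\kappa_I^{-1}(\rho^I\otimes\iota_5)(R'R)$, restricts to each $W^J$, and observes that only $J=0$ trivializes the monodromy. Where you genuinely diverge is the uniqueness step. The paper's uniqueness for $q^p=1$ rests on the flatness identity $\chi_0^0\bM_0^I=\chi_0^0\kappa_I^{-1}e^I$ of \eqref{eq:flatness} (Lemmas \ref{lemma:flatness_gen} and \ref{lemma:flatness_trun}) combined with $D(\chi_0^0)|_{W^J}=\delta_{0J}$. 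Your alternative --- take $\tr_q^I$ of the closure equation to force $c_0^I=d_I$ on any solution subspace, deduce $\chi_0^0=\cN^2\sum_Id_I^2=1$ there, and invoke the projector property of $\chi_0^0$ --- is sound and arguably more transparent, since the trace over $\End(V^I)$ commutes with the action on $\cT(K_1,\dots,K_4)$. Likewise, your central-element argument for $|q|=1,q^p\neq1$ (that $s_{IJ}/d_J=d_I$ for all $I$ forces $J=0$, because $[(2I+1)(2J+1)]_q/[2J+1]_q\neq[2I+1]_q$ for $J\neq0$ when $\theta/\pi$ is irrational) supplies a justification for a step the paper merely asserts.

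The one place your argument is thinner than the paper's is the forward direction at $q^p=1$. You obtain $D^{K_1\cdots K_4}(\bM_0^I)=\kappa_I^{-1}(\rho^I\otimes\iota_5)(R'R)$ by ``the same manipulation that produced \eqref{eq:chi0_rep}'', but \eqref{eq:chi0_rep} is a statement under the quantum trace, where the associator insertions of the truncated algebra cycle away. The untraced operator identity is precisely what is delicate for $\TUQ$: after the substitution rule \eqref{eq:substitution_rule} the telescoping of $R'_{12}\cdots R'_{1,\nu+1}R_{1,\nu+1}\cdots R'^{-1}_{12}$ into $\Delta^{(3)}(R'R)$ picks up conjugations by $\varphi$, and the quasi-inverses compose only to the projectors $\Delta(e)$, $\Delta'(e)$ rather than to $e\otimes e$. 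This is exactly the content of Lemma \ref{lemma:flatness_trun}, which the paper proves through CG-map identities and quasi-associativity; leaning on \eqref{eq:chi0_rep} does not substitute for it, since a traced identity cannot recover the operator statement restricted to $V^I\otimes W^0$. Your proof is therefore complete for $|q|=1,q^p\neq1$, but for the root-of-unity case you still need the flatness lemma (or an associator-aware version of your telescoping) to establish that $W^0$ actually solves the closure condition, rather than merely being the only candidate.
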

\begin{proof}

We first consider the case of $q^p=1$. 
The proof relies on the ``flatness condition''\footnote{We remind the readers that the term ``flatness'' here means the flatness of the Chern-Simons connection \cite{Alekseev:1994au} and does not refer to zero curvature as the flatness in gravity context. }
\be
\chi_0^0\bM_0^I=\chi_0^0\kappa_I^{-1}e^I
\label{eq:flatness}
\ee
that is proven in \cite{Alekseev:1994au} for both truncated and before-truncation cases, which we  provide in Appendix \ref{app:details} (Lemmas \ref{lemma:flatness_gen} and \ref{lemma:flatness_trun}). 

The goal is to look for the subspace, say $\cH$, of $\cT(K_1,K_2,K_3,K_4)\equiv \bigoplus_J V^J\otimes W^J(K_1,K_2,K_3,K_4)$ \st 
\be
\bM_0^I\,\cH = \kappa_I^{-1}e^I \cH\,.
\label{eq:eigen_function_H}
\ee
By \eqref{eq:flatness}, it implies
\be
\chi_0^0\bM_0^I \,\cH = \bM_0^I \chi_0^0 \,\cH =\kappa_I^{-1}e^I\chi_0^0\,\cH\,,
\ee
where the first equality comes from the fact that $\chi_0^0$ is a central element. We look for such $\cH$ by testing the action of $\bM_0^I\chi_0^0$ on the multiplication space $W^J(K_1,K_2,K_3,K_4)\equiv W^J$ of each $V^J$ separately, 
which is simply the representation of $\bM_0^I\chi_0^0$ restricted on the space $W^J$. 
Recalling \eqref{eq:chi0_rep} and \eqref{eq:flatness}, we immediately have
\be
\bM_0^I\chi_0^0\, W^J \equiv D^{K_1,K_2,K_3,K_4}(\bM_0^I\chi_0^0)|_{W^J} 
=\delta_{0J}  \kappa_I^{-1}  e^I\Id_{W^0}
\equiv \delta_{0J} \kappa_I^{-1}e^I \chi_0^0\, W^0\,.
\label{eq:Mchi_act_on_Wj}
\ee

This implies that $\cH\subseteq W^0$. 
Since $W^0$ is defined from a projection of $\chi_0^0$ \ie $\chi_0^0\,W^0=W^0$,  
\eqref{eq:Mchi_act_on_Wj} also implies $\bM_0^I\,W^0 =\kappa_I^{-1}e^I \,W^0$, which means the full space $W^0$ is the solution of $\cH$ to \eqref{eq:eigen_function_H}.  

\medskip

We next consider the case of $|q|=1,q^p\neq 1$. In this case, we can no longer define $\chi_0^0$ but the proof becomes simpler due to the isomorphism of $\cL_{0,4}$ and $\UQ^{\otimes 4}$. The representation of $\bX_0^I$ in $\cT(K_1,K_2,K_3,K_4)$ can be directly calculated:
\be
\begin{aligned}
\rho^{K_1,K_2,K_3,K_4}(\bX^I_0)&=\kappa^{-1}_I(\rho^I\otimes (\rho^{K_1}\otimes\rho^{K_2}\otimes\rho^{K_3}\otimes\rho^{K_4}))(R'_{12}R'_{13}R'_{14}R'_{15}R_{15}R_{14}R_{13}R_{12})\\
&=\kappa^{-1}_I(\rho^I\otimes (\rho^{K_1}\otimes\rho^{K_2}\otimes\rho^{K_3}\otimes\rho^{K_4}))(\Delta^{(3)}(R'R))\\
&=\kappa^{-1}_I(\rho^I\otimes \iota_5)(R'R)\,,
\end{aligned}
\ee
whose restriction on each invariant subspace $W^J$ of $\cT(K_1,K_2,K_3,K_4)$ gives 
\be
\rho^{K_1,K_2,K_3,K_4}(\bX^I_0)|_{W^J}=\kappa^{-1}_I(\rho^I\otimes \rho^J)(R'R)\,.
\ee
Only when $J=0$ the action of $X^I_0$ gives us the closure condition. That is,
\be
\bX_0^I W^0 := \rho^{K_1,K_2,K_3,K_4}(\bX^I_0)|_{W^0}=\kappa^{-1}_I(\rho^I\otimes \epsilon)(R'R)=\kappa^{-1}_I e^I|_{W^0}=\kappa^{-1}_I e^I\,W^0\,.
\ee
This completes the proof that $W^0$ is and is the only solution to the quantum closure condition for both cases of $q$. 
\end{proof}

\section{Area operator}
\label{sec:area}

Having found the eigenspace $W^0$ of the quantum closure condition, it remains to show that the representation labels $K_1,K_2,K_3,K_4$ do represent the spectra of quantum areas for completing the loop of \eqref{eq:road_map}. 
Classically, as described in Section \ref{sec:classical_closure}, the area $a_{\nu}$ of a face $\ell_\nu$ of a curved tetrahedron is encoded in the trace of holonomy $M_{\nu}$ around the $\nu$-th puncture written in the fundamental representation \eqref{eq:H_definition}:
\be
\f12\tr(M_{\nu}) =\cos \left(\frac{|\Lambda|}{6}a_{\nu}\right) \,.
\ee
Noting that $M_{\nu}$ is quantized to the quantum holonomy $\bM_{\nu}$, we propose a quantization of $\cos (\f{|\Lambda|}{6}a_{\nu}) $ with a  natural spectrum labeled by $K_\nu\in\N/2$, denoted as ${\rm spec}^{K_\nu}$ (Recall the geometrical interpretation of the holonomies in \eqref{eq:H_definition}): 
\be\begin{split}
\cos \left(\f{|\Lambda|}{6}a_{\nu}\right) 
\quad\longrightarrow\quad 
\cos \lb \frac{|\Lambda|}{6}{\rm spec}^{K_\nu}(\hat{a}_{\nu})\rb 
:=&\f{\kappa_{1/2}}{2}D^{K_\nu}\lb \tr_q^{1/2}\lb \bM_{\nu}^{1/2} \rb \rb\equiv 
\left\{\ba{ll}
\f12 D^{K_\nu}\lb c_{\nu}^{1/2} \rb \,,\quad & \text{ for } q^p=1\\[0.15cm]
\f12 \rho^{K_\nu}\lb c_{\nu}^{1/2} \rb \,,\quad & \text{ for }|q|=1\,, q^p\neq1\\[0.15cm]
\ea\right..
\end{split}
\label{eq:def_area_op}
\ee
We require that the area spectrum only takes value in a finite interval
\be
\frac{|\Lambda|}{6}{\rm spec}^{K_\nu}(\hat{a}_{\nu})\in [0,\pi).
\label{eq:interval}
\ee
It can also be understood by defining the area operator with the inverse function: 
\be
\frac{|\Lambda|}{6}{\rm spec}^{K_\nu}(\hat{a}_{\nu}):=\arccos[\f{\kappa_{1/2}}{2}D^{K_\nu}( \tr_q^{1/2}( \bM_{\nu}^{1/2}))]\,,
\label{eq:spectrum_def}
\ee
where $\arccos(x)\in[0,\pi)$ is single-valued.
The area spectrum takes a simple form, which is given in the following theorem. 

\begin{theorem}[Spectrum of the area operator] \label{area spectrum}
Given $q=e^{i\theta}$ being a phase with $\theta\in(0,2\pi)$ and $K_\nu\in \N/2$ characterizing the area spectrum ${\rm spec}^{K_\nu}(\hat{a}_{\nu})$ defined in \eqref{eq:def_area_op}, ${\rm spec}^{K_\nu}(\hat{a}_{\nu})$ is given by
\be
{\rm spec}^{K_\nu}(\hat{a}_{\nu}) =
\begin{cases}
\frac{6}{|\Lambda|}(2K_\nu+1)\theta,& 0\leq K_\nu<\frac{1}{2}(\frac{\pi}{\theta}-1)\\
\frac{6}{|\Lambda|}\left [2\pi -(2K_\nu+1)\theta\right],& \frac{1}{2}(\frac{\pi}{\theta}-1)\leq  K_\nu < \frac{1}{2}(\frac{2\pi}{\theta}-1)
\end{cases}\,,\quad \nu=1,\cdots,4\,.
\label{eq:area_spectrum}
\ee
If $q=e^{\f{2\pi i}{k+2}}\ (k\in \N)$ is a root-of-unity and $K_\nu\leq \f{k}{2}$, 
\be
 {\rm spec}^{K_\nu}(\hat{a}_{\nu}) =
\begin{cases}
\frac{6}{|\Lambda|}(2K_\nu+1)\frac{2\pi}{k+2},& 0\leq K_\nu<\frac{k}{4}\\
\frac{6}{|\Lambda|}\left [2\pi -(2K_\nu+1)\frac{2\pi}{k+2}\right],& \frac{k}{4}\leq  K_\nu\leq \frac{k}{2}
\end{cases}\,,\quad \nu=1,\cdots,4\,.
\label{eq:area_spectrum1111}
\ee   
\end{theorem}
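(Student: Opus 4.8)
The plan is to collapse the entire spectrum computation onto a single algebraic quantity — a matrix element of the monodromy $R'R$ — and then invert a cosine. By the definition \eqref{eq:def_area_op}, the area eigenvalue on the sector labelled by $K_\nu$ is governed by $\f12\rho^{K_\nu}(c_\nu^{1/2})$ (equivalently $\f12 D^{K_\nu}(c_\nu^{1/2})$ when $q^p=1$). Since $c_\nu^{1/2}$ is central, Schur's lemma makes this a scalar, and by \eqref{eq:rep_central} and \eqref{eq:rep_c_qgen} that scalar is $\f12\,s_{1/2,K_\nu}/d_{K_\nu}$ with $s_{IJ}=(\tr_q^I\otimes\tr_q^J)(R'R)$. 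Thus the whole theorem reduces to evaluating $s_{1/2,K_\nu}/d_{K_\nu}$ in closed form.

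First I would compute $s_{1/2,J}$ for generic $J=K_\nu\in\N/2$ using the ribbon structure. The monodromy $R'R$ acts as a scalar on each irreducible summand of $V^{1/2}\otimes V^{J}=V^{J+1/2}\oplus V^{J-1/2}$, the scalar on $V^K$ being $v_{1/2}v_J/v_K$, where $v_I=\kappa_I^2=q^{-2I(I+1)}$ is the ribbon eigenvalue (from $v\equiv\kappa^2$ and $\kappa^I=q^{-I(I+1)}$); the convention-dependent inversion of this scalar does not affect the answer. Since $\tr_q^{1/2}\otimes\tr_q^J=\sum_K N^{\frac12 J}_K\,\tr_q^K$, the quantum trace collapses to $s_{1/2,J}=q^{2J}d_{J+1/2}+q^{-2J-2}d_{J-1/2}$. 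Inserting $d_K=\sin((2K+1)\theta)/\sin\theta$ for $q=e^{i\theta}$, a one-line trigonometric identity gives $s_{1/2,J}=\sin(2(2J+1)\theta)/\sin\theta$, whence
\be
\frac{s_{1/2,K_\nu}}{d_{K_\nu}}=\frac{\sin\!\big(2(2K_\nu+1)\theta\big)}{\sin\!\big((2K_\nu+1)\theta\big)}=2\cos\!\big((2K_\nu+1)\theta\big)\,,
\ee
so that $\cos\!\big(\f{|\Lambda|}{6}{\rm spec}^{K_\nu}(\hat a_\nu)\big)=\cos\!\big((2K_\nu+1)\theta\big)$.

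Second, I would invert this relation using the normalization \eqref{eq:interval}, i.e. $\f{|\Lambda|}{6}{\rm spec}^{K_\nu}(\hat a_\nu)\in[0,\pi)$ with the single-valued branch \eqref{eq:spectrum_def}. Writing $x=(2K_\nu+1)\theta$: for $x\in[0,\pi)$ one has $\arccos(\cos x)=x$, giving the first line of \eqref{eq:area_spectrum} with range $K_\nu<\f12(\f{\pi}{\theta}-1)$; for $x\in[\pi,2\pi)$ one has $\arccos(\cos x)=2\pi-x$, giving the second line with range $\f12(\f{\pi}{\theta}-1)\le K_\nu<\f12(\f{2\pi}{\theta}-1)$. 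The root-of-unity formula \eqref{eq:area_spectrum1111} then follows by specializing $\theta=\f{2\pi}{k+2}$, under which the two thresholds become $k/4$ and $k/2$ and the stated values emerge directly.

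The main obstacle is the root-of-unity case, where the clean generic-$q$ argument must be checked against the truncated algebra $\TUQ$ and the substitution rule \eqref{eq:substitution_rule}. Three points need care. First, I must verify that the scalar extracted from $\rho^{K_\nu}(c_\nu^{1/2})$ is still $\f12\,s_{1/2,K_\nu}/d_{K_\nu}$ with the same $s,d$; this should hold because fusing $V^{1/2}$ with $V^{J}$ only ever leaves the admissible window through the summand $V^{(k+1)/2}$, whose quantum dimension $d_{(k+1)/2}=\sin((k+2)\theta)/\sin\theta=0$ makes it drop out of the quantum trace automatically. Second, at the branch threshold $K_\nu=k/4$ (when $4\mid k$) both $s_{1/2,K_\nu}$ and $d_{K_\nu}$ vanish, so the ratio is a formal $0/0$; here the scalar must instead be read off from the ordinary trace of the representation (as in the footnote to \eqref{eq:rep_central}), and continuity of $2\cos((2K_\nu+1)\theta)$ fixes the value. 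Third, a harmless half-open-interval subtlety at $x=\pi$ must be dispatched, which does not arise for generic $\theta$. Once these are settled, both cases reduce to the single identity above and the theorem follows.
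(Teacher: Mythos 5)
Your proposal follows essentially the same route as the paper's proof: reduce the spectrum to the scalar $s_{1/2,K_\nu}/d_{K_\nu}$, evaluate $(\rho^{1/2}\otimes\rho^{K_\nu})(R'R)$ via the ribbon eigenvalues $v_Iv_J/v_K$ on each fusion summand, collapse the quantum trace to $q^{2K_\nu}d_{K_\nu+1/2}+q^{-2K_\nu-2}d_{K_\nu-1/2}=2\cos((2K_\nu+1)\theta)\,d_{K_\nu}$, and invert with the single-valued $\arccos$ branch of \eqref{eq:spectrum_def}. Your handling of the truncated edge case (the $V^{(k+1)/2}$ summand dropping out via vanishing quantum dimension, versus the paper's explicit use of $u(K_\nu,\tfrac12)=K_\nu-\tfrac12$ at $K_\nu=k/2$) and your flagging of the $0/0$ at $K_\nu=k/4$ are minor refinements of the same argument, so the proof is correct and matches the paper's approach.
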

\begin{proof}
 The area spectrum comes from the direct calculation of $D^{K_\nu}\lb c_{\nu}^{1/2}\rb$ for the case $q^p=1$ and $\rho^{K_\nu}\lb c_{\nu}^{1/2} \rb$ for the other case, which has been revealed in \eqref{eq:rep_c_qgen} and \eqref{eq:rep_central} followed by the substitution rule \eqref{eq:substitution_rule} for the former case. Therefore,
\be
D^{K_\nu}\lb c_{\nu}^{\f12} \rb = \f{1}{d_{K_\nu}}\lb \tr_q^{1/2}\otimes \tr_q^{K_\nu} \rb(\cR'\cR)\,,\quad 
\rho^{K_\nu}\lb c_{\nu}^{\f12} \rb =  \f{1}{d_{K_\nu}}\lb \tr_q^{1/2}\otimes \tr_q^{K_\nu} \rb(R'R)\,.
\label{eq:rep_c_both}
\ee
To proceed, we use the results in Lemmas \ref{lemma:RpR_qgen} and \ref{lemma:RpR_qroot} below, which give the same result for $K_\nu$ representation of $c^I_{\ell}$ for any $q=e^{i\theta}$. 
For the case $|q|=1,q^p\neq 1$, we calculate that
\be
\rho^{1/2}\otimes \rho^{K_\nu} (R'R)=\rho^{K_\nu+\f12}(R'R) + \rho^{K_\nu-\f12}(R'R)=
\f{v^{\f12}v^{K_\nu}}{v^{K_\nu+\f12}}\Id_{K_\nu+\f12}+ \f{v^{\f12}v^{K_\nu}}{v^{K_\nu-\f12}}\Id_{K_\nu-\f12}\,.
\label{eq:rho_RpR}
\ee
For the case $q^{k+2}=1$, on the other hand, the same expression holds only when $K_\nu+\f12\leq \f{k}{2}$, leading to $u(K_\nu,\f12)=K_\nu+\f12$. In this case,
\be
\rho^{\f12}\otimes \rho^{K_\nu} (\cR'\cR)
= \rho^{K_\nu+\f12}(\cR'\cR) + \rho^{K_\nu-\f12}(\cR'\cR) 
= \f{v^{\f12}v^{K_\nu}}{v^{K_\nu+\f12}}\Id_{K_\nu+\f12}+ \f{v^{\f12}v^{K_\nu}}{v^{K_\nu-\f12}}\Id_{K_\nu-\f12}\,.
\label{eq:rho_RpR_2}
\ee 
The irreducible representation of the ribbon element $v$ takes the form $v^I=q^{-2I(I+1)}$ (see Lemma \ref{lemma:vJ}). 
The quantum trace of the identity matrix simply gives the quantum dimension of the representation space \ie $\tr_q^I\lb \Id_I \rb =d_I\equiv [2I+1]_q$. Combining these facts, \eqref{eq:rep_c_both} reads 
\be\begin{split}
D^{K_\nu}\lb c_{\nu}^{\f12} \rb = \rho^{K_\nu}\lb c_{\nu}^{\f12} \rb 
&=\f{1}{d_{K_\nu}}\lb \f{q^{-\f32}q^{-2K_\nu(K_\nu+1)}}{q^{-2(K_\nu+\f12)(K_\nu+\f32)}} d_{K_\nu+\f12}+
\f{q^{-\f32}q^{-2K_\nu(K_\nu+1)}}{q^{-2(K_\nu-\f12)(K_\nu+\f12)}} d_{K_\nu-\f12}\rb\\
&=\f{1}{[2K_\nu+1]_q}\lb q^{2K_\nu} [2K_\nu+2]_q + q^{-2K_\nu-2}[2K_\nu]_q\rb\\
&=\f{[2(2K_\nu+1)]_q}{[2K_\nu+1]_q}
=\f{e^{i2(2K_\nu+1)\theta}- e^{-i 2(2K_\nu+1)\theta}}{{e^{i(2K_\nu+1)\theta}- e^{-i (2K_\nu+1)\theta}}}=\f{\sin\lb 2(2K_\nu+1)\theta\rb}{\sin\lb (2K_\nu+1)\theta\rb}\\
&= 2\cos\lb(2K_\nu+1)\theta \rb \,,
\label{eq:D_c-result}
\end{split}
\ee
where we have used the definition of a quantum number $[n]_q:=\f{q^n-q^{-n}}{q-q^{-1}}$ and that $q^{i\theta}$ is a pure phase. 

As a special case, when $q^{k+2}=1$ and $K_\nu=k/2$, $u(K_\nu,\f12)=K_\nu-1/2$ and hence only one representation $\rho^{K_\nu-\f12}$ is left in the recoupling theory. This means, instead of \eqref{eq:rho_RpR_2}, we have
\be
\rho^{\f12}\otimes \rho^{K_\nu} (\cR'\cR)=\rho^{K_\nu-\f12}(\cR'\cR)=\f{v^{\f12}v^{K_\nu}}{v^{K_\nu-\f12}}\Id_{K_\nu-\f12}
=q^{-(k+2)}[k]_q \equiv [k]_q
\,\,\Rightarrow\,\,
D^{K_\nu}\lb c^{\f12}_{\ell_\nu} \rb = \f{[k]_q}{[k+1]_q}
\equiv 2\cos\lb \f{k+1}{k+2}2\pi \rb
\,,
\ee
which is the same as the result of \eqref{eq:D_c-result} when $K_\nu=k/2$ and $\theta=\f{2\pi}{k+2}$.
Adding the factor $\f{6}{|\Lambda|}$, by definition \eqref{eq:spectrum_def}, \eqref{eq:area_spectrum} and \eqref{eq:area_spectrum1111} are proven\footnote{More generally, $D^{K_\nu}\lb c_{\ell}^{J} \rb=\rho^{K_\nu}\lb c_{\ell}^{J}\rb=\f{[(2J+1)(2K_\nu+1)]_q}{[2K_\nu+1]_q}$ for any $q\in \bC$, which is proved in Lemma \ref{lemma:tr_RpR_gen}. }. 
\end{proof}

As described in Section \ref{sec:classical_closure}, the holonomies can be written in an arbitrary irreducible representation. This allows us to read the area spectrum from any irreducible representation. Recall that, classically, the area $a_{\nu}$ of a face $\ell_\nu$ of a curved tetrahedron is encoded in the trace of the holonomy $M_{\nu}$ around the $\nu$-th puncture written in an arbitrary representation \eqref{eq:area_general}:
\be
\tr^I(M^I_{\ell_\nu})=\tr^I(g^I (e^{i\frac{|\Lambda|}{6}a_{\nu}H})^I(g^{-1})^I)=\frac{\sin((2I+1)\frac{|\Lambda|}{6}a_{\nu})}{\sin\lb\frac{|\Lambda|}{6} a_{\nu}\rb}\;.
\ee
As a natural generalization of \eqref{eq:def_area_op}, we define the quantization of the {\it r.h.s} as 
\be
\begin{split}
\frac{\sin((2I+1)\frac{|\Lambda|}{6}{\rm spec}^{K_\nu}(\hat{a}_{\nu}))}{\sin \lb\frac{|\Lambda|}{6}{\rm spec}^{K_\nu}(\hat{a}_{\nu})\rb}:= 
\kappa_I D^{K_\nu}\lb \tr^I_q(\bM^I_{\nu})\rb\equiv 
\left\{\ba{ll}
 D^{K_\nu}\lb c_{\nu}^{I} \rb \,,\quad & \text{ for } q^p=1\\[0.15cm]
 \rho^{K_\nu}\lb c_{\nu}^{I} \rb \,,\quad & \text{ for }|q|=1\,, q^p\neq1\\[0.15cm]
\ea\right..
\label{eq:define_gen_area_op}
\end{split}
\ee

With the results from the lemmas  \ref{lemma:RpR_qgen}, \ref{lemma:RpR_qroot}, \ref{lemma:tr_RpR_gen}, and \ref{lemma:tr_RpR_trun}, when $q=e^{i\theta}$ is a phase, we have
\be
D^{K_\nu}\lb c_{\ell}^{I} \rb=\rho^{K_\nu}\lb c_{\ell}^{I}\rb=\f{[(2J+1)(2K_\nu+1)]_q}{[2K_\nu+1]_q}=\frac{\sin\lb(2I+1)(2K_\nu+1)\theta\rb}{\sin\lb(2K_\nu+1)\theta\rb}\;.
\label{eq:D_c_sin}
\ee
Therefore, the spectrum in \eqref{eq:area_spectrum} also satisfies  \eqref{eq:define_gen_area_op}. This shows that Eq.\eqref{eq:area_spectrum} defines the quantization of area independent of the choice of representation $I$ for the closure condition.

Below we prove the lemmas that are used in the proof of Theorem \ref{area spectrum}.

\begin{lemma}
\label{lemma:RpR_qgen}
Let $\rho^{I}$ and $\rho^{J}$ be two irreducible representations of $\UQ$ with $q^p\neq1$ with highest weights $I$ and $J$ respectively. Let $\rho^K$ be the irreducible component with the highest weight $K$ in the decomposition of the tensor product of representations $\rho^{I}$ and $\rho^{J}$. Then
\be
(\rho^I\otimes\rho^J)(R'R)=\sum^{I+J}_{K=I-J}\frac{v_Iv_J}{v_K}\Id_K\;.
\ee
\end{lemma}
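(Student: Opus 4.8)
The plan is to reduce the statement to the defining ribbon identity relating the ``double'' $R$-matrix $R'R=R_{21}R_{12}$ to the coproduct of the central ribbon element $v=\kappa^2$ introduced in \eqref{eq:def_v}. In any quasitriangular ribbon Hopf algebra one has the universal relation
\[
R'R=(v\otimes v)\,\Delta(v^{-1}),
\]
equivalently $\Delta(v)=(R'R)^{-1}(v\otimes v)$, which is precisely the content behind the property $\Delta(\cdot)(R'R)=(R'R)\Delta(\cdot)$ quoted below \eqref{eq:def_v}: since $v$ is central, $R'R$ intertwines every coproduct and is expressible through the grouplike data of $v$. First I would record this identity at the universal level, before passing to any representation.

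Next I would apply $\rho^I\otimes\rho^J$ to both sides. The left-hand side is exactly $(\rho^I\otimes\rho^J)(R'R)$, the object of interest. On the right-hand side the factor $(\rho^I\otimes\rho^J)(v\otimes v)$ is a scalar: because $v$ is central and $\rho^I,\rho^J$ are irreducible, Schur's lemma gives $\rho^I(v)=v_I\,\Id_I$ and $\rho^J(v)=v_J\,\Id_J$, so this factor equals $v_I v_J\,\Id_{V^I\otimes V^J}$.

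The third step is to evaluate $(\rho^I\otimes\rho^J)\Delta(v^{-1})$ by the Clebsch--Gordan decomposition. For $q^p\neq1$ the tensor product is completely reducible, $V^I\otimes V^J=\bigoplus_{K=I-J}^{I+J}V^K$ (taking $I\ge J$), and $(\rho^I\otimes\rho^J)\circ\Delta$ is by definition the tensor-product representation on this space. Since $v^{-1}$ is central, Schur's lemma again forces it to act as a scalar on each irreducible summand $V^K$, namely as $v_K^{-1}\,\Id_K$; hence $(\rho^I\otimes\rho^J)\Delta(v^{-1})=\sum_K v_K^{-1}\,\Id_K$ in this decomposition. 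Combining the two factors yields
\[
(\rho^I\otimes\rho^J)(R'R)=\sum_{K=I-J}^{I+J}\frac{v_I v_J}{v_K}\,\Id_K,
\]
which is the claimed formula.

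The only genuine input is the universal ribbon identity; everything else is Schur's lemma together with complete reducibility of $V^I\otimes V^J$ at generic $q$. The main point to get right is therefore the precise placement of the $v$'s in that identity---that it is $R'R$ (and not $RR'$) which produces $\Delta(v^{-1})$ rather than $\Delta(v)$---so that the scalar attached to $V^K$ comes out as $v_I v_J/v_K$ rather than its inverse. I would pin this down directly from the definition of $v$ in \eqref{eq:def_v} and the convention $R'=R_{21}$; once the conventions are matched, no nontrivial computation remains, and substituting $v_K=q^{-2K(K+1)}$ reproduces the explicit ratios used in \eqref{eq:rho_RpR}.
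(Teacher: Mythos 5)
Your proposal is correct and follows essentially the same route as the paper's proof: both rest on the ribbon identity $\Delta(v)=(R'R)^{-1}(v\otimes v)$, complete reducibility of $V^I\otimes V^J$ at generic $q$, and Schur's lemma to extract the scalar $v_Iv_J/v_K$ on each summand $V^K$. The only cosmetic difference is that you rearrange the identity to $R'R=(v\otimes v)\Delta(v^{-1})$ and evaluate directly, whereas the paper first applies Schur's lemma to $(R'R)|_{V^K}$ (using that $R'R$ commutes with all coproducts) and then fixes the constant from the same relation.
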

\begin{proof}
Since $\UQ$ with $q^p\neq 1$ possesses the semisimplicity property, for every admissible tuple $(I, J, K)$, the tensor product of representations is decomposable, \ie $(\rho^I\boxtimes\rho^J)(\xi) = \sum^{I+J}_{K=|I-J|} \rho^K(\xi)$. Now, we will show that the action of $\rho^{I}\otimes\rho^{J}(R'R)$ on each subspace $V^{K}$ is given by $\frac{v_{I}v_{J}}{v_{K}}\Id_{K}$, where $\Id_K$ represents the identity map of $V^K$. It is important to note that $R'R$ commutes with $\Delta(\xi),\,\forall\,\xi\in\UQ$. By Schur's lemma, the restriction of $R'R$ onto the subspace $V^K$ is equal to the identity map $\Id_K$ up to a factor. 

We now show that this factor is $\frac{v_{I}v_{J}}{v_{K}}$. Recall that the defining relation of the ribbon element $v$: $\Delta(v)=(R'R)^{-1}(v\otimes v)$ or equivalently $(R'R)\Delta(v)=(v\otimes v)$. Taking the presentation $\rho^I\otimes\rho^J$ of the latter relation, we get 
\be
\rho^{I}\otimes\rho^{J}(R'R\Delta(v))\equiv
\rho^{I}\otimes\rho^{J}(R'R)\rho^{I}\otimes\rho^{J}(\Delta(v))=\rho^{I}\otimes\rho^{J}(v\otimes v)
\equiv v_{I}v_{J}(\Id_{I}\otimes \Id_{J})\,.
\ee
This relation holds for each subspace $V^K$, which gives 
\be
v_K(R'R)^{IJ}_{K}=v_{I}v_{J}\Id_K
\quad\Longrightarrow\quad
(R'R)_{K}^{IJ}=\frac{v_{I}v_{J}}{v_{K}}\Id_{K}\,,
\ee
where $(R'R)^{IJ}_K$ denotes the $\rho^I\otimes\rho^J$ representation of $R'R$ restricted to the subspace $V^K$ and we have used the definition $\rho^I(v)=v_I\Id_I$ of the representation of the ribbon element and that $\rho^I\otimes\rho^J(\Delta(v))|_{V^K}=v_K\Id_K$. 
We have, therefore, completed the proof for the lemma.
\end{proof}
\begin{lemma}
\label{lemma:RpR_qroot}
  For the truncated case $\mathcal{U}^T_q(\mathfrak{su}(2))$ with $q=e^{\f{2\pi i}{k+2}}$, consider two physical representations, $\rho^{I}$ and $\rho^{J}$, of $\mathcal{U}^T_q(\mathfrak{su}(2))$, where $I$ and $J$ can take values in half-integers from $0$ to $\frac{k}{2}$. Let $\rho^K$ represent the irreducible component in the decomposition of the tensor product of these representations, $\rho^{I}$ and $\rho^{J}$. Applying the substitution rule \eqref{eq:substitution_rule}, we obtain the following identity.
  \be
  (\rho^I\otimes \rho^J)(\mathcal{R'}\mathcal{R})=(\rho^I\otimes \rho^J)(\varphi^{IJ}(R'R)^{IJ}(\varphi^{-1})^{IJ})=\sum^{u(I,J)}_{K=|I-J|} \frac{v_Iv_J}{v_K} \Id_K\;.
  \ee
 \end{lemma}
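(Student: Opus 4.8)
The plan is to reproduce the three-step argument of Lemma~\ref{lemma:RpR_qgen} --- decompose the tensor product, invoke Schur's lemma on the commuting monodromy, and read off the scalar from the ribbon relation --- but to carry it out inside the truncated ribbon quasi-Hopf algebra $\mathcal{U}^T_q(\mathfrak{su}(2))$, where the associator $\varphi$ dresses all the structural identities. The first equality in the statement is the bridge that lets me import the computation of the generic case almost verbatim.

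First I would establish $(\rho^I\otimes\rho^J)(\mathcal{R}'\mathcal{R})=\varphi^{IJ}(R'R)^{IJ}(\varphi^{-1})^{IJ}$ directly from the substitution rule \eqref{eq:substitution_rule}. Since $\mathcal{R}^{IJ}=(\rho^I\otimes\rho^J\otimes\Id)(\varphi_{213}R\varphi^{-1})$, forming the monodromy $\mathcal{R}'\mathcal{R}$ makes the interior associator factors telescope and leaves the undeformed monodromy $(R'R)^{IJ}$ conjugated by $\varphi^{IJ}$. The key point I would emphasize is that conjugation by the invertible operator $\varphi^{IJ}$ is a similarity transformation: it leaves eigenvalues unchanged and maps $\Delta$-invariant subspaces to $\Delta$-invariant subspaces.

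Next I would invoke the semisimplicity of $\mathcal{U}^T_q(\mathfrak{su}(2))$ --- the very reason the truncated algebra is introduced --- so that the tensor product decomposes as $V^I\otimes V^J=\bigoplus_{K=|I-J|}^{u(I,J)}V^K$, with the upper bound $u(I,J)=\min(I+J,k-I-J)$ fixed by the truncated fusion rule rather than the naive $I+J$. Because $R'R$ commutes with the undeformed coproduct, the first equality shows that $\mathcal{R}'\mathcal{R}$ commutes with the truncated $\Delta(\xi)$ for all $\xi$, so Schur's lemma forces its restriction to each $V^K$ to be a scalar $\lambda_K\Id_K$. To fix $\lambda_K$ I would use the ribbon relation $\Delta(v)=(\mathcal{R}'\mathcal{R})^{-1}(v\otimes v)$ of the truncated ribbon quasi-Hopf algebra: taking the $\rho^I\otimes\rho^J$ representation, restricting to $V^K$, and using $\rho^I(v)=v_I\Id_I$ together with $(\rho^I\otimes\rho^J)(\Delta(v))|_{V^K}=v_K\Id_K$ exactly as in Lemma~\ref{lemma:RpR_qgen}, gives $v_K\lambda_K=v_Iv_J$, hence $\lambda_K=v_Iv_J/v_K$, and summing over the admissible $K$ produces the claimed formula.

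The main obstacle I anticipate is the bookkeeping of the associator. I must check that the conjugation in the first equality neither alters the eigenvalue $v_Iv_J/v_K$ (which it cannot, being a similarity transformation acting through the central ribbon element, whose spectrum $v_K=q^{-2K(K+1)}$ is gauge-independent) nor disturbs the identification of the truncated invariant subspaces, and that the ribbon relation keeps its undeformed numerical form after the substitution rule is applied. This is precisely the place where the quasi-Hopf subtleties flagged around \eqref{eq:substitution_rule} enter, and I would appeal to the construction of \cite{Alekseev:1994au} to justify that only the quantum-trace-compatible substitutions are needed here.
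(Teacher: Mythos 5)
Your proposal is correct and follows essentially the same route as the paper's proof: both reduce to the argument of Lemma~\ref{lemma:RpR_qgen} by noting that semisimplicity, quasi-triangularity and the ribbon relation survive truncation, so Schur's lemma plus $\Delta(v)=(R'R)^{-1}(v\otimes v)$ fixes the scalar $v_Iv_J/v_K$ on each $V^K$ with $|I-J|\le K\le u(I,J)$, and the conjugation by $\varphi^{IJ}$ acts trivially because that scalar multiple of $\Id_K$ commutes with $\varphi_K^{IJ}$. The paper simply writes this cancellation out explicitly as $\varphi_K^{IJ}\frac{v_Iv_J}{v_K}\Id_K(\varphi^{-1})_K^{IJ}=\frac{v_Iv_J}{v_K}\Id_K$ rather than phrasing it as invariance of eigenvalues under a similarity transformation.
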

 \begin{proof}
 We follow the same idea as we used to prove the Lemma \ref{lemma:RpR_qgen}. For the truncated algebra $\mathcal{U}^T_q(\mathfrak{su}(2))$, the semisimplicity property holds, \ie 
 \be(\rho^I\boxtimes\rho^J)(\xi)=\sum_{K=|I-J|}^{u(I,J)} \rho^K(\xi)\;, \forall \xi\in \mathcal{U}^T_q(\mathfrak{su}(2)),
 \ee
 where $u(I,J)=\text{Min}(I+J,k-I-J)$.
 Moreover, the defining relations for the ribbon element and quasi-triangularity still hold \cite{ Altschuler:1992zz}:
\be
\Delta(v) = (R'R)^{-1}(v\otimes v)\quad,\quad
(R'R)\Delta(\xi) = \Delta(\xi)(R'R),
\ee
where $R$ and $R'$ satisfy the quasi-Yang-Baxter equation. For $\cU^T_q(\mathfrak{su}(2))$, the action of $\rho^{I}\otimes\rho^{J}(R'R)$ on each subspace $V^{K}$ is given by $\frac{v_{I}v_{J}}{v_{K}}\Id_K$ as

\be
(\rho^{I}\otimes \rho^{J})(\varphi((R'R)(\varphi^{-1}))|_{V^K} = \varphi_{K}^{IJ}(R'R)^{IJ}_K (\varphi^{-1})_{K}^{IJ}
= \varphi_{K}^{IJ}\frac{v_{I}v_{J}}{v_{K}}\Id_{K}(\varphi^{-1})_{K}^{IJ}
= \frac{v_{I}v_{J}}{v_{K}}\Id_{K}\,.
\ee

By combining the semisimplicity property of $\mathcal{U}^T_q(\mathfrak{su}(2))$, the proof is complete.
\end{proof}

\medskip

We compare the area spectrum in \eqref{eq:area_spectrum} and \eqref{eq:area_spectrum1111} to the area spectrum in LQG. The standard area spectrum in 4D LQG without cosmological constant takes the form 
${\rm spec}^{K_\nu}(\hat{a})=\gamma \ell_{\text{p}} ^2 \sqrt{K_\nu(K_\nu+1)}$, where $K_\nu$ is the SU(2) spin. Identifying $K_\nu$ to the $\mathcal{U}_q(\mathfrak{su}(2))$ spin and requiring \eqref{eq:area_spectrum} and \eqref{eq:area_spectrum1111} consistent with the standard LQG area in the limit $\theta,\,|\Lambda|\to 0$ and large $K_\nu$ (\ie the area grows linearly in $K_\nu$\footnote{The second case in \eqref{eq:area_spectrum} disappears in the limit $\theta\to0$.}) results in that $\theta=\frac{1}{12}\ell_{\rm p}^2\gamma|\Lambda|$, or $k+2=\f{24\pi }{\ell_{\rm p}^2 \gamma|\Lambda|}\in\mathbb{N}$ in the case of $q=\exp(\frac{2\pi i}{k+2})$. $k+2$ here equals to two times the integer level in 3+1 dimensional spinfoam model with cosmological constant \cite{Haggard:2014xoa,Han:2021tzw,Han:2023hbe}. The area spectrum \eqref{eq:area_spectrum} and \eqref{eq:area_spectrum1111} reduce to 
\begin{equation}
{\rm spec}^{K_\nu}(\hat{a}_{\nu})=\begin{cases}\gamma \ell_{\rm p}^2 \left(K_\nu+\frac{1}{2}\right),& 0\leq K_\nu<\frac{1}{2}B\\
\frac{12\pi }{|\Lambda|}-\gamma \ell_{\rm p}^2 \left(K_\nu+\frac{1}{2}\right), &\frac{1}{2}B\leq K_\nu < B
\end{cases},\qquad B=\frac{12\pi}{\ell_{\rm p}^2\gamma|\Lambda|}-1.
\label{areasp1111}
\end{equation}
As a key difference from LQG with vanishing $\Lambda$, the area spectrum is bounded from above,
\be
{\rm spec}^{K_\nu}(\hat{a}_{\nu})\leq \frac{6\pi}{|\Lambda|}\,.
\ee 
Although the resulting area spectrum is consistent with the standard LQG result for $\Lambda\to 0$ and large $K_\nu$, a noticeable difference shows up in the regime of small $K_\nu$. In particular, when $K_\nu=0$, the area does not become trivial, in contrast to the area spectrum in the standard LQG. Indeed, the area spectrum here is restrictively positive, and in the case of $q=\exp(\frac{2\pi i}{k+2})$,
\be
\mathrm{Min}\left[{\rm spec}^{K_\nu}(\hat{a}_{\nu})\right]=\frac{1}{2}\gamma\ell_{\rm p}^2.
\ee
The minimum can be smaller for generic $q=e^{i\theta}$:
\be
\mathrm{Min}\left[{\rm spec}^{K_\nu}(\hat{a}_{\nu})\right]=\frac{1}{2}\gamma\ell_{\rm p}^2 \left(2B-\lfloor2B\rfloor\right)
\ee
where $B$ is defined in \eqref{areasp1111} and $\lfloor\cdot\rfloor$ is the floor function.

Interestingly, our result \eqref{areasp1111} is the same as the area spectrum obtained in \eg \cite{Alekseev:2000hf}, which suggests the value of $\gamma$ different from the standard LQG by the black hole microstate counting.

\section{Conclusion and discussion}
\label{sec:conclusion}

In this work, we show that the solution space of the quantum curved closure condition coincides with the intertwiner space, $W^0(K_1,K_2,K_3,K_4)$ of $\mathcal{U}_q(\mathfrak{su}(2))$ with the quantum deformation parameter $q$ being a phase for both cases $|q|=1,q^p\neq 1$ and $q^p=1$, where $K_\nu$ is the $\mathcal{U}_q(\mathfrak{su}(2))$ spin.
Inspired by the LQG, the geometrical quantities are quantized as the operators such that their actions on the Hilbert space give discrete spectra. Classically, the area $a_{\nu}$ of the face $\ell_\nu$ of a curved tetrahedron can be calculated from the holonomy $M_\nu$ surrounding $\ell_\nu$ in the fundamental representation by $\frac{1}{2}\tr(M_\nu)=\cos{(\frac{|\Lambda|}{6}
a_{\nu})}$. Moreover, we generalize the holonomy $M^I_\ell$ in the classical theory to arbitrary representation $I$ and the area is also related to the trace of the holonomy. In the quantum theory, the area spectrum is discrete and bounded from above and below.

Recently, an improved spinfoam model was proposed  \cite{Han:2021tzw}. This model formulates the 3+1 dimensional Lorentzian spinfoam amplitude with a non-zero cosmological constant in terms of the Chern-Simons theory with the complex group $\SL(2,\mathbb{C})$. It has been demonstrated to possess finite spinfoam amplitudes and exhibit the correct semiclassical behavior. The amplitude involves sum over quantum areas $j_f$, and a natural cutoff is provided by the Chern-Simons level $k$. The space of boundary data can be identified as the phase space of shapes of a homogeneously curved tetrahedron, which is equivalent to the phase space of $\SU(2)$ flat connections on a 4-punctured sphere. The quantization of the space of boundary data is then the moduli algebra that we study in this paper. The corresponding Hilbert space is the intertwiner space $W^0(K_1,K_2,K_3,K_4)$ of $\UQ$. 

Given that the boundary data of the spinfoam model is semiclassical, to gain a clearer understanding of the relationship between the quantization and the boundary data, it becomes interesting to construct coherent states in $W^0(K_1,K_2,K_3,K_4)$ parameterized by phase space variables. The result on this aspect will be reported elsewhere \cite{toappear}.

This paper makes the first step toward reformulating the LQG kinematics in order to make it compatible with the spinfoam theory with nonzero cosmological constant. This work focuses on the quantization of a curved tetrahedron, which is at the same level as a single intertwiner in the LQG Hilbert space. We have to generalize the quantization to arbitrary 3D cellular complexes to reformulate the entire LQG Hilbert space and geometrical operators. The result in \cite{Han:2016dnt} shows that the moduli space of SU(2) flat connections on higher-genus surfaces closely relates to the LQG phase space on cellular complexes. Then we expect that applying the combinatorial quantization to flat connections on higher-genus surfaces should lead to an interesting reformulation of the LQG kinematics.

\begin{acknowledgements}
The authors would like to acknowledge Yuting Hu for various helpful discussions. This work receives support from the National Science Foundation through grants PHY-2207763, PHY-2110234, the Blaumann Foundation and the Jumpstart Postdoctoral Program at Florida Atlantic University. The authors acknowledge IQG at FAU Erlangen-N\"urnberg, Perimeter Institute for Theoretical Physics, and University of Western Ontario for the hospitality during their visits.   

\end{acknowledgements}

\appendix
\renewcommand\thesection{\Alph{section}}

\section{Geometry of a homogeneously curved tetrahedron in terms of holonomies}
\label{app:GeoTetra}

In this appendix, we collect the geometrical information of a homogeneously curved tetrahedron stored in the holonomies $M_1,M_2,M_3$ and $M_4$ which can be decomposed in the way of \eqref{eq:simple_solution}. These materials can also be found in \cite{Haggard:2015ima}. 

For convenience, let us first introduce the half-traces of the products of one, two and three holonomies, respectively.
\begin{subequations}
\begin{align}
	\la M_\ell \ra &=\f12\tr(M_\ell)\,,
	\label{eq:half-trace_1}\\
	\la M_{\ell_1}M_{\ell_2}\ra 
	&=\f12 \tr(M_{\ell_1}M_{\ell_2})-\f14 \tr(M_{\ell_1})\tr(M_{\ell_2})\,,
	\label{eq:half-trace_2}\\
	\la M_{\ell_1}M_{\ell_2}M_{\ell_3} \ra 
	&=\f12 \tr(M_{\ell_1}M_{\ell_2}M_{\ell_3}) 
	-\f14\left[\tr(M_{\ell_1})\tr(M_{\ell_2}M_{\ell_3})+ \text{cyclic}\right]
	+\f14 \tr(M_{\ell_1})\tr(M_{\ell_2})\tr(M_{\ell_3})\,.
	\label{eq:half-trace_3}
\end{align}
\label{eq:half-traces}
\end{subequations}
The half-trace \eqref{eq:half-trace_1} of one holonomy $M_\ell$ around a face $\ell$ encodes the area $a_\ell$ of the face; the half-trace \eqref{eq:half-trace_2} of two holonomies $M_{\ell_1}$ and $M_{\ell_2}$ encodes the dihedral angle $\theta_{\ell_1\ell_2}$ of the two faces $\ell_1$ and $\ell_2$; the half-trace \eqref{eq:half-trace_3} of three holonomies $M_{\ell_1}, M_{\ell_2}$ and $M_{\ell_3}$ encodes the triple product of the normals $(\hat{n}_{\ell_1}\times\hat{n}_{\ell_2})\cdot\hat{n}_{\ell_3}$ to the three faces $\ell_1,\ell_2,\ell_3$ calculated at the common vertex of the three faces. 
Explicitly,
\begin{align}
\cos (s a_\ell) =& \epsilon_\ell \la M_\ell\ra\,,\\
\cos \theta_{\ell_1\ell_2} :=& \hat{n}_{\ell_1}\cdot\hat{n}_{\ell_2} = -\f{\epsilon_{\ell_1}\epsilon_{\ell_2}\la M_{\ell_1}M_{\ell_2}\ra}{\sqrt{1-\la M_{\ell_1}\ra^2}\sqrt{1-\la M_{\ell_2}\ra^2}}\,,
\label{eq:cos_H}\\
(\hat{n}_{\ell_1}\times\hat{n}_{\ell_2})\cdot\hat{n}_{\ell_3} =&
-\f{\epsilon_{\ell_1}\epsilon_{\ell_2}\epsilon_{\ell_3}\la M_{\ell_1}M_{\ell_2}M_{\ell_3} \ra}{\sqrt{1-\la M_{\ell_1}\ra^2}\sqrt{1-\la M_{\ell_2}\ra^2}\sqrt{1-\la M_{\ell_3}\ra^2}}\,.
\end{align}
The signs $\{\epsilon_\ell=\pm\}$ are fixed by 
requiring the following inequalities for triple products of the normals evaluated at vertex 4 (referring to fig.\ref{fig:tetrahedra}):
\be
\left\{\ba{rl}
(\hat{n}_1\times\hat{n}_2)\cdot \hat{n}_3&>0\\[0.15cm]
(\hat{n}_1\times\hat{n}_3)\cdot \hat{n}_4&>0\\[0.15cm]
(\hat{n}_2\times\hat{n}_1)\cdot M_1\hat{n}_4&>0\\[0.15cm]
(\hat{n}_3\times\hat{n}_2)\cdot M_3^{-1}\hat{n}_4&>0
\ea\right..
\ee
These four inequalities pick four signs each associated to a face $\ell$ of the tetrahedron which corresponds to $\epsilon_\ell=\sgn \sin(|s|a_\ell)$. 
Therefore, the Gram matrix $\Gram(\theta_{\ell_1\ell_2})$ for a tetrahedron given by the dihedral angles $\{\cos\theta_{\ell_1\ell_2}\}$, hence $\Gram(M_\ell)$ used in Theorem \ref{theorem:Minkowski}, can be written in terms of the holonomies using \eqref{eq:cos_H}.

\section{Mathematical tools: $\UQ$ and representations}
\label{sec:math}

In this appendix, we review concisely $\UQ$ for both cases $|q|=1,q^p\neq1$ and $q^p=1$ algebraically and their representation theories. 

\subsection{$\UQ$ with $|q|=1,q^p \neq 1$ as a quasitriangular ribbon Hopf-$*$ algbera}
\label{subsec:algebraic_strucutre}
Let us start by introducing the Hopf algebra $\UQ$ 
with $|q|=1,q^p \neq 1$. 
It is generated by identity $e$ and $H,X,Y$ subject to relations:
\be
[H,X]=2 X\,,\quad
[H,Y]=-2Y\,,\quad
[X,Y]=\f{q^H-q^{-H}}{q-q^{-1}}\,.
\label{eq:Uqsl_commutator}
\ee
The first two commutation relations can be equivalently written in terms of the $q^{\frac{H}{2}}$ (and its inverse $q^{-\frac{H}{2}}$):
\be
q^{\frac{H}{2}}Xq^{-\frac{H}{2}}=qX\,,\quad
q^{\frac{H}{2}}Yq^{-\frac{H}{2}}=q^{-1}Y\,.
\ee
The definition of $\UQ$ is completed by its co-algebra structure and an antipode. 
The co-structure is defined as
\be
\begin{split}
&\Delta(e)=e\otimes e\,,\quad \Delta(q^{\pm \frac{H}{2}})=q^{\pm \frac{H}{2}}\otimes q^{\pm \frac{H}{2}}\,,\quad
\Delta(X)= X\otimes q^{\frac{H}{2}}+q^{\frac{-H}{2}}\otimes X\,,\quad \Delta(Y)= Y\otimes q^{\frac{H}{2}}+q^{\frac{-H}{2}}\otimes Y\,,\\
&\epsilon(e)=\epsilon(q^{H})=1\,, \quad\epsilon(X)=\epsilon(Y)=0\,,
\end{split}    
\ee
where $\epsilon:\UQ\rightarrow\bC$ is the counit and $\Delta:\UQ\rightarrow \UQ\otimes \UQ$ is the coproduct satisfying the co-associativity,
\be
(\Delta\circ \Id)\circ \Delta =( \Id\circ\Delta)\circ \Delta\,.
\label{eq:co-associativity}
\ee
The antipode $S:\UQ\rightarrow\UQ$ acts on the generators as
\be
S(q^{\frac{H}{2}})=q^{-\frac{H}{2}}\,,\quad
S(X)=-q X\,,\quad S(Y)=-q^{-1}Y\,.   
\ee
It is compatible with the coproduct and counit through
\be
\sum_a S(\xi^{(1)}_a)\xi^{(2)}_a = \sum_a\xi^{(1)}_a S(\xi^{(2)}_a)=\epsilon(\xi)\,,\quad \forall \xi\in \UQsl\,,
\label{eq:S_compatible}
\ee
where $\Delta(\xi)=\sum_a \xi^{(1)}_a\otimes \xi^{(2)}_a$. 
Note that the antipode is not an involution. Instead, the following relation holds.
\be
S^2(\xi) = q^H \xi q^{-H}\quad \forall \xi\in\UQ\,.
\label{eq:S2}
\ee
The Hopf algebra $\UQ$ is of {\it quasitriangular} type.
The quasitriangularity is realized by a {\it quantum $\cR$-matrix} $R\in\UQ\otimes\UQ$ which is a solution to the {\it quantum Yang-Baxter equation} (QYBE)
\be
 R_{12}R_{13}R_{23}=R_{23}R_{13}R_{12}\,.
 \label{eq:QYBE}
\ee
Here we have used the same notation as for the classical $r$-matrix, \ie $R_{12}=\sum_a R^{(1)}_a\otimes R^{(2)}_a \otimes \id, R_{13}=\sum_a R^{(1)}_a \otimes \id\otimes R^{(2)}_a$ and $R_{23}=\sum_a  \id\otimes R^{(1)}_a\otimes R^{(2)}_a $. 
The permuted coproduct $\Delta'$ is related to $\Delta$ through $R$:
\be
\Delta':=\sigma\circ \Delta = R\Delta R^{-1}\,,
\label{eq:DeltaP}
\ee
where $\sigma$ is the permutation operator \ie $\sigma(\xi\otimes\eta)=\eta\otimes\xi\,,\forall \xi,\eta\in\UQ$. 
Similarly, we also denote the $\cR$-matrix with two vector subspace permuted as
\be
R':=\sigma\circ R\equiv\sum_a R^{(2)}_a\otimes R^{(1)}_a\,.
\ee
There are some properties of $R$-matrix that we use to derive the theorem of this paper:
\be
\begin{aligned}
 (S\otimes\Id)(R)=(\Id\otimes S)(R)=R^{-1}\quad&,\quad(S\otimes S)(R)=R,
 \\
 (\epsilon\otimes\Id)(R)=(\Id\otimes\epsilon)(R)=e\quad&,\quad
 (\epsilon\otimes\Id)(R')=(\Id\otimes\epsilon)(R')=e.
\end{aligned}
\ee
The deformation parameter $q\equiv e^{\hbar}$
in $\UQ$ is inherited by $R$ hence $R=R(\hbar)$. 
Taking the $\hbar$ expansion of $R(\hbar)$, the classical $r$-matrix \eqref{eq:r_su2} is recovered in the first-$\hbar$ order:
\be
R(\hbar)=\id +\hbar r+O(\hbar^2)\,.
\label{eq:R_r}
\ee
In this sense, \eqref{eq:QYBE} is the quantum version of the CYBE \eqref{eq:CYBE}. 
The $R$-matrix for $\UQ$ is defined as
\be
R=\sum_{n\in \N}\frac{(q-q^{-1})^n}{[n]!}q^{-n(n+1)/2}q^{\frac{1}{2}(H\otimes H)+\f{n}{2}(H\otimes e -e\otimes H)}(X^n\otimes Y^n)\,,
\label{eq:R_def}
\ee
where $[n]:=\frac{q^n-q^{-n}}{q-q^{-1}}$ is called the $q$ number and $[n]!:=[1][2]...[n]$ for $n>1$ while $[0]!=[1]!\equiv 1$. 

\medskip

$\UQ$ is the Hopf $*$-algebra consisting of the Hopf algebra $\UQsl$ for $|q|=1$, and the $*$-structure defined as: 
\be
H^*=H\,,\quad X^*=Y\,,\quad Y^*=X\,.
\ee
It follows that $(R)^*\equiv R^{-1}$. 
The $*$-operation is an anti-homomorphism and can be viewed as analogous to the conjugate transpose operation for matrices in the sense that $(\lambda \xi\eta)^*=\bar{\lambda}\eta^*\xi^*\,,\forall \lambda\in\bC,\xi,\eta\in\cU_q(\mathfrak{su}(2))$, where the bar denotes the complex conjugate. 
Therefore, in the case of $|q|=1$, $H^*=H$ can be equivalently written as $(q^{\pm \frac{H}{2}})^*=q^{\mp \frac{H}{2}}$.  

We also require that the $*$-operation on $\cU_q(\mathfrak{su}(2))\otimes \cU_q(\mathfrak{su}(2))$ behaves as \cite{Alekseev:1994pa}   
\be
(\xi\otimes \eta)^{*}=\eta^{*}\otimes \xi^{*} \,,\quad
\forall \xi,\eta\in\cU_q(\mathfrak{su}(2))\,.
\label{eq:star_tensor_product}
\ee
Moreover, the following properties hold for a general element $\xi\in\UQ$.
\be
S(\xi^{*})=S(\xi)^{*}\,,\quad 
\epsilon(\xi^{*})=\overline{\epsilon(\xi)}\,,\quad
\Delta(\xi^{*})=(\Delta(\xi))^{*} \,.
\label{eq:star_coalgbera}
\ee  
\eqref{eq:star_tensor_product} and \eqref{eq:star_coalgbera} lead to a simple result for $R$ under the $*$-operation: $R^{*}=(S\otimes\Id)(R)=R_{q^{-1}}\equiv R^{-1}$.

To describe $\UQ$ as a ribbon Hopf algebra, we also introduce a {\it ribbon element} $v\in\UQ$ which is an invertible central element defined as
\be
v^2=uS(u)\,,\quad
S(v)=v\,,\quad
\epsilon(v)=1\,,\quad
\Delta(v)=(R'R)^{-1}(v\otimes v)\equiv (v\otimes v)(R'R)^{-1}\quad
\text{ with }\,
u:=\sum_a S(R^{(2)}_{a})R^{(1)}_{a}\,.
\label{eq:def_v}
\ee 
It has been known that such a ribbon element exists for $\UQ$ \cite{Reshetikhin:1990pr}. 
The $*$-operation acts on the ribbon element as $v^{*}=v^{-1}$.

In fact, the above construction can also be generalized to the case of a generic $q\in \bC$. The only difference is that $*$-operation is defined differently for the tensor products of $\UQ$ \cite{Alekseev:1994pa,Alekseev:1994au,Alekseev:1995rn}. That is, for generic $q$, $(\xi\otimes \eta)^{*}=\xi^{*}\otimes \eta^{*} \,,
\forall \xi,\eta\in\UQ$ and the follow-up formulas would be changed.

\medskip

\noindent{\bf The representation theory. }
\medskip

 Let us also give the representation theory of $\UQ$ with $|q|=1,q^p\neq 1$. In this case, $\UQ$ is a semi-simple Hopf algebra and its representation is a deformed version of that for $\su(2)$. For a unitary representation $\rho:\UQ\rightarrow\End(V)$ on a Hilbert space $V$, $(\rho(\xi))^{*}=\rho(\xi^{*})$ holds for all $\xi \in \UQ$. 

The tensor product of two representations $\rho_1$ and $\rho_2$ is expressed in terms of the coproduct $\Delta$:
\be
\rho_1 \boxtimes \rho_2(\xi)=(\rho_1 \otimes \rho_2)(\Delta(\xi))  \,,\quad\forall\, \xi\in\UQ\,.
\ee
Note that the tensor product of two unitary representations is not unitary due to our choice \eqref{eq:star_tensor_product} of $*$-operation on $\UQ\otimes\UQ$. Instead, we have
\be
(\rho_1 \otimes \rho_2)(\Delta'(\xi^{*}))=\left( (\rho_1 \otimes \rho_2)(\Delta(\xi))\right)^{*}\,,   
\label{eq:*-rep_product}
\ee
where $\Delta'$ is the permuted coproduct defined in \eqref{eq:DeltaP}.

For every equivalence class $[J]$ of irreducible representations (with label $J\in\N/2$ being a half-integer), there exists a unitary representation $\rho^J$ with carrier space $V^J$. 

The unitary representations of the $\UQ$ generators act on the basis of $V^J$ as
\begin{subequations}
\begin{align}
\rho^J(q^{\frac{H}{2}})e^J_{m}&=q^{m}\,e^J_{m}\,,\\
\rho^{J}(X)e_{m}^{J}&=\sqrt{[J-m]_{q}[J+m+1]_{q}}\,e_{m+1}^{J}\,,\\
\rho^{J}(Y)e_{m}^{J}&=\sqrt{[J+m]_{q}[J-m+1]_{q}}\,e_{m-1}^{J} \,.
\end{align}
\label{eq:rho_generators}
\end{subequations}
The representation of the unit element $e$ in $V^J$ gives the identity matrix of dimension $2J+1$, \ie $\rho^J(e)=\Id_J$. 

Due to semi-simplicity, the tensor product $\rho^I\boxtimes\rho^J$ of unitary representations is decomposable into the direct sum of irreducible representations
\be
(\rho^I\boxtimes \rho^J)=\bigoplus_{K=|I-J|}^{I+J}\rho^K \,.   
\ee
This decomposition determines the Clebsch-Gordon (CG) maps $C[IJ|K]$: $V^{I}\otimes V^{J} \to V^{K}$ up to normalization,
\be
C[IJ|K](\rho^{I} \boxtimes \rho^{J})(\xi)= \rho^{K}(\xi)C[IJ|K]\,,\quad
\forall \xi\in\UQ\,,
\label{eq:CG_def1}
\ee
Taking the $*$-operation on both sides of \eqref{eq:CG_def1}, on can define another CG maps $C[IJ|K]^{*}$:$V^{K}\to V^{I}\otimes V^{J}$,
\be
(\rho^{I} \otimes \rho^{J})\Delta'(\xi)C[IJ|K]^{*}= C[IJ|K]^{*}\rho^{K}(\xi)\,,\quad 
\forall \xi\in\UQ\,.
\label{eq:CG_def2}   
\ee
Denote $R^{IJ}:=\sum_a\rho^I(R^{(1)}_a)\otimes \rho^J(R^{(2)}_a)$ as the the representation of $R$ in $\End(V^I)\otimes \End(V^J)$. Define $\widetilde{R}^{IJ}:=\sigma^{IJ}\circ R^{IJ}$ with $\sigma^{IJ}:V^I\otimes V^J\rightarrow V^J\otimes V^I$ being the permutation operator. $\widetilde{R}^{IJ}$ is called the braiding and it  furnishes an intertwining relation between $\rho^{I}\boxtimes\rho^{J}(\xi)$ and $\rho^{J}\boxtimes\rho^{I}(\xi)$ in the way that
\be
(\rho^{J}\boxtimes\rho^{I})(\xi)\widetilde{R}^{IJ}=\widetilde{R}^{IJ}(\rho^{I}\boxtimes\rho^{J})(\xi)\,.
\label{eq:intertwine}
\ee
It can be proven by taking the representation of the relation \eqref{eq:DeltaP}, which gives
\be
(\rho^{I}\otimes\rho^{J})\Delta'(\xi)R^{IJ}=R^{IJ}(\rho^{I}\otimes\rho^{J})\Delta(\xi)\,. 
\ee
Permuting the two representation spaces for both sides, one obtains the relation \eqref{eq:intertwine}.

Denote the representation of the ribbon element $v$ in $V^I$ as $v^I:=\rho^I(v)$. Since $v$ is a central element, $v^I$ is simply a complex number by Schur's Lemma. 
By the definition of $u$ defined in \eqref{eq:def_v}, the following relation holds \cite{Kassel:1995xr}. 
\be
S^2(\xi)=u\xi u^{-1} \quad \forall \xi\in\UQ\,.
\ee
Recalling the relation \eqref{eq:S2}, $q^{-H}u\equiv q^{-H}u$ must be a central element. Its invertibility is obvious from the definition. The element $v$ satisfies the defining relations of a ribbon element. Therefore, the ribbon element $v=q^{-H}u$. 
\begin{lemma}
\label{lemma:vJ}
 The ribbon element $v=q^{-H}u$ in the irreducible representation $\rho^J$ has value $q^{-2J(J+1)}$.    
\end{lemma}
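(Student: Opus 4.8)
The plan is to exploit that $v=q^{-H}u$ is a \emph{central} element, so that by Schur's lemma $\rho^J(v)=v_J\,\Id_J$ for a single scalar $v_J\in\bC$, and then to pin down $v_J$ by evaluating it on the highest weight vector $e^J_J$. Centrality is immediate from the two given relations $S^2(\xi)=u\xi u^{-1}$ and \eqref{eq:S2}: together they force $q^{-H}u\,\xi\,(q^{-H}u)^{-1}=q^{-H}\big(u\xi u^{-1}\big)q^{H}=q^{-H}\big(q^{H}\xi q^{-H}\big)q^{H}=\xi$ for all $\xi\in\UQ$, so $v=q^{-H}u$ commutes with every generator. Hence it suffices to compute the eigenvalue of $v$ on one vector of $V^J$.

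First I would act with $u=\sum_a S(R^{(2)}_a)R^{(1)}_a$ on $e^J_J$, inserting the explicit series \eqref{eq:R_def}. Writing $R=\sum_{n\ge0}R_n$ with $R_n\propto q^{\frac12 H\otimes H+\frac n2(H\otimes e-e\otimes H)}(X^n\otimes Y^n)$, the \emph{first} tensor leg of $R_n$ ends in $X^n$, which acts first (rightmost) on $e^J_J$. Since $e^J_J$ is highest weight, $\rho^J(X)e^J_J=0$ by \eqref{eq:rho_generators}, so $\rho^J(X^n)e^J_J=0$ for every $n\ge1$ and only the $n=0$ term survives in $\rho^J(u)e^J_J$. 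This is the key simplification that renders the computation finite.

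The remaining $n=0$ term is the purely Cartan element $R_0=q^{\frac12 H\otimes H}$, and using $u=m\circ(S\otimes\Id)\circ\sigma(R)$ I would compute its contribution $u_0$ by expanding $q^{\frac12 H\otimes H}$ as a power series in $H\otimes H$ and applying the antihomomorphism $S$ leg-wise via $S(H)=-H$, which gives $u_0=q^{-\frac12 H^2}$. On the highest weight vector $\rho^J(q^{H/2})e^J_J=q^J e^J_J$ (so $H$ has eigenvalue $2J$), whence $\rho^J(u_0)e^J_J=q^{-2J^2}e^J_J$; since $u_0$ is Cartan, $e^J_J$ is indeed an eigenvector of $u$. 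Multiplying by $\rho^J(q^{-H})e^J_J=q^{-2J}e^J_J$ then yields $v_J=q^{-2J^2}q^{-2J}=q^{-2J(J+1)}$, which is the claimed value.

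I expect the main obstacle to be the bookkeeping in the $n=0$ step, namely justifying $u_0=q^{-\frac12 H^2}$ from $m\circ(S\otimes\Id)\circ\sigma$ applied to the entangled factor $q^{\frac12 H\otimes H}$; the series expansion together with $S(H^k)=(-1)^kH^k$ handles this cleanly. A minor point worth checking explicitly is that $u\,e^J_J$ is proportional to $e^J_J$ (weight preservation), which holds because only the weight-neutral Cartan term contributes to the action on the highest weight vector.
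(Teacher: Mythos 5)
Your proof is correct and follows essentially the same route as the paper's: both reduce to the highest-weight vector via Schur's lemma, observe that every $n\geq 1$ term of the $R$-matrix series is killed because its first leg ends in $X^n$ and $\rho^J(X)e^J_J=0$, and evaluate the surviving Cartan piece $q^{-\frac{H^2}{2}-H}$ on $e^J_J$ to get $q^{-2J(J+1)}$. The only additions are your explicit centrality argument from $S^2(\xi)=u\xi u^{-1}=q^H\xi q^{-H}$ and the leg-wise computation of $u_0=q^{-\frac12 H^2}$, both of which the paper leaves implicit.
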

\begin{proof}
 By Schurs' lemma, the irreducible representation of the central element $v=uq^{-H}$ on space $V^J$ is proportional to the identity $\Id^J$ times some complex factor. We can find the complex factor by the actions of $\UQ$ generators on the highest weight vector $|J,J\rangle$, which are the special case of \eqref{eq:rho_generators} and read 
\be
q^{\f{H}{2}}|J,J\ra = q^{J}|J,J\ra\,,\quad
X|J,J\ra=0\,,\quad
Y|J,J\ra=\sqrt{[2J]_{q}}|J,J-1\ra\,.
\label{eq:UQ_on_highest_weight}
\ee
Then
\begin{equation}
v|J,J\rangle =u q^{-H}|J,J\rangle
=\left[q^{-\frac{H^2}{2}-H}+\left(\sum_{l=1}^{\infty} \frac{(q-q^{-1})^{l}}{[l]_q!}q^{-l(l+1)/2} (S(Y))^{l}q^{\frac{1}{2}(-H^2+2lH)}X^l\right)q^{-H}\right]|J,J\rangle
=q^{-2J(J+1)}|J,J\rangle  \,,
\end{equation}   
where we have used the definition of $u$ given in \eqref{eq:def_v}. The first term in the square bracket is the $l=0$ term of the summation in the R and the second term acts on $|J,J\ra$ trivially due to \eqref{eq:UQ_on_highest_weight}. 
\end{proof}
Take its square root and define $\kappa_J$ by $\kappa_J^2=v^J$. 
The expression of $\kappa_I$ in the irreducible representation $\rho^J$:
\be 
\kappa_J=q^{-J(J+1)}\,.
\label{eq:v_kappa_J}
\ee
Here, we have chosen $\kappa_I$ to be the positive square root of $v^I$ so that the formulas in the rest of this section, \eg \eqref{eq:CG_normalization}, take the forms that matche the literature \cite{Alekseev:1994au,Alekseev:1994pa,Alekseev:1995rn}. 
The normalization of the CG maps is given by a set of $\kappa_I$'s and we prove it (in a different way from the literature) in the following proposition.  
\begin{prop}
Let the CG maps $C[IJ|K]$ and $C[IJ|K]^*$ for $\UQ$ ($|q|=1,q^p\neq 1$) act on the vector space bases as 
\begin{subequations}
\begin{align}
C[IJ|K](e^I_m\otimes e^J_n)&=\sum_{k}\mat{cc|c}{I&J&K\\m&n&k}_q e^K_k\,,
\label{eq:CG_action}
\\
C[IJ|K]^*(e^K_k)&=\sum_{m,n}\overline{\mat{cc|c}{I&J&K\\m&n&k}}_q e^I_m\otimes e^J_n
\equiv \sum_{m,n}\mat{cc|c}
{I&J&K\\m&n&k}_{q^{-1}} e^I_m\otimes e^J_n\,,
\label{eq:CGstar_action}
\end{align}
\end{subequations}
with $\mat{cc|c}{I&J&K\\m&n&k}_q$ and $\mat{cc|c}{I&J&K\\m&n&k}_{q^{-1}}$ being complex coefficients. 
Assume the following normalization and symmetry of the coefficients:
\begin{align}
&\sum_{m,n}\mat{cc|c}{I&J&K\\m&n&k}_q\mat{cc|c}{I&J&K'\\m&n&k'}_q=\delta_{KK'}\delta_{kk'}\,,
\label{eq:normalization_1}\\
&\sum_{k}\mat{cc|c}{I&J&K\\m&n&k}_q\mat{cc|c}{I&J&K\\m'&n'&k}_q=\delta_{mm'}\delta_{nn'}\,,
\label{eq:normalization_2}\\
&\mat{cc|c}{I&J&K\\m&n&k}_{q^{-1}} =(-1)^{I+J-K}\mat{cc|c}{J&I&K\\n&m&k}_q\,.
\label{eq:CG_qq-1}
\end{align}
Then the CG maps satisfy the following normalization relation
\be
C[IJ|K](R')^{IJ}C[IJ|K']^{*}= \delta_{KK'}\frac{\kappa_I \kappa_J}{\kappa_K} \Id_{K}\,.
\label{eq:CG_normalization}
\ee
\end{prop}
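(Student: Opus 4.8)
The plan is to prove the identity in two stages: first use Schur's lemma to show the left-hand side is $\delta_{KK'}\lambda_K\Id_K$ for some scalar, and then pin down $\lambda_K=\kappa_I\kappa_J/\kappa_K$. For the first stage I would show that $A_{KK'}:=C[IJ|K](R')^{IJ}C[IJ|K']^{*}$ intertwines $\rho^{K'}$ with $\rho^{K}$. The key auxiliary fact is $R'\Delta'(\xi)=\Delta(\xi)R'$, which I would derive by applying the flip $\sigma$ (an algebra automorphism of $\UQ\otimes\UQ$) to the quasitriangularity relation $R\Delta(\xi)=\Delta'(\xi)R$ implicit in \eqref{eq:DeltaP}, using $\sigma(R)=R'$ and $\sigma\circ\Delta'=\Delta$. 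Then, starting from $C[IJ|K']^{*}\rho^{K'}(\xi)=(\rho^{I}\otimes\rho^{J})\Delta'(\xi)C[IJ|K']^{*}$ (i.e.\ \eqref{eq:CG_def2}), commuting $(R')^{IJ}$ past $\Delta'(\xi)$ via this auxiliary identity, and closing with \eqref{eq:CG_def1}, one obtains $A_{KK'}\rho^{K'}(\xi)=\rho^{K}(\xi)A_{KK'}$. Irreducibility of the $\rho^{K}$ then forces $A_{KK'}=\delta_{KK'}\lambda_K\Id_K$.

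For the scalar, I would first extract as much as possible from the ribbon element together with a $*$-argument. Writing $(R')^{IJ}=(R'R)^{IJ}(R^{-1})^{IJ}$ and noting that $(R^{-1})^{IJ}C[IJ|K']^{*}$ intertwines $\rho^{K'}$ with the $\Delta$-representation (so its image lies in the $\Delta$-isotypic copy of $V^{K'}$), Lemma \ref{lemma:RpR_qgen} lets $(R'R)^{IJ}$ act there as the scalar $v_Iv_J/v_{K'}$; this reduces the claim to computing $C[IJ|K](R^{-1})^{IJ}C[IJ|K]^{*}$. Taking the Hilbert-space adjoint of $A_{KK}$, and using that $C[IJ|K]^{*}$ is the genuine adjoint of $C[IJ|K]$ (read off from \eqref{eq:CGstar_action}) together with $((R')^{IJ})^{\dagger}=(R^{-1})^{IJ}$ (a consequence of $R^{*}=R^{-1}$ under the flipped $*$-structure \eqref{eq:star_tensor_product}), I find $C[IJ|K](R^{-1})^{IJ}C[IJ|K]^{*}=\overline{\lambda_K}\Id_K$. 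Combining the two relations gives $\lambda_K/\overline{\lambda_K}=v_Iv_J/v_K=\kappa_I^{2}\kappa_J^{2}/\kappa_K^{2}$, so that $\lambda_K$ equals $\kappa_I\kappa_J/\kappa_K$ up to a positive magnitude and an overall sign, but no further.

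To fix the magnitude and sign I would evaluate $A_{KK}$ on the highest-weight vector $e^{K}_{K}$. There $C[IJ|K]^{*}(e^{K}_{K})$ is supported on the weight-$K$ states $e^{I}_{m}\otimes e^{J}_{n}$ with $m+n=K$; since $R'$ from \eqref{eq:R_def} preserves the total weight, $(R')^{IJ}$ stays inside this weight space, and reading off the $e^{K}_{K}$-coefficient through $C[IJ|K]$ expresses $\lambda_K$ as a double sum of products of $q$-Clebsch–Gordan coefficients weighted by matrix elements of $R'$. The orthonormality \eqref{eq:normalization_1}–\eqref{eq:normalization_2} and the symmetry \eqref{eq:CG_qq-1} are precisely what is needed to resum this to $q^{-I(I+1)-J(J+1)+K(K+1)}=\kappa_I\kappa_J/\kappa_K$, giving magnitude $1$ and sign $+$. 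As a consistency check, in the stretched case $K=I+J$ only the $e^{I}_{I}\otimes e^{J}_{J}$ term survives, and since $\rho^{J}(X)e^{J}_{J}=0$ annihilates every $l\geq1$ term of $R'$, the diagonal piece $q^{\frac12 H\otimes H}$ immediately yields $\lambda_{I+J}=q^{2IJ}=\kappa_I\kappa_J/\kappa_{I+J}$. I expect the main obstacle to be exactly this resummation of the $l\geq1$ terms of $R'$: the ribbon-plus-adjoint argument alone cannot fix the magnitude, so the explicit Clebsch–Gordan identities must be invoked, and some care is required because the orthonormality convention \eqref{eq:normalization_1} is the ``both-at-$q$'' (non-unitary) one, whereas $C[IJ|K]^{*}$ is defined through the conjugated coefficients in \eqref{eq:CGstar_action}.
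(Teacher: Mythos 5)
Your first stage is correct and is a genuinely different (and more conceptual) opening than the paper's: the paper never invokes Schur's lemma, working instead entirely at the level of explicit coefficients. Your derivation of $R'\Delta'(\xi)=\Delta(\xi)R'$ and the resulting intertwining $A_{KK'}\rho^{K'}(\xi)=\rho^K(\xi)A_{KK'}$ is sound, as is the factorization $(R')^{IJ}=(R'R)^{IJ}(R^{-1})^{IJ}$ combined with Lemma \ref{lemma:RpR_qgen} and the adjoint identity $((R')^{IJ})^{\dagger}=(R^{-1})^{IJ}$, which correctly yields $\lambda_K/\overline{\lambda_K}=v_Iv_J/v_K$.

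The gap is in the final step, and it is not a cosmetic one: the ribbon-plus-adjoint argument only determines $\lambda_K$ up to a sign and an undetermined positive magnitude, and the highest-weight resummation that is supposed to supply $|\lambda_K|=1$ and the sign is asserted rather than carried out (you verify only the stretched case $K=I+J$). The assertion that the orthonormality \eqref{eq:normalization_1}--\eqref{eq:normalization_2} and the symmetry \eqref{eq:CG_qq-1} ``are precisely what is needed to resum'' the $l\geq 1$ terms of $R'$ is not correct as stated: those axioms constrain the coefficients only bilinearly (note \eqref{eq:normalization_1} is $CC^{T}=\Id$, not $CC^{\dagger}=\Id$, since the coefficients are complex for $|q|=1$) and say nothing about how the off-diagonal part of $R'$ pairs distinct weight states. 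The missing input is exactly the braiding identity for $q$-Clebsch--Gordan coefficients, Eq.~\eqref{eq:CG_R} of the paper, which states that $R^{IJ}$ maps $C[IJ|K]$ to $(-1)^{I+J-K}q^{K(K+1)-I(I+1)-J(J+1)}$ times the flipped coefficient. The paper obtains access to this identity by first proving that the axioms \eqref{eq:normalization_1}--\eqref{eq:CG_qq-1}, together with the intertwining relations for $\rho^K(X)$ and $\rho^K(Y)$, force the coefficients to satisfy the standard recursion \eqref{eq:CG_recursion} and hence to coincide with the explicit $q$-CG coefficients \eqref{eq:CG_expression}, for which \eqref{eq:CG_R} is a known result. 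Without either reproducing that identification-plus-citation or performing the $q$-binomial resummation explicitly, your argument leaves $\lambda_K=\pm|\lambda_K|\,\kappa_I\kappa_J/\kappa_K$ and does not close. If you want to keep your structure, the cleanest repair is to insert the paper's step: show the coefficients are the standard ones via the recursion relations, then invoke \eqref{eq:CG_R} to evaluate the highest-weight matrix element for general $K$.
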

\begin{proof}
We first prove that $\mat{cc|c}{I&J&K\\m&n&k}_q$ satisfying the conditions \eqref{eq:normalization_1}--\eqref{eq:CG_qq-1} takes the same form of the CG coefficients for $\UQ$ with $q$ real. 

Let $\rho^I\otimes \rho^J(\Delta(X))$ act on the basis vector $ e^I_m\otimes e^J_n$ in $V^I\otimes V^J$. According to \eqref{eq:CG_action}, one obtains
\be
\begin{split}
&C[IJ|K]((\rho^I\otimes \rho^J(\Delta(X)))( e^I_m\otimes e^J_n) )\\
=&\sum_{k} \lb q^{n}\sqrt{[I-m]_q[I+m+1]_q}\mat{cc|c}{I&J&K\\m+1&n&k}_q
+q^{-m}\sqrt{[j-n]_q[j+n+1]_q}\mat{cc|c}{I&J&K\\m&n+1&k}_q \rb e^K_k\,.
\end{split}
\ee
By the definition \eqref{eq:CG_def1}, this is equivalent to 
\be
\rho^K(X)C[IJ|K](e^I_m\otimes e^J_n)
=\sum_k \sqrt{[K-k]_q[K+k+1]_q} \mat{cc|c}{I&J&K\\m&n&k}_q e^K_{k+1}\,.
\ee
Similarly, replacing $\rho^K(X)$ by $\rho^K(Y)$, a similar identity can be obtained. 
These are exactly the recursion relations of the CG coefficients of $\UQ$ with $q$ real \cite{Biedenharn:1996vv}:
\begin{multline}
\sqrt{[K\mp k]_q[K\pm k+1]}\mat{cc|c}{I&J&K\\m&n&k\pm 1}_q \\
=q^{-m}\sqrt{[J\pm n]_q[J\mp n+1]_q} \mat{cc|c}{I&J&K\\m&n\mp 1&k}_q
+q^{n}\sqrt{[I\pm m]_q[I\mp m+1]_q} \mat{cc|c}{I&J&K\\m\mp 1&n&k}_q\,.
\label{eq:CG_recursion}
\end{multline}
The recursion relation and the normalization \eqref{eq:normalization_1} and \eqref{eq:normalization_2} of the coefficients $\mat{cc|c}{I&J&K\\m&n&k}_q$ proves that it takes the same expression as the CG coefficient of $\UQ$ with $q$ real:
\begin{multline}
\mat{c c|c}{j_1 & j_2 & J\\ m_1 & m_2 & M}_{q}
= \Delta(j_{1}j_{2}J)\left([2J+1]_{q}![J+M]_{q}![J-M]_{q}![j_{1}+m_{1}]_{q}![j_{1}-m_{1}]_{q}![j_{2}+m_{2}]_{q}![j_{2}-m_{2}]_{q}!\right)^{1/2}\\
\sum_{s=0}^{j_1+j_2-J}\frac{q^{\frac{1}{2}(j_{1}+j_{2}-J)(j_{1}+j_{2}+J+1)+\frac{j_{1}m_{2}-j_{2}m_{1}}{2}}q^{s(j_{1}+j_{2}+J+1)}(-1)^{s}}{[s]_{q}![j_{1}+j_{2}-J-s]_{q}[j_{1}-m_{1}-s]_{q}![j_{2}+m_{2}-s]_{q}![J-j_{2}+m_{1}-s]_{q}![J-j_{1}-m_{2}+s]_{q}!}   \,,
\label{eq:CG_expression}
\end{multline}
where $\Delta(j_1j_2J)$ is defined as
\be
\Delta(j_1j_2J)=\left(\frac{[j_{1}-j_{2}+J]_{q}![J-j_{1}+j_{2}]_{q}![j_{1}+j_{2}-J]_{q}!}{[J+j_{1}+j_{2}+1]_{q}!}\right)^{1/2}.
\ee
The CG coefficients satisfy the following relation \cite{A.N.Kirillov:CGcoefficients}.
\be
\sum_{m,n}(R^{IJ})^{mn}_{m'n'}\mat{cc|c}{I&J&K\\m&n&k}_q
=(-1)^{I+J-K} q^{K(K+1)-I(I+1)-J(J+1)}\mat{cc|c}{J&I&K\\n'&m'&k}_q\,.
\label{eq:CG_R}
\ee
Combining \eqref{eq:CG_R} and the normalization \eqref{eq:normalization_1} of the CG coefficients, one proves the normalization relation \eqref{eq:CG_normalization}.
\end{proof}

From the normalization and $\Delta(e)=e \otimes e$, we obtain completeness for CG maps: 
\be
\sum_{K} \frac{\kappa_K}{\kappa_I \kappa_J} (R')^{IJ}C[IJ|K]^{*}C[IJ|K]= e^{I}\otimes e^{J} \,.
\label{eq:CG_completeness}
\ee
Due to the fact that the recoupling of CG coefficients produces the ``$6j$-symbols'' $\Mat{ccc}{ K & J & P\\ I & L & Q} _q$ of $\UQ$ \cite{Biedenharn:1996vv}, the sequence of CG maps follow the identity 
\be
 C[IP|L]C[JK|P]=\sum_{|I-J|\leq Q \leq I+J}
\Mat{ccc}{ K & J & P\\ I & L & Q} _qC[QK|L]C[IJ|Q] \,.
\label{eq:CG-6j-identity_1}
\ee
Such an identity takes the same shape as in the $\SU(2)$ recoupling theory but the CG maps and the $6j$-symbols are all $q$-deformed. 

We define the {\it quantum trace} of any element $X^{J}\in \End(V^{J})$ by
\be
\tr^{J}_{q}(X^{J})=\tr^{J}(X^{J}\rho^{J}(g))\,,   
\label{eq:quantum_trace_g}
\ee
where $g:=u^{-1}v$ is a group-like element satisfying
\be
\Delta(g)=g\otimes g\,,\quad
g^{*}=g^{-1}\,,\quad
S(g)=g^{-1}\,,\quad
g S(\xi)=S^{-1}(\xi)g   \,.
\label{eq:g_properties}
\ee
The antipode $S$ of $\UQ$ furnishes a conjugation in the set of equivalence classes of
irreducible representations. We use $[\Bar{J}]$ to denote the class conjugate to $[J]$. Another definition of quantum trace is to use the CG map $C[\Bar{J}J|0]:V^{\Bar{J}}\otimes V^{J}\to 0$ \cite{Alekseev:1994au}.
\be
Tr^J_q(X^J)=\frac{d_J}{v_J}C[\Bar{J}J|0]\stackrel{2}{X^J} (R')^{\Bar{J}J} C[\Bar{J}J|0]^*\,.
\label{eq:quantum_trace_CG}
\ee
This two definitions of quantum trace are equivalent. The proof is given in \cite{Alekseev:1994au}.  

 The group-like element $g$ is simply $q^{-H}$ since $v=uq^{-H}$. The quantum trace of $\Id_J$ defines the {\it quantum dimension}, denoted as $d_J$, of the irreducible representation $\rho^{J}$:
\be
d_J=\tr^{J}_{q}(\Id_J)\equiv tr^{J}(q^{-H})=[2J+1]_q\,,\quad 
d_I d_J=\sum_{K=|I-J|}^{I+J} d_K\,.
\label{eq:quantum_dimension}
\ee
In general, quantum dimension $d_J$ is different from the dimension of the carrier space $V^{J}$. The numbers $v^I$ and $d_I$ are symmetric under conjugation, i.e. $v^I=v^{\Bar{I}}$ and $d_I=d_{\Bar{I}}$ \cite{Alekseev:1994au}.

\subsection{$\TUQ$ as a weak quasitriangular ribbon quasi-Hopf-$*$ algebra}
\label{subsec:TUQ}

When $q$ is a root-of-unity, say $q^p=1 \,(p\in \N^+,p\geq2)$, $\UQ$ is no longer a Hopf algebra but a quasi-Hopf algebra. More importantly, the semi-simplicity is lost, bringing complexity to its algebraic structure and the representation theory. However, one can construct a ``truncated algebra'', denoted as $\TUQ$, that is canonically associated with $\UQ$ with $q$ a root-of-unity but {\it is} semi-simple. Even though the algebraic structure of $\TUQ$ is also more complicated than that of $\UQ$ with $q^p=1$, its representation theory is {\it almost} the same as in the case of $\UQ$ with $q^p\neq 1$ as described in Section \ref{subsec:algebraic_strucutre}, which brings enormous benefits in constructing the quantum theory. Therefore, in this section, we only sketch its algebraic structure as a weak quasitriangular ribbon quasi-Hopf-$*$ algebra and focus on the representation theory of $\TUQ$ (more precisely the difference in the representation theory from $\UQ$ with $q^p\neq 1$). 

Algebraically, $\TUQ$ is a weak quasitriangular ribbon quasi-Hopf-$*$ algebra generated by the same generators as $\UQ$. To distinguish notations from those of $\UQ$, we denote its coproduct, counit, antipode and quantum $\cR$-matrix as $\Delta^T,\epsilon^T,S^T$ and $R^T$ respectively. As a quasi-Hopf algebra, the associativity is relaxed up to conjugation controled by an element $\varphi\in \TUQ\otimes \TUQ\otimes \TUQ$:
\be
(\Id\otimes \Delta^T)\Delta^T(\xi)\varphi =\varphi(\Delta^T \otimes \id)\Delta^T(\xi)\,, \quad \forall\xi\in \TUQ\,.
\ee
Such $\varphi$ satisfies the identity
\be
(\Id\otimes \Id\otimes \Delta^T)(\varphi)(\Delta^T\otimes \Id\otimes \Id)(\varphi)=(e\otimes \varphi)(\Id\otimes \Delta^T \otimes\Id)(\varphi)(\varphi\otimes e)\,,
\ee
where $e$ is the identity of $\TUQ$. 
Moreover, the compatibility of the antipode $S^T$ with the coproduct and counit is given additionally in terms of $\alpha,\beta\in \TUQ$:
\begin{subequations}
\begin{align}
\sum_a S^T(\xi^{(1)}_a)\alpha\xi^{(2)}_a=\epsilon^T(\xi)\alpha\,,\quad &\quad
\sum_a \xi^{(1)}_a\beta S^T(\xi^{(2)}_a)=\epsilon^T(\xi)\beta\,,\quad \forall \xi\in\TUQ\,,\\
\sum_a \varphi^{(1)}_a\beta S^T(\varphi^{(2)}_a)\alpha \varphi^{(3)}_a=e\,,\quad & \quad \sum_a S^T(\phi^{(1)}_a)\alpha\phi^{(2)}_a\beta S^T(\phi^{(3)}_a)=e\,,
\end{align}
\end{subequations}
where we have used the standard notations $\Delta(\xi)=\sum_a\xi^{(1)}_a\otimes \xi^{(2)}_a$, $\varphi=\sum_a \varphi^{(1)}_a\otimes\varphi^{(2)}_a\otimes\varphi^{(3)}_a$ and $\phi:=\varphi^{-1}=\sum_a\phi^{(1)}_a\otimes\phi^{(2)}_a\otimes\phi^{(3)}_a$. 

Its quasitriangular structure is given by $R^T$ satisfying the quasi-QYBE
\be
R^T_{12} \varphi_{312} R^T_{13} \varphi_{132}^{-1} R^T_{23} \varphi = \varphi_{321} R^T_{23} (\varphi_{231}^{-1}) R^T_{13} \varphi_{213} R^T_{12}\,.
\ee
Here, $\varphi_{312}=\sum_a\varphi^{(3)}_a\otimes\varphi^{(1)}_a\otimes\varphi^{(2)}_a$, \etc. 
The compatibility of $R^T$ and $\Delta^T$ is the following. 
\begin{subequations}
\begin{align}
   \Delta^{T\,'}(\xi) R^T&=R^T \Delta^T(\xi)\,,\\
(\Id \otimes \Delta^T)(R^T)&=\varphi_{231}^{-1}R^T_{13}\varphi_{213}R^T_{12}\varphi^{-1}\,,\\
(\Delta^T \otimes \Id)(R^T)&=\varphi_{312}R^T_{13}\varphi_{132}^{-1}R^T_{23}\varphi \,,
\end{align}
\end{subequations}
where $\Delta^{T\,'}=\sigma\circ\Delta^T$. 

$\TUQ$ is a {\it weak} quasi-Hopf algebra in the sense that $\Delta^T(e)\neq e\otimes e$. Instead, the coproduct of $e$ is given in terms of $\varphi$:
\be
\Delta^T(e)=( \epsilon^T\otimes \Id \otimes \Id)(\varphi)=(\Id\otimes \epsilon^T \otimes \Id)(\varphi)=( \Id\otimes \Id \otimes \epsilon^T)(\varphi)\,.
\ee
Moreover, we only have quasi-inverse for $\varphi$ and $R^T$ such that
\begin{align}
\varphi \varphi^{-1}=(\Id\otimes\Delta^T)\Delta^T(e)\,,\quad&\quad\varphi^{-1}\varphi =(\Delta^T\otimes \Id)\Delta^T(e)\,,\\
R^TR^{T\,-1}=\Delta^{T\,'}(e)\,,\quad&\quad R^{T\,-1}R^T=\Delta^T(e)\,.
\end{align}

The $*$-structure of $\TUQ$ satisfies the same properties as in \eqref{eq:star_tensor_product} and \eqref{eq:star_coalgbera}. In addition, one needs to define the $*$-operation for $\varphi,\alpha,\beta$ so that it is consistent with all the above relations:   
\be
\varphi^{*}=(\sum_a \varphi_{a}^{(1)}\otimes \varphi_{a}^{(2)} \otimes \varphi_{a}^{(3)})^{*}=\sum_a (\varphi_{a}^{(3)})^{*}\otimes (\varphi_{a}^{(2)})^{*} \otimes (\varphi_{a}^{(1)})^{*}=\varphi\,,\quad
\alpha^{*}=\beta\,.
\ee
Finally, the ribbon element of $\TUQ$ is defined in the same way as in \eqref{eq:def_v} but with $S^T,\epsilon^T,\Delta^T$ and $R^T$ \cite{Altschuler:1992zz}. 

For our purpose, the precise formulas of $\varphi,\alpha$ and $\beta$ are not important hence the above definition of $\TUQ$ as a weak quasitriangular ribbon quasi-Hopf-$*$ algebra is only formal. What is relevant to us is the representation theory of $\TUQ$, which shares great similarity with that of $\UQ$ with $q^p\neq 1$. In practice, we build up the representation theory of $\TUQ$ from that of $\UQ$, which we describe below. 

To begin with, we define the physical representations $\rho_{\phys}^J$ among all the irreducible representations $\rho^J$ of $\UQ$ as those with labels $J=0,\f12,\cdots, \f{k}{2}$  \footnote{The positive integer $k$ is the Chern-Simons level for Chern-Simons theory described by $\UQ$ with $q^p=1$. In general, $p=k+n$ with $n$ being the dual Coxeter number. For the case of $\SU(2)$ being the gauge group, $n=2$.}. 
Then 
\be
\TUQ\equiv \UQ/\mathcal{J}\,,
\ee
where $\mathcal{J}$ is the ideal which is annihilated by all the physical representations $\rho_\phys^J$.
$k$ then provides a cutoff for the physical representation which renders finiteness in the representation theory of $\TUQ$. 

In a tensor product space, such a cutoff can be described by a projector $P\in\TUQ\otimes\TUQ$ such that $P^{IJ}:=\rho^I\otimes\rho^J(P)$ projects on the physical representations $\rho_\phys^{K}, |I-J|\leq K \leq \min \{I+J,k-I-J \}$.
The quantum $\cR$-matrix and the coproduct $\Delta^T$ of $\TUQ$ are related to those of $\UQ$ by this projector through\footnotemark{}
\be
 R^T=RP\,,\quad 
 \Delta^T(\xi)=P \Delta(\xi)\,,\quad \forall \xi\in\TUQ\,.
\ee

In the remaining, we only work on the physical representations and omit the subscript $\phys$ for conciseness. 

To recover semi-simplicity, one introduces the truncated tensor product $\underline{\boxtimes}$ for two physical representations which correspond to $\Delta^T$:
\be
 \rho^I\underline{\boxtimes}\rho^J=\bigoplus_{|I-J|\leq K \leq u(I,J)} \rho^{K}\,,\quad
 u(I,J)=\min \{I+J,k-I-J \}    \,.
\ee
The uplifting fact is that, as long as we stay in the physical representation, the recoupling theory for two representations remains the same as in $\UQ$ with $q^p\neq 1$. That is, the CG maps $C[IJ|K], C[IJ|K]^*$ described in Section \ref{subsec:algebraic_strucutre} can be used safely in the representation theory of $\TUQ$ for $0\leq I,J,K\leq \f{k}{2}$. When several CG maps are acted in a sequence, the order of recouplings matters due to non-associativity. In this case, the representation of $\varphi$ comes into play. Define $\varphi^{IJK}:=\rho^I\otimes\rho^J\otimes\rho^K(\varphi)$. The following identity holds.
\be
 C[IP|L]C[JK|P]\varphi^{IJK}=\sum_{|I-J|\leq Q\leq u(I,J)}
\Mat{ccc}{ K & J & P\\ I & L & Q} _qC[QK|L] C[IJ|Q]\,,
\label{eq:CG-6j-identity}
\ee
where $\Mat{ccc}{ K & J & P\\ I & L & Q} _q$ is the 6j-symbol of $\UQ$ evaluated in physical representations. This is a modification version of \eqref{eq:CG-6j-identity_1}. The proof and more details can be found in \cite{MACK1992185}.

\section{Representation theory of loop algebra $\cL_{0,1}$ for $|q|=1,q^p\neq 1$}
\label{app:reps_q_generic}

Since the loop algebra $\cL_{0,1}$ is isomorphic to $\UQ$ for $|q|=1,q^p\neq1$, its irreducible representation is the same as the one of $\UQ$, in particular, we focus on the unitary irreducible representations of $\UQ$ labeled by $J\in \N/2$. The representation of the generator $\bX^I$ of $\cL_{0,1}$ in the space $V^J$ is given by:
\be
\rho^{J}(\bX^I)=\kappa_I^{-1} (R'R)^{IJ}\,. 
\label{eq:rep_X}
\ee
Moreover, the isomorphism can be extended to $\cL_{0,4}$. The representations of $\cL_{0,4}$ are enumerated by tuples $(K_1, K_2, K_3, K_4)$ of irreducible representations of $\UQ$, where $K_\nu$ is the representation assigned to the $\nu$-th puncture realized in the vector space
\be
V^{K_1K_2K_3K_4}=V^{K_1}\otimes V^{K_2}\otimes V^{K_3}\otimes V^{K_4}\,.
\label{eq:V1234}
\ee
The expression of the generator $\bX_{\nu}^{I}$ of $\cL_{0,4}$ in the representation space $V^{K_1K_2K_3K_4}$ is given by: 
\be
\rho^{K_1K_2K_3K_4}(\bX_{\nu}^{I})=\kappa_{I}^{-1} \lb R'_{12}R'_{13}\cdots R'_{1\nu} R'_{1,\nu+1}R_{1,\nu+1}R^{'\,-1}_{1\nu}\cdots R^{'\,-1}_{13}R^{'\,-1}_{12}\rb^{I K_1..K_\nu}\otimes e^{K_{\nu+1}}\otimes...\otimes e^{K_{4}}\,. 
\ee
The invariant sub-algebra $\cA_{0,4}$ equivalently can be defined as a commutant of the image of $\UQ$ under the embedding, \ie $\cA_{0,4}=\{A|\bX^I_0 A=A\bX^I_0\,,\,\forall\, \bX^I_0\in\cL_{0,4}\}$. 
 The representation $\rho^{K_1K_2K_3K_4}$ can be naturally restricted to invariant sub-algebra $\cA_{0,4}$ of $\cL_{0,4}$. Such representation is reducible. The tensor product of representations is decomposable due to the semi-simplicity of $\UQ$ for $q^p\neq 1$:
 \be
 V^{K_1K_2K_3K_4}=\bigoplus_{J\text{ admissible}}V^J \otimes W^J(K_1, K_2, K_3, K_4)\,. 
 \ee   
Take the representation $ V^{K_1K_2K_3K_4}(\bX_0^I)$ of $\bX_0^I$ which can be decomposed into irreducible representations as above. In each subspace $V^J$, the commutant of $V^J(\bX_0^I)$, which gives the $J$ representation of $\cA_{0,4}$, must live in the multiplicity space $W^J(K_1, K_2, K_3, K_4)$. 

In addition, by the definition \eqref{eq:def_M0} of the moduli algebra,  
the representation of the moduli algebra is further restricted to 
$W^0(K_1,K_2,K_3,K_4)$.
 So we obtain the irreducible representation of moduli algebra realized in the space \cite{Alekseev:1993gh}
 \be
 W^0(K_1, K_2, K_3, K_4)=\Inv_q(V^{K_1K_2K_3K_4})\,.
 \ee
 To endow the representation space with the structure of  Hilbert space, one needs to define the inner product, which we choose to be the same as \eqref{eq:Scalar_reps}\footnote{In general, the inner product will be different than the one we choose here when one changes the star structure of the algebra.} since we focus on $|q|=1,q^p\neq 1$.

\section{Isomorphism between graph algebra $\cL_{0,4}$ and $\UQ^{\otimes 4}$}
\label{app:isomorphism}
Recall that the isomorphism between loop algebra and $\UQ$ described in Section \ref{subsec:moduli_algebra} allows one to set $\bX^I=\kappa_I^{-1}\bX_{+}^I(\bX_{-}^{-1})^{I}$, where $\bX_{+}^{I}=(R')^I$ and $\bX^{I}_{-}=(R^{-1})^{I}$ \cite{Alekseev:1993gh}. We have the following commutation relation for $\bX_{+}^{I},\bX^{J}_{-}$.
\be
\begin{aligned}
\stackrel{1}{\bX_{\pm}^{I}}\stackrel{2}{\bX_{\pm}^{J}}R^{IJ}=R^{IJ}\stackrel{2}{\bX_{\pm}^{J}}\stackrel{1}{\bX_{\pm}^{I}}&,\quad(R')^{IJ}\stackrel{1}{\bX_{\pm}^{I}}\stackrel{2}{\bX_{\pm}^{J}}=\stackrel{2}{\bX_{\pm}^{J}}\stackrel{1}{\bX_{\pm}^{I}}(R')^{IJ},\\
\stackrel{1}{\bX_{-}^{I}}\stackrel{2}{\bX_{+}^{J}}R^{IJ}=R^{IJ}\stackrel{2}{\bX_{+}^{J}}\stackrel{1}{\bX_{-}^{I}}&,\quad (R')^{IJ}\stackrel{1}{\bX_{+}^{I}}\stackrel{2}{\bX_{-}^{J}}=\stackrel{2}{\bX_{-}^{J}}\stackrel{1}{\bX_{+}^{I}}(R')^{IJ}\,.
\label{eq:commutation_X}
\end{aligned}
\ee
The element $\kappa_I^{-1}X_{+}^I(X_{-}^{-1})^{I}$ satisfies both the commutation relation \eqref{eq:commutation_rep_1} and the functionality condition \eqref{eq:holonomy_recoupling}
of loop algebra. With the isomorphism defined above, we see that the matrix elements of $M^I$ for any $I$ are generated by the generators $q^{\frac{H}{2}}, X, Y$ of $\UQ$. This specifies the isomorphism between loop algebra and $\UQ$ for $q$ not a root of unity. 
(We remind the readers that this isomorphism is, strictly speaking, between $\cL_{0,1}$ and $\UQsl$ not necessarily with a $*$-structure to give $\UQ$, but it does not affect the discussion here as the $*$-structure is not involved when we use the isomorphism at the algebraic level. See footnote \ref{footnote:isomorphism}.)
\begin{lemma}
The isomorphism  $\bX^I=\kappa^{-1}_I\bX^I_+(\bX^{-1}_-)^I$ satisfies the ``loop equation'':
\be
(R^{-1})^{IJ}\stackrel{1}{\bX^I}R^{IJ}\stackrel{2}{\bX^J}=\stackrel{2}{\bX^J}(R^{\prime})^{IJ}\stackrel{1}{\bX^I}(R^{\prime-1})^{IJ}\;.
\label{eq:loop_eqn}
\ee
\end{lemma}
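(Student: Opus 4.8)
The plan is to substitute the factorization $\bX^I=\kappa_I^{-1}\bX_+^I(\bX_-^{-1})^I$ (and likewise for $\bX^J$) into both sides of \eqref{eq:loop_eqn} and to reduce the claim to a purely combinatorial identity among the half-monodromies $\bX_\pm$ and the matrices $R^{IJ},(R')^{IJ}$, with every input supplied by the exchange relations \eqref{eq:commutation_X}. Since $\kappa_I$ and $\kappa_J$ are central scalars (by Schur's lemma $\rho^I(\kappa)=\kappa_I\Id_I$ with $\kappa_I=q^{-I(I+1)}$), the common prefactor $\kappa_I^{-1}\kappa_J^{-1}$ pulls out of both sides and cancels. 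Writing $\stackrel{1}{\bX_\pm^I}$ for the half-monodromy occupying the first matrix leg and $\stackrel{2}{\bX_\pm^J}$ for the second, the identity to be proven becomes
\be
(R^{-1})^{IJ}\,\stackrel{1}{\bX_+^I}\,\stackrel{1}{(\bX_-^{-1})^I}\,R^{IJ}\,\stackrel{2}{\bX_+^J}\,\stackrel{2}{(\bX_-^{-1})^J}
=\stackrel{2}{\bX_+^J}\,\stackrel{2}{(\bX_-^{-1})^J}\,(R')^{IJ}\,\stackrel{1}{\bX_+^I}\,\stackrel{1}{(\bX_-^{-1})^I}\,(R'^{-1})^{IJ}\,.
\ee

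Next I would extract the single-inverse exchange relations implied by \eqref{eq:commutation_X}. From the mixed relation $\stackrel{1}{\bX_-^I}\stackrel{2}{\bX_+^J}R^{IJ}=R^{IJ}\stackrel{2}{\bX_+^J}\stackrel{1}{\bX_-^I}$, left- and right-multiplication by $\stackrel{1}{(\bX_-^{-1})^I}$ gives $\stackrel{1}{(\bX_-^{-1})^I}R^{IJ}\stackrel{2}{\bX_+^J}=\stackrel{2}{\bX_+^J}R^{IJ}\stackrel{1}{(\bX_-^{-1})^I}$; from the mixed relation $(R')^{IJ}\stackrel{1}{\bX_+^I}\stackrel{2}{\bX_-^J}=\stackrel{2}{\bX_-^J}\stackrel{1}{\bX_+^I}(R')^{IJ}$ one obtains $\stackrel{2}{(\bX_-^{-1})^J}(R')^{IJ}\stackrel{1}{\bX_+^I}=\stackrel{1}{\bX_+^I}(R')^{IJ}\stackrel{2}{(\bX_-^{-1})^J}$; and inverting the $\bX_-\bX_-$ relation $(R')^{IJ}\stackrel{1}{\bX_-^I}\stackrel{2}{\bX_-^J}=\stackrel{2}{\bX_-^J}\stackrel{1}{\bX_-^I}(R')^{IJ}$ and conjugating by $(R')^{IJ}$ yields $(R')^{IJ}\stackrel{2}{(\bX_-^{-1})^J}\stackrel{1}{(\bX_-^{-1})^I}=\stackrel{1}{(\bX_-^{-1})^I}\stackrel{2}{(\bX_-^{-1})^J}(R')^{IJ}$.

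With these in hand I would carry out the reduction. On the left-hand side the first auxiliary relation moves $R^{IJ}$ rightward past $\stackrel{1}{(\bX_-^{-1})^I}$ until it meets $\stackrel{2}{\bX_+^J}$, so that the sandwich $(R^{-1})^{IJ}\stackrel{1}{\bX_+^I}\stackrel{2}{\bX_+^J}R^{IJ}$ collapses to $\stackrel{2}{\bX_+^J}\stackrel{1}{\bX_+^I}$ by the $\bX_+\bX_+$ relation of \eqref{eq:commutation_X}; this brings the left-hand side to $\stackrel{2}{\bX_+^J}\stackrel{1}{\bX_+^I}\stackrel{1}{(\bX_-^{-1})^I}\stackrel{2}{(\bX_-^{-1})^J}$. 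On the right-hand side I cancel the common leading $\stackrel{2}{\bX_+^J}$ and then use the second and third auxiliary relations to commute $\stackrel{2}{(\bX_-^{-1})^J}$ leftward through $(R')^{IJ}\stackrel{1}{\bX_+^I}\stackrel{1}{(\bX_-^{-1})^I}(R'^{-1})^{IJ}$, whereupon the conjugating pair $(R')^{IJ}\cdots(R'^{-1})^{IJ}$ annihilates and the right-hand side reduces to $\stackrel{1}{\bX_+^I}\stackrel{1}{(\bX_-^{-1})^I}\stackrel{2}{(\bX_-^{-1})^J}$. Matching the two sides (both equal $\stackrel{1}{\bX_+^I}\stackrel{1}{(\bX_-^{-1})^I}\stackrel{2}{(\bX_-^{-1})^J}$ after cancellation of $\stackrel{2}{\bX_+^J}$) establishes the identity.

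The main obstacle here is purely organizational rather than conceptual: one must track carefully which matrix leg each factor occupies and apply each relation of \eqref{eq:commutation_X} with the correct chirality ($R$ versus $R'$) and orientation (direct versus inverted), since a single misplaced transpose spoils the bookkeeping. No structural input is needed beyond \eqref{eq:commutation_X}, the invertibility of $R$, $R'$ and $\bX_\pm$, and the centrality of $\kappa_I$; in particular the quasi-Hopf subtleties of the root-of-unity case never intervene, because the isomorphism $\cL_{0,1}\cong\UQ$ underlying this lemma belongs to the $|q|=1,\ q^p\neq1$ branch.
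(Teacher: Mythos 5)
Your proposal is correct and follows essentially the same route as the paper's proof: substitute the factorization $\bX^I=\kappa_I^{-1}\bX_+^I(\bX_-^{-1})^I$, cancel the central $\kappa$ prefactors, and shuffle the half-monodromies using the exchange relations \eqref{eq:commutation_X} together with their inverted forms. The only (immaterial) difference is bookkeeping: the paper transforms the left-hand side all the way into the right-hand side by inserting $(R')^{IJ}(R'^{-1})^{IJ}$ midway, whereas you bring both sides to a common normal form and cancel the invertible leading factor $\stackrel{2}{\bX_+^J}$.
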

\begin{proof}
Let us start from the {\it l.h.s.} of the equation, which can be explicitly written as
\begin{multline}
 (R^{-1})^{IJ}(\stackrel{1}{\bX_+^I}(\stackrel{1}{\bX_{-}^{-1}})^{I})R^{IJ}(\stackrel{2}{\bX_+^I}(\stackrel{2}{\bX_{-}^{-1}})^{I})
=(R^{-1})^{IJ}\stackrel{1}{\bX_+^I}(\stackrel{2}{\bX_+^I})R^{IJ}(\stackrel{1}{\bX_{-}^{-1}})^{I}(\stackrel{2}{\bX_{-}^{-1}})^{I}\\
=\stackrel{2}{\bX_+^I}\stackrel{1}{\bX_+^I}(R')^{IJ}(R'^{-1})^{IJ}(\stackrel{1}{\bX_{-}^{-1}})^{I}(\stackrel{2}{\bX_{-}^{-1}})^{I}
=\stackrel{2}{\bX_+^I}\stackrel{1}{\bX_+^I}(R')^{IJ}(\stackrel{2}{\bX_{-}^{-1}})^{I}(\stackrel{1}{\bX_{-}^{-1}})^{I}(R'^{-1})^{IJ}\\
=\stackrel{2}{\bX_+^I}(\stackrel{2}{\bX_{-}^{-1}})^{I}(R')^{IJ}\stackrel{1}{\bX_+^I}(\stackrel{1}{\bX_{-}^{-1}})^{I}(R'^{-1})^{IJ}\,,
\end{multline}
where the last expression is the {\it r.h.s.} of \eqref{eq:loop_eqn}. 
We apply relation \eqref{eq:commutation_X} several times to obtain the final result.
\end{proof}
\begin{lemma}
The isomorphism  $\bX^I=\kappa^{-1}_I\bX^I_+(\bX^{-1}_-)^I$ satisfies the functoriality condition.
\be
\stackrel{1}{\bX^{I}}R^{IJ}\stackrel{2}{\bX^{J}}=\sum_{K}C[IJ|K]^{*}\bX^{K}C[IJ|K]\;.
\ee
\end{lemma}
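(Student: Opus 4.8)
The plan is to verify the recoupling relation \eqref{eq:holonomy_recoupling} for the explicit realization $\bX^I=\kappa_I^{-1}\bX_+^I(\bX_-^{-1})^I=\kappa_I^{-1}(R')^IR^I$, working inside $\End(V^I)\otimes\End(V^J)\otimes\UQ$ with legs labelled $1=V^I$, $2=V^J$, $3=\UQ$ (the common abstract algebra factor of the two holonomies). First I would substitute $\bX_+=R'$, $\bX_-^{-1}=R$ and write the left-hand side as a product of five braidings, $\stackrel{1}{\bX^I}R^{IJ}\stackrel{2}{\bX^J}=\kappa_I^{-1}\kappa_J^{-1}R_{31}R_{13}R_{12}R_{32}R_{23}$, and likewise expand the right-hand side as $\sum_K\kappa_K^{-1}C[IJ|K]^*(R')^KR^K\,C[IJ|K]$, in which the Clebsch--Gordan maps act only on legs $1,2$ while $R$ and $R'$ connect the fused leg with the algebra leg $3$.

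The core of the argument is a pair of fusion identities relating single braidings on the $V^K$-leg to fused braidings on $V^I\otimes V^J$. These follow from the quasitriangular coproduct axioms $(\Delta\otimes\Id)(R)=R_{13}R_{23}$ and $(\Id\otimes\Delta)(R)=R_{13}R_{12}$, their $\sigma$-images for $R'$, and the intertwining properties \eqref{eq:CG_def1} and \eqref{eq:CG_def2}. Concretely, $C[IJ|K]$ turns $R^K$ acting on the $V^K$-leg into the fused pair $R_{13}R_{23}$, i.e. $R^K\,C[IJ|K]=(C[IJ|K]\otimes e)\,R_{13}R_{23}$, while $C[IJ|K]^*$ does the dual job for $(R')^K$ through $\Delta'$. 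Applying both to each summand of the right-hand side replaces the single-$K$ braidings by $K$-independent fused braidings on legs $1,2,3$ sandwiching the projector-type combination $C[IJ|K]^*C[IJ|K]$.

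Because the fused braidings no longer depend on $K$, I would then pull them outside the sum and collapse $\sum_K$ with the completeness relation \eqref{eq:CG_completeness}. The delicate point is that $\bX^K$ carries $\kappa_K^{-1}$ whereas completeness weights $C[IJ|K]^*C[IJ|K]$ by $\kappa_K$; to reconcile the two I would trade $\kappa_K^{-1}$ for $\kappa_K$ using $v_K=\kappa_K^2$ together with the ribbon identity $R'R=(v\otimes v)\Delta(v^{-1})$ from \eqref{eq:def_v}, which rewrites $\bX^K=\kappa_K(\Id_K\otimes v)(\rho^K\otimes\Id)\Delta(v^{-1})$ and makes its comultiplicative structure manifest. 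After the sum is done, what remains is to reorder the resulting product of braidings into $R_{31}R_{13}R_{12}R_{32}R_{23}$, for which I would invoke the commutation relations \eqref{eq:commutation_X} repeatedly, exactly as in the proof of the preceding loop-equation lemma \eqref{eq:loop_eqn}.

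The step I expect to be the main obstacle is precisely this normalization bookkeeping: tracking how the $\kappa_K^{\pm1}$ and $v$-factors generated by the ribbon identity, the CG completeness and normalization relations \eqref{eq:CG_completeness} and \eqref{eq:CG_normalization}, and the left/right operator orderings conspire to produce exactly the prefactor $\kappa_I^{-1}\kappa_J^{-1}$ and the correct sequence of five braidings. Deriving the $R'$-sector coproduct identities (the $\Delta'$-twisted analogues of the quasitriangular axioms) and keeping the Sweedler legs of $\Delta(v^{-1})$ straight while they interlace with the intertwiners is where the bulk of the genuine work lies; once these are organized, the matching with the left-hand side is forced by coassociativity and the centrality of $v$.
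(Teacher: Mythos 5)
Your proposal is correct and is essentially the paper's own argument run in the opposite direction: the paper starts from the left-hand side, reorders the five $R$-matrices via quasitriangularity and $\Delta'=R\Delta R^{-1}$ into $R_{12}(\Delta\otimes\Id)(R'R)$, expands $R_{12}^{IJ}=\sum_L\frac{\kappa_I\kappa_J}{\kappa_L}C[IJ|L]^*C[IJ|L]$, and pushes $(\Delta\otimes\Id)(R'R)$ through the intertwiners to land on $\sum_L\kappa_L^{-1}C[IJ|L]^*(R'R)^LC[IJ|L]$, which is exactly your fusion-plus-completeness step read backwards. Your use of the ribbon identity to reconcile the $\kappa_K^{\pm1}$ weights is a sound way to justify the CG expansion of $R_{12}^{IJ}$ that the paper simply asserts, so the normalization bookkeeping you flag as the main obstacle does close.
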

\begin{proof}
 Let us start from the {\it l.h.s.} of the equation:   
\be
\begin{split}
 \kappa^{-1}_I\kappa^{-1}_JR'^I_{13}R^I_{13}R^{IJ}_{12}R'^J_{23}R^J_{23}
 &=\kappa^{-1}_I\kappa^{-1}_JR'^I_{13}R'^J_{23}R^{IJ}_{12}R^I_{13}R^J_{23}\\
 &=\kappa^{-1}_I\kappa^{-1}_J(\Delta'\otimes \Id)(R')R^{IJ}_{12}R^I_{13}R^J_{23}\\
 &=\kappa^{-1}_I\kappa^{-1}_JR^{IJ}_{12}R'^I_{23}R'^J_{13}R^I_{13}R^J_{23}\\
 &=\kappa^{-1}_I\kappa^{-1}_JR^{IJ}_{12}(\rho^I\otimes\rho^J\otimes\Id)(\Delta\otimes \Id)(R'R)\\
 &=\kappa^{-1}_I\kappa^{-1}_J\sum_{L}\frac{\kappa_{I}\kappa_{J}}{\kappa_{L}}C[IJ|L]^{*}C[IJ|L](\rho^I\otimes\rho^J\otimes\Id)(\Delta\otimes \Id)(R'R)\\
 &=\sum_{L}\kappa^{-1}_LC[IJ|L]^{*}(R'R)^{L}C[IJ|L]\\
 &=\sum_{L}C[IJ|L]^{*}\bX^{L}C[IJ|L]\;.
\end{split}
\label{eq:proof_functorial}
\ee
In the first equality, we have used the quasi-triangularity $(\Id\otimes \Delta)(R)=R_{13}R_{12}$ and \eqref{eq:DeltaP}. Then we have applied several times of the quasi-triangularity and \eqref{eq:QYBE}. For the fifth equality, we have rewritten $R_{12}^{IJ}$ in terms of the CG maps, \ie
\be
\sum_{L}\frac{\kappa_{I}\kappa_{J}}{\kappa_{L}}C[IJ|L]^{*}C[IJ|L]=R_{12}^{IJ}\;.
\ee
The second to the last line of \eqref{eq:proof_functorial} is obtained by using the definitions \eqref{eq:CG_def1} of CG maps.
\end{proof}

The $*$-structure of loop algebra is defined as:
\be
(\bX^{I})^{*}=\kappa^{-1}\lb R^{I}(\bX^{I})^{-1}(R^{-1})^{I}\rb\kappa\;,
\label{eq:X_star}
\ee
where $\kappa$ is the central element of $\UQ$. 
This $*$-structure is induced by the $*$-structure of $\UQ$ for $|q|=1, q^p\neq 1$ in the sense that the $*$-operation on the matrix elements of $\bX^I$ is closed. Here, we illustrate this induction in the fundamental representation. 
Recall that $\bX^I=\kappa^{-1}_I\bX^I_+(\bX_-^{-1})^I$ with $\bX^I_+={R'}^I$ and $\bX_-=(R^{-1})^I$. Then
\be
X^{\frac{1}{2}}=\kappa_{\f12}^{-1}\mat{cc}{q^H & (q-q^{-1})q^{-\frac{1}{2}}q^{\frac{1}{2}H}Y\\
(q-q^{-1})q^{-\frac{1}{2}}Xq^{\frac{1}{2}H}& (q-q^{-1})^{2}q^{-1}XY+q^{-H}}\,.
\ee
We then calculate the {\it l.h.s.} of \eqref{eq:X_star} that
\be
\begin{aligned}
(\bX^{\frac{1}{2}})^{*}&=\kappa_{\f12}\mat{cc}{(q^{H})^{*} & 
((q-q^{-1})q^{-\frac{1}{2}}q^{\frac{1}{2}H}Y)^{*} \\((q-q^{-1})q^{-\frac{1}{2}}Xq^{\frac{1}{2}H})^{*} & ((q-q^{-1})^{2}q^{-1}XY+q^{-H})^{*}}\\
&=\kappa_{\f12}\mat{cc}{q^{-H} &
(q^{-1}-q)q^{\frac{1}{2}}Xq^{-\frac{H}{2}} \\(q^{-1}-q)q^{\frac{1}{2}}q^{-\frac{H}{2}}Y
& (q^{-1}-q)^{2}qXY+q^{H}}\,,
\end{aligned}
\label{eq:lhs_X}
\ee
where we have used the fact that $(\kappa_I)^*=\kappa^{-1}_I$. 
On the other hand, the {\it r.h.s.} of \eqref{eq:X_star} at the fundamental representation reads
\be\begin{split}
\kappa^{-1}\lb R^{\f12}(\bX^{\f12})^{-1}(R^{-1})^{\f12}\rb\kappa
&=\kappa^{-1}\kappa_{\f12}\mat{cc}{q^{-\frac{1}{2}H} & 0\\
(q^{-1}-q)q^{\frac{1}{2}}X & q^{\frac{1}{2}H}}
\mat{cc}{q^{-\frac{1}{2}H} & (q^{-1}-q)q^{\frac{1}{2}}Y\\
0 & q^{\frac{1}{2}H}}\kappa\\
&=\kappa^{-1}\kappa_{\f12}\mat{cc}{
q^{-H} &
(q^{-1}-q)q^{\frac{1}{2}}q^{-\frac{H}{2}}Y \\
(q^{-1}-q)q^{\frac{1}{2}}Xq^{-\frac{H}{2}} 
& (q^{-1}-q)^{2}qXY+q^{H}
}\kappa\\
&=\kappa_{\f12}\mat{cc}{
q^{-H} &
(q^{-1}-q)q^{\frac{1}{2}}q^{-\frac{H}{2}}Y \\
(q^{-1}-q)q^{\frac{1}{2}}Xq^{-\frac{H}{2}} 
& (q^{-1}-q)^{2}qXY+q^{H}
}\,.
\end{split}
\label{eq:rhs_X}
\ee
In the last step, we have eliminated $\kappa$ by the fact that $\kappa$ is a central element of $\UQ$ and thus commutes with all the generators of $\UQ$. 
Comparing \eqref{eq:lhs_X} and \eqref{eq:rhs_X}, we see that 
$(\bX_{ab}^{\frac{1}{2}})^{*}=\kappa^{-1}\lb R^{\frac{1}{2}}(\bX^{\frac{1}{2}})^{-1}(R^{-1})^{\frac{1}{2}}\rb_{ba}\kappa$ with $a,b=\pm \f12$, which means that the $*$-operation is closed in the fundamental representation. This can be generalized to any other irreducible representation by using the recoupling relation \eqref{eq:holonomy_recoupling} of the generators of loop algebra. 

\medskip

The isomorphism between loop algebra and $\UQ$ can also be generalized to graph algebra, which we now describe. 
The graph algebra $\cL_{0,4}$ is isomorphic to $\UQ^{\otimes 4}$ by the following relations \cite{Alekseev:1993gh}.
\be
\begin{split}
\bX_{\nu}^{I}=\kappa_{I}^{-1}K_{\nu}^{I}\bX_{\nu\nu}^I K_{\nu}^{I-1}\quad,&\quad \bX_{\nu\nu}^I=\bX^I_{\nu, +}(\bX_{\nu, -}^{-1})^I\;,\\
K^{I}_{i}=\bX_{1,+}^{I}\cdots\bX_{i-1,+}^{I}\quad,&\quad K_{1}=e\;.
\end{split}
\ee
These definitions are used in \eqref{eq:iso_M_graph_algebra}.
The matrix elements of $\bX_{\nu\nu}^I$ subject to the following relations \cite{Alekseev:1993gh}.
\begin{align}
(R^{-1})^{IJ}\stackrel{1}{\bX^I_{\nu\nu}}R^{IJ}\stackrel{2}{\bX^{J}_{\nu\nu}}&=\stackrel{2}{\bX^J_{\nu\nu}}(R^{\prime})^{IJ}\stackrel{1}{\bX^{I}_{\nu\nu}}(R^{\prime-1})^{IJ}\;, 
\label{eq:Xloop_equation}\\
\stackrel{1}{\bX_{\nu\nu}}\stackrel{2}{\bX_{\mu\mu}}&=\stackrel{2}{\bX_{\nu\nu}}\stackrel{1}{\bX_{\mu\mu}}\;,
\label{eq:X_commute}
\end{align}
which also apply for all representations.
\begin{lemma}
\label{lemma:X1X2-relation}
 The isomorphism between $\cL_{0,4}$ and $\UQ^{\otimes 4}$ preserves the the commutation relation of the graph algebra.
 \be
 \begin{aligned}
 (R^{-1})^{IJ}\,\stackrel{1}{\bX_{\nu}^I}\, R^{IJ}\,\stackrel{2}{\bX_{\nu}^J} &= \stackrel{2}{\bX_{\nu}^J}, (R')^{IJ}\, \stackrel{1}{\bX_{\nu}^I}\, ({R'}^{-1})^{IJ}\,,\\
(R^{-1})^{IJ}\stackrel{1}{\bX_{\nu}^I}R^{IJ}\stackrel{2}{\bX_{\mu}^J}&=\stackrel{2}{\bX_{\mu}^J}(R^{-1})^{IJ} \stackrel{1}{\bX_{\nu}^I}R^{IJ}\quad,\quad \ell_\nu\prec\ell_\mu,\\
(R')^{IJ}\,\stackrel{1}{\bX_{\nu}^I},(R'^{-1})^{IJ}\,\stackrel{2}{\bX_{\mu}^J} &= \stackrel{2}{\bX_{\mu}^J}\,(R')^{IJ}\,\stackrel{1}{\bX_{\nu}^I}\,(R'^{-1})^{IJ}\,,\quad  \ell_{\nu}\succ\ell_{\mu}.
\end{aligned}
\ee
\end{lemma}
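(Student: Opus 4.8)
The plan is to substitute the explicit isomorphism $\bX_\nu^I=\kappa_I^{-1}K_\nu^I\,\bX_{\nu\nu}^I\,(K_\nu^I)^{-1}$, with $K_\nu^I=\bX_{1,+}^I\cdots\bX_{\nu-1,+}^I$, into each of the three target relations and to reduce them, by repeated use of the exchange relations \eqref{eq:commutation_X} together with the quasitriangularity \eqref{eq:QYBE}, to the ``intrinsic'' relations \eqref{eq:Xloop_equation} and \eqref{eq:X_commute} that the commuting copies $\bX_{\nu\nu}$ already satisfy. Throughout, the auxiliary legs labelled $1$ and $2$ carry the representations $I$ and $J$, while the $K_\nu$ factors act nontrivially only on these auxiliary legs and on the puncture legs $1,\dots,\nu-1$; the strategy is to commute every $K$ factor to the outside so that it either cancels against its inverse or, in the mixed case, produces exactly the conjugating $R$-matrix demanded by the right-hand side.

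First I would treat the same-puncture relation. Writing out $\stackrel{1}{\bX_\nu^I}$ and $\stackrel{2}{\bX_\nu^J}$ and inserting them into $(R^{-1})^{IJ}\stackrel{1}{\bX_\nu^I}R^{IJ}\stackrel{2}{\bX_\nu^J}$, I would move the copies of $K_\nu$ through $R^{IJ}$ and through $\bX_{\nu\nu}$ using the four lines of \eqref{eq:commutation_X}. Because $K_\nu$ is built only from the $\bX_{+}$ of the lower-order punctures, its two copies $\stackrel{1}{K_\nu^I}$ and $\stackrel{2}{K_\nu^J}$ braid past $R^{IJ}$ in a controlled way, and after this rearrangement the $K_\nu$'s appear symmetrically on the far left and far right of both sides. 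The residual identity is then precisely the loop equation \eqref{eq:Xloop_equation} for $\bX_{\nu\nu}$, so the conjugation by $K_\nu$ drops out and the first relation follows.

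Next I would handle the two mixed relations for $\nu\neq\mu$, splitting into the cases $\ell_\nu\prec\ell_\mu$ and $\ell_\nu\succ\ell_\mu$. Here the essential input is \eqref{eq:X_commute}, so that the intrinsic loop generators commute and all the nontriviality comes from the ordering of the $K$ factors. When $\nu<\mu$ the factor $K_\mu$ contains $\bX_{\nu,+}$, which overlaps with $\bX_{\nu\nu}$ on the $\nu$-th puncture leg; pushing $K_\mu$ past $\bX_{\nu\nu}^I$ and past $R^{IJ}$ with \eqref{eq:commutation_X} generates exactly the $R^{IJ}$, respectively $(R')^{IJ}$, conjugation demanded by the right-hand sides of \eqref{eq:commutation_rep_2} and \eqref{eq:commutation_rep_3}, after which the remaining $K$ factors cancel. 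The case $\nu>\mu$ is the mirror image, with the roles of $K_\nu$ and $K_\mu$ and of $R$ and $R'$ interchanged.

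The main obstacle is purely combinatorial bookkeeping: $K_\nu$ and $K_\mu$ are ordered products of $R'$-matrices linking the auxiliary legs to the puncture legs, and commuting them through the single structural matrix $R^{IJ}$ and through $\bX_{\nu\nu},\bX_{\mu\mu}$ requires repeated, carefully ordered applications of \eqref{eq:QYBE} and of all four lines of \eqref{eq:commutation_X}, keeping precise track of which legs each factor touches. Once the correct sequence of moves is fixed, every step is an identity already available in the excerpt, so the substance of the proof lies in organizing this rearrangement rather than in introducing any new algebraic ingredient.
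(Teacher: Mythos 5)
Your proposal follows essentially the same route as the paper's proof: expand $\bX_\nu^I$ via the conjugation $\kappa_I^{-1}K_\nu^I\,\bX_{\nu\nu}^I\,(K_\nu^I)^{-1}$ into an ordered string of $\bX_{i,\pm}$ factors, then repeatedly apply the exchange relations \eqref{eq:commutation_X} together with the commutativity \eqref{eq:X_commute} of the distinct copies (inserting $R^{-1}R$ or $R'R'^{-1}$ pairs where needed) to pull the conjugating factors outward until the structural $R$-matrices land in the positions demanded by the right-hand sides. The paper likewise writes out only the $\ell_\nu\prec\ell_\mu$ case in full and treats the remaining cases as analogous, with the same-puncture case resting on the single-loop equation exactly as you describe, so your plan is a faithful match.
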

\begin{proof}
 Let us start from the {\it l.h.s.} of the equation for $\ell_\nu\prec\ell_\mu$:
 \be
 \begin{aligned}
  (R^{-1})^{IJ}\stackrel{1}{\bX^I_{\nu}}R^{IJ}\stackrel{2}{\bX^J_{\mu}}&=(R^{-1})^{IJ}\stackrel{1}{(\bX_{1,+})^I}\cdots\stackrel{1}{(\bX_{\nu-1,+})^I}\stackrel{1}{(\bX_{\nu,+})^I}\stackrel{1}{(\bX_{\nu,-}^{-1})^I}\stackrel{1}{(\bX_{\nu-1,+}^{-1})^I}\cdots\stackrel{1}{(\bX_{1,+}^{-1})^I}\;\cdot\\
&\quad\cdot\;(R)^{IJ}\stackrel{2}{(\bX_{1,+})^J}\cdots\stackrel{2}{(\bX_{\mu-1,+})^J}\stackrel{2}{(\bX_{\mu,+})^J}\stackrel{2}{(\bX_{\mu,-}^{-1})^J}\stackrel{2}{(\bX_{\mu,+}^{-1})^J}\cdots\stackrel{2}{(\bX_{1,+}^{-1})^J}\\
&=(R^{-1})^{IJ}\stackrel{1}{(\bX_{1,+})^I}\stackrel{2}{(\bX_{1,+})^J}\cdots\stackrel{1}{(\bX_{\nu-1,+})^I}\stackrel{1}{(\bX_{\nu,+})^I}\stackrel{1}{(\bX_{\nu,-}^{-1})^I}\stackrel{1}{(\bX_{\nu-1,+}^{-1})^I}\cdots\stackrel{1}{(\bX_{2,+}^{-1})^I}\;\cdot\\
&\quad\cdot\;(R)^{IJ}\stackrel{2}{(\bX_{2,+})^J}\cdots\stackrel{2}{(\bX_{\mu-1,+})^J}\stackrel{2}{(\bX_{\mu,+})^J}\stackrel{2}{(\bX_{\mu,-}^{-1})^J}\stackrel{2}{(\bX_{\mu,+}^{-1})^J}\cdots\stackrel{1}{(\bX_{1,+}^{-1})^I}\stackrel{2}{(\bX_{1,+}^{-1})^J}\;.
 \end{aligned}
 \ee
 The second equality is obtained by using relations \eqref{eq:X_commute} and \eqref{eq:commutation_X}. And continuing to use both relations several times and adding $(R^{-1})^{IJ}R^{IJ}$ in the derivation, we have
 \be
 \begin{aligned}
   &\phantom{=}(R^{-1})^{IJ}\stackrel{1}{\bX^I_{\nu}}R^{IJ}\stackrel{2}{\bX^J_{\mu}}\\
  &=\stackrel{2}{(\bX_{1,+})^J}\stackrel{1}{(\bX_{1,+})^I}\stackrel{2}{(\bX_{2,+})^J}\stackrel{1}{(\bX_{2,+})^I}\cdots\stackrel{2}{(\bX_{\nu-1,+})^J}\stackrel{1}{(\bX_{\nu-1,+})^I}\stackrel{2}{(\bX_{\nu,+})^J}\stackrel{1}{(\bX_{\nu,+})^I}(R^{-1})^{IJ}R^{IJ}
 \cdots\stackrel{2}{(\bX_{\mu-1,+})^J}\stackrel{2}{(\bX_{\mu,+})^J}\;\cdot \\
 &\quad\cdot\;\stackrel{2}{(\bX_{\mu,-}^{-1})^J}\stackrel{2}{(\bX_{\mu-1,+}^{-1})^J}\cdots(R^{-1})^{IJ}R^{IJ}\stackrel{1}{(\bX_{\nu,-}^{-1})^I}\stackrel{2}{(\bX_{\nu,+}^{-1})^J}\stackrel{1}{(\bX_{\nu-1,+}^{-1})^I}\stackrel{2}{(\bX_{\nu-1,+}^{-1})^J}\cdots\stackrel{1}{(\bX_{2,+}^{-1})^I}\stackrel{2}{(\bX_{2,+}^{-1})^J}\stackrel{1}{(\bX_{1,+}^{-1})^I}\stackrel{2}{(\bX_{1,+}^{-1})^J}\;.   
 \end{aligned}
 \ee
 Continue to use relations \eqref{eq:X_commute} and \eqref{eq:commutation_X} several times, we obtain the desired result:
 \be
 \begin{aligned}
(R^{-1})^{IJ}\stackrel{1}{\bX^I_{\nu}}R^{IJ}\stackrel{2}{\bX^J_{\mu}}
 &=\stackrel{2}{(\bX_{1,+})^J}\cdots\stackrel{2}{(\bX_{\mu-1,+})^J}\stackrel{2}{(\bX_{\mu,+})^J}\stackrel{2}{(\bX_{\mu,-}^{-1})^J}\stackrel{2}{(\bX_{\mu,+}^{-1})^J}\cdots\stackrel{2}{(\bX_{1,+}^{-1})^J}(R^{-1})^{IJ}\;\cdot\\
 &\quad\cdot\;\stackrel{1}{(\bX_{1,+})^I}\cdots\stackrel{1}{(\bX_{\nu-1,+})^I}\stackrel{1}{(\bX_{\nu,+})^I}\stackrel{1}{(\bX_{\nu,-}^{-1})^I}\stackrel{1}{(\bX_{\nu-1,+}^{-1})^I}\cdots\stackrel{1}{(\bX_{1,+}^{-1})^I}(R)^{IJ}\\
 &=\stackrel{2}{\bX^J_{\mu}}(R^{-1})^{IJ}\stackrel{1}{\bX^I_{\nu}}R^{IJ}\;.
 \end{aligned}
 \ee
  For $\ell_\nu\succ\ell_\mu$ and $\ell_\nu=\ell_\mu$, the proofs are similar.
 \end{proof}

\begin{lemma}
 The embedding $\bX^I_{0}=\kappa^{-1}_I\bX^I_{0,+}(\bX^I_{0,-})^{-1}=\kappa^{-1}_I(\prod_{\nu=1}\bX^I_{\nu,+})(\prod_{\nu=1}\bX^I_{\nu,-})^{-1}$ satisfies the defining relations of loop algebra:
 \begin{align}
(R^{-1})^{IJ}\stackrel{1}{\bX_0^I}R^{IJ}\stackrel{2}{\bX_0^J}&=\stackrel{2}{\bX_0^J}(R^{\prime})^{IJ}\stackrel{1}{\bX_0^I}(R^{\prime-1})^{IJ}\;,\\
\stackrel{1}{\bX_0^{I}}R^{IJ}\stackrel{2}{\bX_0^{J}}&=\sum_{K}C[IJ|K]^{*}\bX_0^{K}C[IJ|K]\;.
 \end{align}
\end{lemma}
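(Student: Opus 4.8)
The plan is to reduce both defining relations for $\bX^I_0$ to the single-loop identities already established in \eqref{eq:loop_eqn} and \eqref{eq:proof_functorial}. The key preliminary step is to show that the composite elements $\bX^I_{0,\pm}=\prod_{\nu=1}^4\bX^I_{\nu,\pm}$ obey \emph{exactly} the same exchange relations as the elementary factors $\bX^I_\pm$ in \eqref{eq:commutation_X}, namely
\begin{align}
\stackrel{1}{\bX^I_{0,\pm}}\stackrel{2}{\bX^J_{0,\pm}}R^{IJ}=R^{IJ}\stackrel{2}{\bX^J_{0,\pm}}\stackrel{1}{\bX^I_{0,\pm}}\,,&\qquad (R')^{IJ}\stackrel{1}{\bX^I_{0,\pm}}\stackrel{2}{\bX^J_{0,\pm}}=\stackrel{2}{\bX^J_{0,\pm}}\stackrel{1}{\bX^I_{0,\pm}}(R')^{IJ}\,,\\
\stackrel{1}{\bX^I_{0,-}}\stackrel{2}{\bX^J_{0,+}}R^{IJ}=R^{IJ}\stackrel{2}{\bX^J_{0,+}}\stackrel{1}{\bX^I_{0,-}}\,,&\qquad (R')^{IJ}\stackrel{1}{\bX^I_{0,+}}\stackrel{2}{\bX^J_{0,-}}=\stackrel{2}{\bX^J_{0,-}}\stackrel{1}{\bX^I_{0,+}}(R')^{IJ}\,.
\end{align}
The cleanest route is to observe that $\bX^I_{0,\pm}$ is nothing but the representation of $R'$ (resp.\ $R^{-1}$) with its second leg carrying the fourfold coproduct on the puncture factors, i.e.\ $\bX^I_{0,+}=(\rho^I\otimes\Id^{\otimes4})\big((\Id\otimes\Delta^{(3)})(R')\big)$ and $\bX^I_{0,-}=(\rho^I\otimes\Id^{\otimes4})\big((\Id\otimes\Delta^{(3)})(R^{-1})\big)$ in the chosen linear order. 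The relations above then follow from the very quasitriangular axioms that yield \eqref{eq:commutation_X} for the elementary factors, namely $\Delta'(\xi)R=R\Delta(\xi)$ and the hexagon identities $(\Id\otimes\Delta)(R)=R_{13}R_{12}$, $(\Delta\otimes\Id)(R)=R_{13}R_{23}$, now applied leg by leg. Alternatively, one may prove them directly by commuting $R^{IJ}$ and $(R')^{IJ}$ through the ordered product $\prod_\nu\bX^I_{\nu,\pm}$ one factor at a time, using \eqref{eq:commutation_X} and \eqref{eq:X_commute}, exactly in the bookkeeping style of Lemma~\ref{lemma:X1X2-relation}.

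Once these composite relations are in hand, the two claimed identities are immediate. For the loop equation I would repeat verbatim the chain of manipulations in the proof of \eqref{eq:loop_eqn}, with $\bX_\pm$ replaced by $\bX_{0,\pm}$: one threads $R^{-1}$ and $R$ through $\bX_{0,+}(\bX_{0,-})^{-1}$ using the $\pm$ exchange relations and recollects the factors on the right, the central scalar $\kappa_I^{-1}\kappa_J^{-1}$ factoring out symmetrically. For the functoriality relation I would follow the computation \eqref{eq:proof_functorial}: using the hexagon identities to absorb the product of $R'$ and $R$ blocks into the representation of $(\Delta\otimes\Id)(R'R)$ with the puncture coproduct on the second slot, rewriting the remaining $R^{IJ}_{12}$ via the CG expansion $\sum_L\frac{\kappa_I\kappa_J}{\kappa_L}C[IJ|L]^*C[IJ|L]=R^{IJ}_{12}$, and finally recognizing $\kappa_L^{-1}\bX^L_{0,+}(\bX^L_{0,-})^{-1}=\bX^L_0$ by \eqref{eq:X0}. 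Because $\bX_{0,\pm}$ enter these computations exactly as $\bX_\pm$ did, and the extra coproduct on the puncture slots commutes through $R^{IJ}$ on the two representation legs, no genuinely new step is required.

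The main obstacle is the preliminary step: establishing the composite exchange relations. The difficulty is that the four factors $\bX^I_{\nu,\pm}=R'^{\,I}_{1,\nu+1}$ (resp.\ $(R^{-1})^I_{1,\nu+1}$) all share the representation leg $1$, so they do not simply commute in that slot, and one must control how $R^{IJ}$ on the two representation legs threads past the \emph{entire} ordered product rather than a single factor. The coproduct identification makes this transparent, but it requires verifying that the product $\prod_\nu R'_{1,\nu+1}$ taken in the fixed linear order really does assemble into $(\Id\otimes\Delta^{(3)})(R')$; this is precisely where the cilium ordering and the hexagon relations must be used carefully, and it is the only point where the multi-puncture structure genuinely enters beyond the single-loop case treated in \eqref{eq:loop_eqn} and \eqref{eq:proof_functorial}.
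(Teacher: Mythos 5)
Your proof is correct, and it reaches the same two identities the paper proves, but it organizes the argument differently. The paper's proof is a single long bookkeeping computation: it writes $\bX_0^I$ out as the ordered product of all eight elementary factors $\bX_{\nu,\pm}^I$ and threads $R^{IJ}$, $(R')^{IJ}$ through them one factor at a time using \eqref{eq:commutation_X} and \eqref{eq:X_commute} (inserting $(R^{-1})^{IJ}R^{IJ}$ and $(R')^{IJ}(R'^{-1})^{IJ}$ as needed), before invoking quasitriangularity and the CG expansion of $R_{12}^{IJ}$ at the end. You instead factor the argument through a preliminary structural step: the identification $\bX_{0,+}=(\Id\otimes\Delta^{(3)})(R')$ and $\bX_{0,-}=(\Id\otimes\Delta^{(3)})(R^{-1})$ (both identifications are correct given the hexagon identities $(\Id\otimes\Delta)(R)=R_{13}R_{12}$ and $(\Id\otimes\Delta)(R')=R'_{12}R'_{13}$, which fix the ordering of the factors consistently with the cilium), from which the composite exchange relations follow by applying the algebra homomorphism $\Delta^{(3)}$ leg-by-leg to the QYBE-type relations underlying \eqref{eq:commutation_X}. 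Once those are in hand, the single-loop proofs of \eqref{eq:loop_eqn} and \eqref{eq:proof_functorial} do carry over verbatim with $\bX_\pm\to\bX_{0,\pm}$, and the final CG step works because $C[IJ|L]$ intertwines $(\rho^I\otimes\rho^J)\Delta$ with $\rho^L$ regardless of the extra puncture slots. What your route buys is modularity and a much shorter computation; what the paper's route buys is that it never has to justify the coproduct identification and stays entirely within the elementary exchange relations already listed. You correctly flag the coproduct identification as the one point needing care, and it checks out.
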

\begin{proof}
Thr proof is similar to that of Lemma \ref{lemma:X1X2-relation}. Starting from the {\it l.h.s.} of the equation, we have
\be
\begin{aligned}
 (R^{-1})^{IJ}\stackrel{1}{\bX_0^I}R^{IJ}\stackrel{2}{\bX_0^J}&=(R^{-1})^{IJ}\kappa^{-1}_I\stackrel{1}{(\bX_{1,+})^I}\stackrel{1}{(\bX_{2,+})^I}\cdots\stackrel{1}{(\bX_{\nu,+})^I}\stackrel{1}{(\bX^{-1}_{\nu,-})^I}\stackrel{1}{(X^{-1}_{\nu-1,-})^I}\cdots\stackrel{1}{(X^{-1}_{1,-})^I}R^{IJ}\;\cdot\\
& \quad\cdot\;\kappa^{-1}_J\stackrel{2}{(\bX_{1,+})^J}\stackrel{2}{(\bX_{2,+})^J}\cdots\stackrel{2}{(\bX_{\nu,+})^J}\stackrel{2}{(\bX^{-1}_{\nu,-})^{J}}\stackrel{2}{(\bX^{-1}_{\nu-1,-})^{J}}\cdots\stackrel{2}{(\bX^{-1}_{1,-})^{J}}\\
 &=\kappa^{-1}_I\kappa^{-1}_J\stackrel{2}{(\bX_{1,+})^J}\stackrel{2}{(\bX_{2,+})^J}\cdots\stackrel{2}{(\bX_{\nu,+})^J}\stackrel{1}{(\bX_{1,+})^I}\stackrel{1}{(\bX_{2,+})^I}\cdots\stackrel{1}{(\bX_{\nu,+})^I}(R^{-1})^{IJ}R^{IJ}(R')^{IJ}(R'^{-1})^{IJ}\;\cdot\\
 &\quad\cdot\;\stackrel{1}{(\bX^{-1}_{\nu,-})^I}\stackrel{2}{(\bX^{-1}_{\nu,-})^J}\stackrel{1}{(X^{-1}_{\nu-1,-})^I}\stackrel{2}{(X^{-1}_{\nu-1,-})^J}\cdots\stackrel{1}{(X^{-1}_{1,-})^I}\stackrel{2}{(X^{-1}_{1,-})^J}\;.
\end{aligned}
\ee
To get the second equality, one needs to apply relations \eqref{eq:X_commute} and \eqref{eq:commutation_X}  several times and inserts $(R')^{IJ}(R'^{-1})^{IJ}$ in the equation. At this point, we apply relations \eqref{eq:X_commute} and \eqref{eq:commutation_X}
several times again.
\be
\begin{aligned}
 (R^{-1})^{IJ}\stackrel{1}{\bX_0^I}R^{IJ}\stackrel{2}{\bX_0^J}
 &=\kappa^{-1}_I\kappa^{-1}_J\stackrel{2}{(\bX_{1,+})^J}\stackrel{2}{(\bX_{2,+})^J}\cdots\stackrel{2}{(\bX_{\nu,+})^J}\stackrel{1}{(\bX_{1,+})^I}\stackrel{1}{(\bX_{2,+})^I}\cdots\stackrel{1}{(\bX_{\nu,+})^I} (R')^{IJ}\;\cdot\\
 &\quad\cdot\;\stackrel{2}{(\bX^{-1}_{\nu,-})^{J}}\stackrel{2}{(\bX^{-1}_{\nu-1,-})^{J}}\cdots\stackrel{2}{(\bX^{-1}_{1,-})^{J}}\stackrel{1}{(\bX^{-1}_{\nu,-})^{I}}\stackrel{1}{(\bX^{-1}_{\nu-1,-})^{I}}\cdots\stackrel{1}{(\bX^{-1}_{1,-})^{I}}(R'^{-1})^{IJ}\\
 &=\kappa^{-1}_I\kappa^{-1}_J\stackrel{2}{(\bX_{1,+})^J}\stackrel{2}{(\bX_{2,+})^J}\cdots\stackrel{2}{(\bX_{\nu,+})^J}\stackrel{2}{(\bX^{-1}_{\nu,-})^{J}}\stackrel{2}{(\bX^{-1}_{\nu-1,-})^{J}}\cdots\stackrel{2}{(\bX^{-1}_{1,-})^{J}}(R')^{IJ}\;\cdot\\
 &\quad\cdot\;\stackrel{1}{(\bX_{1,+})^I}\stackrel{1}{(\bX_{2,+})^I}\cdots\stackrel{1}{(\bX_{\nu,+})^I}\stackrel{1}{(\bX^{-1}_{\nu,-})^I}\stackrel{1}{(X^{-1}_{\nu-1,-})^I}\cdots\stackrel{1}{(X^{-1}_{1,-})^I} (R'^{-1})^{IJ}\\
 &=\stackrel{2}{\bX_0^J}(R^{\prime})^{IJ}\stackrel{1}{\bX_0^I}(R^{\prime-1})^{IJ}\;.
\end{aligned}
\ee

To prove the second relation, we again start from the {\it l.h.s.}:
\be
\begin{aligned}
\stackrel{1}{\bX_0^{I}}R^{IJ}\stackrel{2}{\bX_0^{J}}&=\kappa^{-1}_I\kappa^{-1}_J\stackrel{1}{(\bX_{1,+})^I}\stackrel{1}{(\bX_{2,+})^I}\cdots\stackrel{1}{(\bX_{\nu,+})^I}\stackrel{1}{(\bX^{-1}_{\nu,-})^I}\stackrel{1}{(X^{-1}_{\nu-1,-})^I}\cdots\stackrel{1}{(X^{-1}_{1,-})^I}R^{IJ}\;\cdot\\
& \quad\cdot\;\stackrel{2}{(\bX_{1,+})^J}\stackrel{2}{(\bX_{2,+})^J}\cdots\stackrel{2}{(\bX_{\nu,+})^J}\stackrel{2}{(\bX^{-1}_{\nu,-})^{J}}\stackrel{2}{(\bX^{-1}_{\nu-1,-})^{J}}\cdots\stackrel{2}{(\bX^{-1}_{1,-})^{J}}\\
&=\kappa^{-1}_I\kappa^{-1}_J\stackrel{1}{(\bX_{1,+})^I}\stackrel{2}{(\bX_{1,+})^J}\stackrel{1}{(\bX_{2,+})^I}\stackrel{2}{(\bX_{2,+})^J}\cdots\stackrel{1}{(\bX_{\nu,+})^I}\stackrel{2}{(\bX_{\nu,+})^J}R^{IJ}\;\cdot\\
&\quad\cdot\;\stackrel{1}{(\bX^{-1}_{\nu,-})^I}\stackrel{2}{(\bX^{-1}_{\nu,-})^J}\stackrel{1}{(X^{-1}_{\nu-1,-})^I}\stackrel{2}{(X^{-1}_{\nu-1,-})^J}\cdots\stackrel{1}{(X^{-1}_{1,-})^I}\stackrel{2}{(X^{-1}_{1,-})^J}\;.
\end{aligned}
\ee
We have applied the relations \eqref{eq:X_commute} and \eqref{eq:commutation_X}
several times to obtain the second equality. At this point, we use the definitions of $\bX^I_{\nu,+}$ and $\bX^I_{\nu,-}$, then apply the quasi-triangularity several times and rewrite $R_{12}^{IJ}$ in terms of the CG maps, which gives  
\be
\begin{aligned}
&\quad \stackrel{1}{\bX_0^{I}}R^{IJ}\stackrel{2}{\bX_0^{J}}\\
&=\kappa^{-1}_I\kappa^{-1}_J(R'_{13})^I(R'_{23})^J(R'_{14})^I(R'_{24})^J\cdots(R'_{1\nu+2})^I(R'_{2\nu+2})^JR^{IJ}_{12}R^I_{1\nu+2}R^J_{2\nu+2}...R^I_{14}R^J_{24}R^I_{13}R^J_{23}\\
&=\kappa^{-1}_I\kappa^{-1}_J (\rho^I\otimes\rho^J\otimes\Id)(\Delta'\otimes\Id)(R'_{12})(\rho^I\otimes\rho^J\otimes\Id)(\Delta'\otimes\Id)(R'_{13})...(\rho^I\otimes\rho^J\otimes\Id)(\Delta'\otimes\Id)(R'_{1\nu+1})\;\cdot\\
&\quad\cdot\;R^{IJ}_{12}(\rho^I\otimes\rho^J\otimes\Id)(\Delta\otimes\Id)(R_{1\nu+1})\cdots(\rho^I\otimes\rho^J\otimes\Id)(\Delta\otimes\Id)(R_{13})(\rho^I\otimes\rho^J\otimes\Id)(\Delta\otimes\Id)(R_{12})\\
&=\kappa^{-1}_I\kappa^{-1}_J R^{IJ}_{12}(\rho^I\otimes\rho^J\otimes\Id)(\Delta\otimes\Id)(R'_{12})(\rho^I\otimes\rho^J\otimes\Id)(\Delta\otimes\Id)(R'_{13})\cdots(\rho^I\otimes\rho^J\otimes\Id)(\Delta\otimes\Id)(R'_{1\nu+1})\;\cdot\\
&\quad\cdot\;(\rho^I\otimes\rho^J\otimes\Id)(\Delta\otimes\Id)(R_{1\nu+1})\cdots(\rho^I\otimes\rho^J\otimes\Id)(\Delta\otimes\Id)(R_{13})(\rho^I\otimes\rho^J\otimes\Id)(\Delta\otimes\Id)(R_{12})\\
&=\kappa^{-1}_I\kappa^{-1}_J\sum_{L}\frac{\kappa_{I}\kappa_{J}}{\kappa_{L}}C[IJ|L]^{*}C[IJ|L](\rho^I\otimes\rho^J\otimes\Id)(\Delta\otimes\Id)(R'_{12}R'_{13}\cdots R'_{1\nu+1}R_{1,\nu+1}\cdots R_{13}R_{12})\\
&=\sum_{L}\kappa^{-1}_{L}C[IJ|L]^{*}(R'_{12}R'_{13}\cdots R'_{1\nu+1}R_{1,\nu+1}\cdots R_{13}R_{12})^{L}C[IJ|L]\\
&=\sum_{L}C[IJ|L]^{*}\bX^L_0C[IJ|L]\;.
\end{aligned}
\ee
We have, therefore, completed all the proofs.
\end{proof}

\section{Some detailed calculations}
\label{app:details}

In this appendix, we collect some detailed calculations used in the main text, which are formulated into lemmas. 
Lemma \ref{lemma:trace_lemma} is the proof of \eqref{eq:trace_lemma_0}. 
Lemmas \ref{lemma:flatness_gen} and \ref{lemma:flatness_trun} are the proofs of \eqref{eq:flatness} first for $\UQ$ with $q^p=1$ assuming semi-simplicity, which we call the non-truncated case, then for $\TUQ$ with semi-simplicity property, which we call the truncated case. Lemmas \ref{lemma:tr_RpR_gen} and \ref{lemma:tr_RpR_trun} are useful in the discussion of the area operator in Section \ref{sec:area}.

 \begin{lemma}
 \label{lemma:trace_lemma}
The quantum trace has the following property
\be
\tr_q^I(Y^I )=\tr_q^I((R^{-1})^{I} Y^I R^{I})=\tr_q^I(R'^{I}Y^I(R'^{-1})^{I})\,,\quad \forall\ Y^I\in \End(V^I)\otimes e\otimes \UQ\,,
\label{eq:trace_lemma}
\ee
where $e$ is the identity element in $\UQ$. 
\end{lemma}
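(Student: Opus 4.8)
The plan is to read both equalities as the statement that the quantum trace $\tr_q^I(\,\cdot\,)=\tr^I(\,\cdot\,\rho^I(g))$ over the auxiliary space $V^I$ is invariant under conjugation by the $R$-matrix, and to reduce each one to a purely algebraic identity in $\UQ\otimes\UQ$ assembled from the quasitriangular structure and the grouplike element $g=u^{-1}v$. Throughout I would write $R=\sum_a R^{(1)}_a\otimes R^{(2)}_a$ and $R^{-1}=\sum_b \bar R^{(1)}_b\otimes \bar R^{(2)}_b$, and use the standard properties $(S\otimes\Id)(R)=(\Id\otimes S)(R)=R^{-1}$ and $(S\otimes S)(R)=R$, together with $\Delta(g)=g\otimes g$, $gS(\xi)=S^{-1}(\xi)g$ and $g\xi g^{-1}=S^{-2}(\xi)$. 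The hypothesis $Y^I\in\End(V^I)\otimes e\otimes\UQ$ is what makes the lemma clean: the second leg of $R^I=(\rho^I\otimes\Id)(R)$ lands in the slot on which $Y^I$ acts as the identity $e$, so conjugation never entangles the $\UQ$-factor carrying the content of $Y^I$.

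First I would expand the conjugated operator. Writing $Y^I=\sum_\alpha C_\alpha\otimes e\otimes d_\alpha$ with $C_\alpha\in\End(V^I)$ and $d_\alpha\in\UQ$, the product $(R^{-1})^I Y^I R^I$ equals $\sum_{a,b,\alpha}\rho^I(\bar R^{(1)}_b)\,C_\alpha\,\rho^I(R^{(1)}_a)\otimes \bar R^{(2)}_b R^{(2)}_a\otimes d_\alpha$. Inserting $\rho^I(g)$ and taking the ordinary trace over $V^I$, then using the cyclicity of $\tr^I$ to carry $\rho^I(\bar R^{(1)}_b)$ to the right, the first equality $\tr_q^I((R^{-1})^I Y^I R^I)=\tr_q^I(Y^I)$ follows once I establish the operator identity
\be
\sum_{a,b} \rho^I\!\big(R^{(1)}_a\, g\, \bar R^{(1)}_b\big)\otimes \bar R^{(2)}_b R^{(2)}_a \;=\;\rho^I(g)\otimes 1 \qquad\text{in }\End(V^I)\otimes\UQ .
\label{eq:star_plan}
\ee
Indeed, pairing the first leg of \eqref{eq:star_plan} against $\tr^I(C_\alpha\,\cdot\,)$ and summing against $d_\alpha$ reproduces $\tr_q^I(Y^I)$ on the nose.

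To prove \eqref{eq:star_plan} I would work at the level of $\UQ\otimes\UQ$ and apply $\rho^I$ to the first leg only at the end. Substituting $R^{-1}=(S\otimes\Id)(R)$ gives $\bar R^{(1)}_b=S(R^{(1)}_b)$ and $\bar R^{(2)}_b=R^{(2)}_b$, so that $g\,\bar R^{(1)}_b=gS(R^{(1)}_b)=S^{-1}(R^{(1)}_b)\,g$; the factor $g$ then emerges at the right end of the first leg, and factoring out $g\otimes 1$ reduces the claim to $\sum_{a,b}R^{(1)}_a S^{-1}(R^{(1)}_b)\otimes R^{(2)}_b R^{(2)}_a=1\otimes 1$. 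Applying $S\otimes\Id$ to this equation and using that $S$ is an antihomomorphism turns the first leg into $R^{(1)}_b S(R^{(1)}_a)$; recognizing $S(R^{(1)}_a)=\bar R^{(1)}_a$ and $R^{(2)}_a=\bar R^{(2)}_a$, the left-hand side becomes exactly $R\,R^{-1}=1\otimes 1$. Since $S\otimes\Id$ is invertible, the reduced identity, and hence \eqref{eq:star_plan}, holds.

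For the second equality, conjugation is by $R'^I=(\rho^I\otimes\Id)(R')$, and the same bookkeeping reduces it to $\sum_{a,b}\rho^I(\bar R^{(2)}_b\, g\, R^{(2)}_a)\otimes R^{(1)}_a \bar R^{(1)}_b=\rho^I(g)\otimes 1$. I would establish this by the parallel manipulation, now moving $g$ through $g\,R^{(2)}_a=S^{-2}(R^{(2)}_a)\,g$ and invoking the self-duality $(S\otimes S)(R)=R$ in the guise $(S^{-2}\otimes S^{-1})(R')=R'^{-1}$, after which the residual factor collapses to $R'\,R'^{-1}=1\otimes 1$. The main obstacle is organizational rather than conceptual: one must track carefully which tensor leg each $R$-factor and each antipode acts on, and the two equalities are genuinely not symmetric. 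The first closes up at the level of single antipodes via $gS=S^{-1}g$, whereas the second forces the appearance of $S^{\pm2}$ and can be closed only by using $(S\otimes S)(R)=R$; keeping these antipode powers consistent across the two legs is the delicate point.
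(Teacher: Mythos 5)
Your proof is correct and follows essentially the same route as the paper's: cyclicity of the ordinary trace, the grouplike intertwining property $gS(\xi)=S^{-1}(\xi)g$ (equivalently $g\xi g^{-1}=S^{-2}(\xi)$), the anti-homomorphism property of the antipode, and $(S\otimes\Id)(R)=R^{-1}$, all conspiring to collapse the conjugation to $RR^{-1}=1\otimes 1$ (respectively $R'R'^{-1}=1\otimes 1$). The only cosmetic difference is that you work with the universal $R$-matrix and isolate a standalone operator identity in $\End(V^I)\otimes\UQ$, whereas the paper evaluates the second tensor leg in an arbitrary representation $\rho^J$ and performs the same manipulation inside the trace before concluding by genericity of $J$.
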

\begin{proof}
Let $Y^I=y^I\otimes e\otimes \xi$ where $y^i\in\End(V^I)$ and  $\xi\in\UQ$.
We start from the {\it r.h.s.} of the first equation and take the $J$ representation for the second tensor space which gives $\rho^J(Y^I):=y^I\otimes e^J\otimes \xi$. Then we calculate
\be\begin{split}
\tr_q^I\lb (R^{-1})^{IJ}\rho^J(Y^I)R^{IJ}\rb 
&= \tr^I \lb \lb\sum_a  S(R^{(1)}_a)\otimes R^{(2)}_a\rb^{IJ} Y^I \lb  \sum_b  R^{(1)}_b\otimes R^{(2)}_b\rb^{IJ} \rho^I(g)\rb\\
&=\sum_{ab} \tr \lb S(R_a^{(1)})^Iy^I R_b^{(1),I}\rho^I(g) \rb \otimes R_a^{(2),J} R_b^{(2),J} \otimes \xi\\
&=\sum_{ab}\tr\lb y^IR_b^{(1),I}S^{-1}(R^{(1)}_a)^I\rho^I(g)\rb\otimes R_a^{(2),J} R_b^{(2),J} \otimes \xi\\
&=\sum_{ab}\tr\lb y^IS^{-1}\lb (R^{(1)}_a)^I S( R_b)^{(1),I}\rb\rho^I(g)\rb\otimes R_a^{(2),J} R_b^{(2),J} \otimes \xi\\
&=\tr_q^I\lb\rho^J( Y^I) ( S^{-1}\otimes \Id ) ( R^{IJ} (R^{-1})^{IJ})\rb 
\equiv \tr^I_q(\rho^J(Y^I))\,,
\end{split}
\label{eq:trace_property}
\ee
where we have used the cyclic property of trace and the intertwining property, \ie $S^{-1}(\xi)g=gS(\xi)\,,\forall \xi\in\UQ$, of the group-like element $g$ (the third equation of \eqref{eq:g_properties}) to obtain the third line and used the anti-homomorphism property $S^{-1}(\xi\eta)=S^{-1}(\eta)S^{-1}(\xi)$ of the antipode to obtain the fourth line. As \eqref{eq:trace_property} is true for all representation $J$, the first equality in \eqref{lemma:trace_lemma} is true. 

The second equality follows the same logic. 
\be\begin{split}
\tr_q^I\lb R'^{IJ}\rho^J(Y^I)(R'^{-1})^{IJ}\rb
&=\sum_{ab} \tr\lb R_a^{(2),I}y^IS(R_b^{(2),I})\rho^I(g)\rb \otimes R_a^{(1),J}R_b^{(1),J} \otimes \xi  \\
&=\sum_{ab} \tr\lb y^IS\lb S^{-1}(R_a^{(2),I})\rho^{I}(g^{-1})R_b^{(2),I}\rb\rb \otimes R_a^{(1),J}R_b^{(1),J} \otimes \xi \\
&=\sum_{ab} \tr\lb y^IS\lb \rho^{I}(g^{-1}) S(R_a^{(2),I}) R_b^{(2),I}\rb\rb \otimes R_a^{(1),J}R_b^{(1),J} \otimes \xi \\
&=\sum_{ab} \tr\lb y^IS\lb  S(R_a^{(2),I}) R_b^{(2),I}\rb \rho^{I}(g)\rb \otimes R_a^{(1),J}R_b^{(1),J} \otimes \xi \\
&=\tr^I_q\lb \rho^J(Y^I) (S\otimes\Id)( (R^{-1})^{IJ} R^{IJ} )  \rb \equiv \tr^I_q(\rho^J(Y^I))\,,
\end{split}\ee
where we have used the cyclic property of trace, the anti-homomorphism property of antipode and the property $S(g)=g^{-1}$ (the second equation of \eqref{eq:g_properties}) of $g$ to obtain the second line and the intertwining property of $g$ to obtain the third line. 
\end{proof}

\begin{lemma}[Flatness for the non-truncated case]
\label{lemma:flatness_gen}\cite{Alekseev:1994au}
The elements $\chi_{0}^{0}$ and $M_{0}^{I}$ satisfy the following relation. 
 \be
\begin{split}
\chi_{0}^{0} \bM_{0}^{I}=\kappa_{I}^{-1} \chi_{0}^{0} e^{I}\;,
\end{split}
\label{eq:flatness_non-truncated}
\ee
where the complex number $\kappa_{I}^{-1}$ gives a quantum correction. When $q\to 1$, Eq.\eqref{eq:flatness_non-truncated} recovers the classical closure condition \eqref{eq:closure_SU2}.
\end{lemma}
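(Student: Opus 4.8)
The plan is to avoid manipulating the composite element $\chi_0^0$ directly inside $\cL_{0,4}$ and instead reduce the claim to a one-line check inside a faithful family of representations. The starting observation is that the generators $\bM_0^I=\kappa_I^3\bM_4^I\bM_3^I\bM_2^I\bM_1^I$ of the ``biggest cycle'' obey the same loop-algebra relations---the commutation relation \eqref{eq:commutation_rep_1} and the recoupling relation \eqref{eq:holonomy_recoupling}---as the single-loop generators; this is the $\bM$-analogue of the statement proven for $\bX_0$ in Appendix \ref{app:isomorphism}. Consequently there is an algebra homomorphism $\cL_{0,1}\to\cL_{0,4}$ sending the abstract loop generator $\bM_\ell^I\mapsto\bM_0^I$, and hence $c^J\mapsto c_0^J$ and $\chi^0\mapsto\chi_0^0$ since these are built from $\bM$ by the same quantum trace and $S$-matrix combinations. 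It therefore suffices to establish the abstract identity $\chi^0\bM_\ell^I=\kappa_I^{-1}\chi^0 e^I$ in $\cL_{0,1}$ and transport it along this homomorphism.

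To prove the abstract identity I would invoke the faithful family $\{D^L\}_{L\in\N/2}$ of representations of $\cL_{0,1}$ recalled around \eqref{eq:rep_loop_algebra}: faithfulness (proven in \cite{Alekseev:1995rn}, in the non-truncated semisimple setting relevant here) reduces an operator identity in $\End(V^I)\otimes\cL_{0,1}$ to its validity in every $D^L$. Applying $\Id\otimes D^L$ and using $D^L(\bM_\ell^I)=\kappa_I^{-1}(R'R)^{IL}$ from \eqref{eq:rep_loop_algebra} together with $D^L(\chi^0)=\delta_{0L}\Id_L$ from \eqref{eq:rep_central}, the left-hand side becomes $\delta_{0L}\,\kappa_I^{-1}(R'R)^{IL}$ while the right-hand side becomes $\delta_{0L}\,\kappa_I^{-1}\,\Id_I\otimes\Id_L$. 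Both sides vanish for $L\neq 0$, and for $L=0$ they agree because the trivial representation acts by the counit, so that $(R'R)^{I0}=(\rho^I\otimes\epsilon)(R'R)=e^I$ by the counit property $(\Id\otimes\epsilon)(R)=(\Id\otimes\epsilon)(R')=e$ of the $\cR$-matrix. Faithfulness then yields $\chi^0\bM_\ell^I=\kappa_I^{-1}\chi^0 e^I$, and the homomorphism above gives \eqref{eq:flatness_non-truncated}.

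The main obstacle is the first step: one must genuinely verify that the composite monodromy $\bM_0$ satisfies the loop-algebra relations, which requires propagating the commutation and recoupling relations of the four factors $\bM_\nu$ through the product $\kappa_I^3\bM_4^I\bM_3^I\bM_2^I\bM_1^I$---the same braided bookkeeping carried out for $\bX_0$ in Appendix \ref{app:isomorphism}, now in the $q^p=1$ setting. A secondary subtlety, flagged by the phrase ``assuming semi-simplicity'', is that at a genuine root of unity the $S$-matrix and the characters $\chi^J$ are only well-defined after passing to the truncated algebra $\TUQ$; the present argument is therefore formal for $\UQ$ at $q^p=1$ and is made rigorous in the truncated version (Lemma \ref{lemma:flatness_trun}). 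One could alternatively give a purely computational proof, expanding $\chi_0^0=\cN^2\sum_J d_J c_0^J$ and collapsing the sum via the Verlinde-type $S$-matrix identities \eqref{eq:S-property}; this route reproduces the trace-level eigenvalue relation $c_0^J\chi_0^0=d_J\chi_0^0$ immediately but then requires extra work to upgrade it to the full matrix identity, which is precisely what the representation-theoretic argument sidesteps. Finally, since $\kappa_I=q^{-I(I+1)}\to1$ as $q\to1$, \eqref{eq:flatness_non-truncated} degenerates to $\chi_0^0\bM_0^I=\chi_0^0 e^I$, recovering the classical closure \eqref{eq:closure_SU2}.
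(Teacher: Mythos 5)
Your proposal is correct in outline, but it takes a genuinely different route from the paper's. The paper proves the lemma by a direct in-algebra computation: it starts from the recoupling relation for $\bM_0$, rewrites the quantum trace in $c_0^I\bM_0^J$ via the CG-map form \eqref{eq:quantum_trace_CG}, invokes braiding identities for $C[IJ|K](R')^{\Bar{I}I}C[\Bar{I}I|0]^{*}$ imported from \cite{Alekseev:1994au}, and finally collapses the sum over $I$ in $\chi_0^0=\cN^2\sum_I d_I c_0^I$ using the completeness of the CG maps. You instead observe that $\bM_0$ generates a copy of the loop algebra, transport the problem to $\cL_{0,1}$ along the resulting homomorphism, and verify the identity in the faithful family $\{D^L\}$, where it reduces to $D^L(\chi^0)=\delta_{0L}$ and $(\rho^I\otimes\epsilon)(R'R)=e^I$. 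Both arguments rest on the same input that the paper does not actually prove for $\bM_0$ (it asserts the recoupling relation for $\bM_0$ at the start of its proof, proving the analogue only for $\bX_0$ in Appendix \ref{app:isomorphism}); your route additionally needs the commutation relation \eqref{eq:commutation_rep_1} for $\bM_0$ so that the map $\cL_{0,1}\to\cL_{0,4}$ is a well-defined homomorphism, and it leans on the joint faithfulness of $\{D^L\}$ cited from \cite{Alekseev:1995rn}. What your approach buys is brevity and uniformity --- the same one-line representation check is exactly how the paper handles the $|q|=1, q^p\neq 1$ case in the main theorem --- at the cost of outsourcing the hard content to faithfulness; the paper's computation is longer but stays entirely inside the algebra and yields the finer intermediate identity for $c_0^I\bM_0^J$ along the way. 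Your caveat that at a genuine root of unity the $S$-matrix and characters are only formal before truncation is apt and matches the paper's own ``assuming semi-simplicity'' disclaimer, with the rigorous version deferred to Lemma \ref{lemma:flatness_trun}.
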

\begin{proof}
 We start from the functoriality relations of holonomies, \ie 
 \be
 \stackrel{1}{\bM^I_0}(R)^{IJ}\stackrel{2}{\bM^I_0}=\sum_K C[IJ|K]^*\bM^K_0C[IJ|K]\;.
 \ee
 Acting with the character $c^I_0$ on $\bM_0^J$, we calculate that 
 \be
 \begin{aligned}
c^I_0\bM^J_0&=\kappa_{I}\tr_q^I(\bM^I_0)\bM^J_0\\
&=\kappa_{I}\tr_q^I(\sum_K (R^{-1})^{IJ}C[IJ|K]^*\bM^K_0C[IJ|K])\\
&=\kappa_{I}\sum_K \frac{d_I}{v_I}C[\Bar{I}I|0] (R^{-1})^{IJ}C[IJ|K]^*\bM^K_0C[IJ|K](R')^{\Bar{I}I}C[\Bar{I}I|0]^{*}\\
&=\kappa_{I} \sum_K \frac{d_{I}}{v_{I}}\frac{v_{k}}{v_{I}v_{J}}\frac{d_{k}v_{I}}{d_{I}v_{k}}\frac{d_{k}v_{I}}{d_{I}v_{k}}C[\Bar{K}K|0](R')^{J\Bar{K}} C[J\Bar{K}|\Bar{I}]^{*}C[J\Bar{K}|\Bar{I}]\stackrel{3}{\bM^{K}_{0}}(R')^{\Bar{K}K}C[\Bar{K}K|0]^{*}\\
&=\kappa_{I}\sum_K\frac{d^2_K}{d_Iv_Jv_K}C[\Bar{K}K|0](R')^{J\Bar{K}} C[J\Bar{K}|\Bar{I}]^{*}C[J\Bar{K}|\Bar{I}]\stackrel{3}{\bM^{K}_{0}}(R')^{\Bar{K}K}C[\Bar{K}K|0]^{*}\;,
 \label{eq:flatness_proof}
 \end{aligned}
 \ee
where we have used the quantum trace identity \eqref{eq:trace_lemma} in the second line. The fourth line is obtained by using $\frac{v_K}{v_Iv_J}(R')^{IJ}C[IJ|K]^*=(R^{-1})^{IJ}C[IJ|K]^*$ and the following relations given and proved in \cite{Alekseev:1994au}:
 \be
\begin{aligned}
C[IJ|K](R')^{\Bar{I}I}C[\Bar{I}I|0]^{*}&=\frac{d_{K}v_{I}}{d_{I}v_{K}}(A) C[J\Bar{K}|I](R')^{\Bar{K}K}C[\Bar{K}K|0]^{*}\;,\\
C[\Bar{I}I|0](R')^{IJ} C[IJ|K]^{*}&=\frac{d_{K}v_{I}}{d_{I}v_{K}}(A^{-1})C[\Bar{K}K|0](R')^{J\Bar{K}} C[J\Bar{K}|\Bar{I}]^{*}\;,
\end{aligned}
\ee
 where A is a number.  
 With $\chi_{0}^{0}= \sum_I \mathcal{N}^{2} d_{I} c_{0}^{I}$, it follows that
 \be
 \begin{aligned}
   \chi^0_0 \bM^J_0&= \mathcal{N}^{2} \sum_{I,K} \kappa_{I}\frac{d^2_K}{v_Jv_K}C[\Bar{K}K|0](R')^{J\Bar{K}} C[J\Bar{K}|\Bar{I}]^{*}C[J\Bar{K}|\Bar{I}]\stackrel{3}{\bM^{K}_{0}}(R')^{\Bar{K}K}C[\Bar{K}K|0]^{*}\\ 
   &=\mathcal{N}^{2} \sum_{I,K} \kappa_{I}\frac{d^2_K}{v_Jv_K}C[\Bar{K}K|0]\frac{\kappa_J\kappa_{\Bar{K}}}{\kappa_{\Bar{I}}}(e^J\otimes e^{\Bar{K}})\stackrel{3}{\bM^{K}_{0}}(R')^{\Bar{K}K}C[\Bar{K}K|0]^{*}\\
   &=\mathcal{N}^{2} \sum_{K} \frac{d^2_K}{\kappa_J\kappa_K}C[\Bar{K}K|0](e^J\otimes e^{\Bar{K}})\stackrel{3}{\bM^{K}_{0}}(R')^{\Bar{K}K}C[\Bar{K}K|0]^{*}\\
   &=\kappa^{-1}_J e^J\sum_{K}\mathcal{N}^{2} d_K \frac{d_K}{\kappa_K}C[\Bar{K}K|0] e^{\Bar{K}}\stackrel{2}{\bM^{K}_{0}}(R')^{\Bar{K}K}C[\Bar{K}K|0]^{*}\\
   &=\kappa^{-1}_Je^J \chi^0_0\;.
 \end{aligned}
 \ee
 The second line is obtained by the completeness of the CG maps:
 \be
 \sum_{\Bar{I}} \frac{\kappa_{\Bar{I}}}{\kappa_J \kappa_{\Bar{K}}}
 (R')^{J\Bar{K}}C[J\Bar{K}|\Bar{I}]^*C[J\Bar{K}|\Bar{I}]=e^J\otimes e^{\Bar{K}}\;.
 \ee
\end{proof}
\begin{lemma}[Flatness for the truncated case]
\label{lemma:flatness_trun}
 With the truncation, the elements $\chi_{0}^{0}$ and $M^{I}_{0}$ still satisfy the following relation:
\be
\chi_{0}^{0} \bM^{I}_{0} = (\kappa_{I})^{-1} \chi_{0}^{0} e^{I}\;. 
\label{eq:flatness_truncated}
\ee 
The complex number $\kappa_{I}^{-1}$ gives a quantum correction, one can see when $q\to 1$ we recover the classical closure condition.
\end{lemma}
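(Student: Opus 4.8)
The plan is to mirror the proof of the non-truncated flatness identity (Lemma \ref{lemma:flatness_gen}) line by line, replacing every structural ingredient by its truncated counterpart produced by the substitution rule \eqref{eq:substitution_rule}. Because $\TUQ$ is semi-simple and its physical representations coincide with those of $\UQ$ at $q^p=1$, the representation-theoretic bookkeeping (decomposition of tensor products, Schur's lemma, and the closed recoupling theory on the physical window $0\le I,J,K\le k/2$) carries over unchanged. Concretely, I would start from the truncated functoriality relation for the big-cycle holonomy $\bM_0$, namely $\stackrel{1}{\bM_0^I}\mathcal{R}^{IJ}\stackrel{2}{\bM_0^J}=\sum_K \widetilde{C}[IJ|K]^*\bM_0^K\widetilde{C}[IJ|K]$, and compute the action of the truncated central element $c_0^I=\kappa_I\,\widetilde{\tr}_q^I(\bM_0^I)$ on $\bM_0^J$ exactly as in \eqref{eq:flatness_proof}, before summing against the weights that assemble $\chi_0^0$.

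The ingredients required in truncated form, each of which follows from applying \eqref{eq:substitution_rule} to the corresponding non-truncated relation, are: (i) the cyclicity of the quantum trace, i.e.\ the truncated analogue of Lemma \ref{lemma:trace_lemma} with $\widetilde{\tr}_q^I$ in place of $\tr_q^I$; (ii) the two crossing relations of Alekseev that convert $C[IJ|K](R')^{\bar II}C[\bar II|0]^*$ into a $C[J\bar K|I](R')^{\bar KK}C[\bar KK|0]^*$-type expression up to the scalar $\tfrac{d_K v_I}{d_I v_K}$; and (iii) the CG completeness relation \eqref{eq:CG_completeness}. With these in hand, the same chain of equalities as in \eqref{eq:flatness_proof} produces $c_0^I\bM_0^J$, and summing against the weights $\cN^2 d_I$ that build $\chi_0^0=\sum_I\cN^2 d_I c_0^I$ collapses the sum over $I$ by CG completeness, leaving the single factor $\kappa_J^{-1}$ and reproducing $\chi_0^0\bM_0^J=\kappa_J^{-1}e^J\chi_0^0$.

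The main obstacle will be the non-associativity of $\TUQ$: reassociating the strings of CG maps that appear when one repeatedly fuses and splits holonomies no longer leaves the expression invariant but generates insertions of the associator $\varphi$, governed by the modified recoupling identity \eqref{eq:CG-6j-identity}. I would therefore have to track these $\varphi$-insertions carefully and show, using the quasi-inverse relations $\varphi\varphi^{-1}=(\Id\otimes\Delta^T)\Delta^T(e)$ together with the pentagon identity for $\varphi$, that they recombine so as to leave the final scalar exactly $\kappa_I^{-1}$ with no residual $\varphi$-dependence. A second delicate point is the cyclicity of the truncated quantum trace in step (i): since the substitution rule cannot be applied naively to formulas containing the group-like element $g$ (as flagged in the footnote to \eqref{eq:substitution_rule}), this identity must instead be re-derived directly from the definition $\widetilde{\tr}_q^I(X)=\tr^I(m^I X w^I g^I)$, verifying that the $m^I$ and $w^I$ factors conspire with the antipode and $g$ to restore the conjugation invariance used in \eqref{eq:trace_property}. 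Once both checks are settled, the identity holds with the same quantum correction $\kappa_I^{-1}$, and the $q\to 1$ limit returns the classical closure condition \eqref{eq:closure_SU2}, as claimed.
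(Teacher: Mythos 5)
Your overall strategy---transplanting the chain of equalities in \eqref{eq:flatness_proof} to the truncated setting via the substitution rule \eqref{eq:substitution_rule}---is the right starting point and matches the paper. However, your ingredient (iii) is precisely where the argument breaks: in the truncated theory the CG completeness relation \eqref{eq:CG_completeness} does \emph{not} hold. The truncated tensor product only decomposes over $|I-J|\leq K\leq u(I,J)=\min(I+J,k-I-J)$, and since $\TUQ$ is a \emph{weak} quasi-Hopf algebra with $\Delta^T(e)=P\neq e\otimes e$, the sum $\sum_{K}\frac{\kappa_K}{\kappa_I\kappa_J}(R')^{IJ}C[IJ|K]^*C[IJ|K]$ over physical $K$ yields the projector $P^{IJ}$ rather than $e^I\otimes e^J$. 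So the step where you ``collapse the sum over $I$ by CG completeness'' is not available, and this is exactly the point the paper singles out as the only genuine difference from the non-truncated case.

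The paper's replacement is a quasi-associativity identity for the CG maps,
\be
e^JC[\Bar{K}K|0]=\sum_{\Bar{I}}\mathcal{F}(IJK)\,C[\Bar{I}K|J]\,C[J\Bar{K}|\Bar{I}]\,,
\ee
with coefficients $\mathcal{F}(IJK)$ built from a subset of $6j$-symbols; combined with the normalization \eqref{eq:CG_normalization} this gives $e^JC[\Bar{K}K|0](R')^{J\Bar{K}}C[J\Bar{K}|\Bar{L}]^*=\mathcal{F}(LJK)\,C[\Bar{L}K|J]\,\frac{\kappa_J\kappa_{\Bar{K}}}{\kappa_{\Bar{L}}}$, and it is this relation (applied forward and then resummed backward over $\bar I$) that performs the collapse of the $I$-sum in $\chi_0^0=\sum_I\cN^2 d_I c_0^I$. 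Your instinct that the non-associativity and the associator $\varphi$ are the main obstacle is sound---the $\mathcal{F}(IJK)$ are indeed where the $\varphi$-insertions end up---but your proposal leaves the resolution as an unexecuted ``careful tracking'' while simultaneously invoking the completeness relation that truncation destroys. Without the explicit quasi-associativity identity (or an equivalent), the final two lines of the computation cannot be reproduced, so the proof as proposed has a genuine gap. Your concern about re-deriving the cyclicity of $\widetilde{\tr}_q$ directly from $m^I$, $w^I$ and $g^I$ is legitimate but minor; the paper absorbs it into the blanket translation by the substitution rule.
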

\begin{proof}:
The proof follows Ref.\cite{Alekseev:1994au}. All proofs in \eqref{eq:flatness_proof} can be translated to the truncated case with the substitution rules \eqref{eq:substitution_rule} rather straightforwardly. The only difference is that we do not have the completeness of the CG maps for the truncated case. The quasi-associativity of tensor product of representations gives \cite{Alekseev:1994au}:
\be
e^JC[\Bar{K}K|0]=\sum_{\Bar{I}}\mathcal{F}(IJK)C[\Bar{I}K|J]C[J\Bar{K}|\Bar{I}]\;,
\ee
where $\mathcal{F}(IJK)$ is a coefficient obtained from a subset of 6j-symbols. With the normalization of the CG maps, the expression above can be reformulated as
\be
\begin{aligned}
e^JC[\Bar{K}K|0](R')^{J\Bar{K}}C[J\Bar{K}|\Bar{L}]^*&=\sum_{\Bar{I}}\mathcal{F}(IJK)C[\Bar{I}K|J]C[J\Bar{K}|\Bar{I}](R')^{J\Bar{K}}C[J\Bar{K}|\Bar{L}]^*\\
&=\sum_{\Bar{I}}\mathcal{F}(IJK)C[\Bar{I}K|J]\frac{\kappa_J\kappa_{\Bar{K}}}{\kappa_{\Bar{L}}}\delta_{\Bar{I},\bar{L}}\\
&=\mathcal{F}(LJK)C[\Bar{L}K|J]\frac{\kappa_J\kappa_{\Bar{K}}}{\kappa_{\Bar{L}}}\;.
\end{aligned}
\ee
Using this property, the {\it l.h.s.} of \eqref{eq:flatness_truncated} can be written as
\be
\begin{aligned}
  \chi^0_0\bM^J_0&= \mathcal{N}^{2} \sum_{I,K} \kappa_{I}\frac{d^2_K}{v_Jv_K}C[\Bar{K}K|0](R')^{J\Bar{K}} C[J\Bar{K}|\Bar{I}]^{*}C[J\Bar{K}|\Bar{I}]\stackrel{3}{\bM^{K}_{0}}(R')^{\Bar{K}K}C[\Bar{K}K|0]^{*}\\
  &=\mathcal{N}^{2} \sum_{I,K} \kappa_{I}\frac{d^2_K}{v_Jv_K}\mathcal{F}(IJK)C[\Bar{I}K|J]\frac{\kappa_J\kappa_{\Bar{K}}}{\kappa_{\bar{I}}}C[J\Bar{K}|\Bar{I}]\stackrel{3}{\bM^{K}_{0}}(R')^{\Bar{K}K}C[\bar{K}K|0]^{*}\\
  &=\mathcal{N}^{2} \sum_{K} \kappa_{I}\frac{d^2_K}{v_Jv_K}\frac{\kappa_J\kappa_{\Bar{K}}}{\kappa_{\bar{I}}}e^JC[\Bar{K}K|0]\stackrel{2}{\bM^{K}_{0}}(R')^{\Bar{K}K}C[\Bar{K}K|0]^{*}\\
  &=\kappa^{-1}_J e^J \chi^0_0 \;,
\end{aligned}
\ee
which completes the proof and all proofs can be translated to the truncated case with the help of the substitution
rules \eqref{eq:substitution_rule} . 
\end{proof}
\begin{lemma}
\label{lemma:tr_RpR_gen} 
Let $\rho^{j_{1}}$ and $\rho^{j_{2}}$ be two irreducible representations of $\mathcal{U}_q(\mathfrak{su}(2))$. The matrix s is defined as $s_{IJ}=\tr_q^{j_1}\otimes \tr_q^{j_2}((R'R)^{j_1j_2})$ and it is evaluated as $[(2j_1+1)(2j_2+1)]_q$.
\end{lemma}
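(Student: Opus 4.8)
The plan is to reduce the double quantum trace to a finite sum over the irreducible summands of $V^{j_1}\otimes V^{j_2}$ and then evaluate that sum by a telescoping argument. First I would invoke Lemma~\ref{lemma:RpR_qgen} to write $(\rho^{j_1}\otimes\rho^{j_2})(R'R)=\sum_{K=|j_1-j_2|}^{j_1+j_2}\frac{v_{j_1}v_{j_2}}{v_K}\Id_K$, so that $s_{IJ}$ becomes $\sum_K \frac{v_{j_1}v_{j_2}}{v_K}\,(\tr_q^{j_1}\otimes\tr_q^{j_2})(\Id_K)$. The key structural observation is that the group-like element $g$ defining the quantum trace satisfies $\Delta(g)=g\otimes g$ (see \eqref{eq:g_properties}), so $(\rho^{j_1}\otimes\rho^{j_2})(\Delta(g))$ acts as $\rho^K(g)$ on each subspace $V^K$ in the decomposition. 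Consequently the double quantum trace restricted to $V^K$ coincides with the single quantum trace $\tr_q^K$, giving $(\tr_q^{j_1}\otimes\tr_q^{j_2})(\Id_K)=d_K=[2K+1]_q$. Hence $s_{IJ}=\sum_{K=|j_1-j_2|}^{j_1+j_2}\frac{v_{j_1}v_{j_2}}{v_K}\,[2K+1]_q$.

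Next I would substitute the explicit value $v_J=q^{-2J(J+1)}$ from Lemma~\ref{lemma:vJ} and pass to the variables $a=2j_1+1$, $b=2j_2+1$ and $c=2K+1$. Using $2J(J+1)=\tfrac12\big((2J+1)^2-1\big)$ one finds $\frac{v_{j_1}v_{j_2}}{v_K}=q^{(c^2-a^2-b^2+1)/2}$, and as $K$ runs from $|j_1-j_2|$ to $j_1+j_2$ the variable $c$ runs over the arithmetic progression $\{|a-b|+1,|a-b|+3,\dots,a+b-1\}$ with step $2$. Thus the claim reduces to the elementary identity $\sum_c q^{(c^2-a^2-b^2+1)/2}\,[c]_q=[ab]_q$.

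The crux is a telescoping rewriting of the summand. Writing $[c]_q=(q^c-q^{-c})/(q-q^{-1})$ and completing the square, one checks that
\begin{equation}
q^{(c^2-a^2-b^2+1)/2}\,(q^{c}-q^{-c})=g(c+2)-g(c),\qquad g(c):=q^{((c-1)^2-a^2-b^2)/2}.
\end{equation}
Summing over the step-$2$ progression telescopes to $g(a+b+1)-g(|a-b|+1)=q^{ab}-q^{-ab}$, where the two endpoint evaluations use $(a+b)^2-a^2-b^2=2ab$ and $(a-b)^2-a^2-b^2=-2ab$. Dividing by $q-q^{-1}$ yields $s_{IJ}=[ab]_q=[(2j_1+1)(2j_2+1)]_q$, as desired. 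The only genuinely delicate point is the first paragraph's claim that the double quantum trace decomposes summand-by-summand into single quantum traces; once the group-like property $\Delta(g)=g\otimes g$ is used, everything else is a direct substitution followed by the telescoping collapse.
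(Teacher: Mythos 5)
Your proof is correct and follows essentially the same route as the paper's: both reduce the double quantum trace via Lemma \ref{lemma:RpR_qgen} and the group-like property of $g$ to the sum $\sum_K \frac{v_{j_1}v_{j_2}}{v_K}[2K+1]_q$, and then evaluate it by the same telescoping collapse (your $g(c+2)-g(c)$ with $c=2K+1$ is exactly the paper's $q^{2(J+1)^2-1}-q^{2J^2-1}$ step, just written after a change of variables). The only cosmetic difference is that you make the justification of $(\tr_q^{j_1}\otimes\tr_q^{j_2})(\Id_K)=d_K$ via $\Delta(g)=g\otimes g$ explicit, where the paper simply cites semi-simplicity.
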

\begin{proof}
A direct calculation gives
\be
\begin{split}
\tr_q^{j_1}\otimes \tr_q^{j_2}((R'R)^{j_1j_2})
&=\sum_{J=j_{1}-j_{2}}^{j_{1}+j_{2}}q^{2J(J+1)-2j_{1}(j_{1}+1)-2j_{2}(j_{2}+1)}[2J+1]_q\\
&=\sum_{J=j_{1}-j_{2}}^{j_{1}+j_{2}}q^{2J(J+1)-2j_{1}(j_{1}+1)-2j_{2}(j_{2}+1)}\frac{q^{2J+1}-q^{-(2J+1)}}{q-q^{-1}}\\
&=\frac{1}{q-q^{-1}}\sum_{J=j_{1}-j_{2}}^{j_{1}+j_{2}}q^{2J(J+1)-2j_{1}(j_{1}+1)-2j_{2}(j_{2}+1)}(q^{2J+1}-q^{-(2J+1)})\\
&=\frac{1}{q-q^{-1}}\sum_{J=j_{1}-j_{2}}^{j_{1}+j_{2}}q^{-2j_{1}(j_{1}+1)-2j_{2}(j_{2}+1)}(q^{2(J+1)^{2}-1}-q^{2J^{2}-1})\;.
\label{eq:RpR_proof}
\end{split}
\ee
Lemma \ref{lemma:RpR_qgen} and the decomposition $\tr_q^{j_1}\otimes \tr_q^{j_2}= \sum_J N^{j_1j_2}_J \tr_q^J$, due to the semi-simplicity property of $\mathcal{U}_q(\mathfrak{su}(2))$, are used to get the first equality. The multiplicity is $1$ for the admissible tuple $(j_1,j_2,J)$ in $\mathcal{U}_q(\mathfrak{su}(2))$ case. Therefore,
\be
\begin{split}
\tr_q^{j_1}\otimes \tr_q^{j_2}((R'R)^{j_1j_2})
&=\frac{q^{-2j_{1}(j_{1}+1)-2j_{2}(j_{2}+1)}}{q-q^{-1}}\left(\sum_{J'=j_{1}-j_{2}+1}^{j_{1}+j_{2}+1}q^{2J'^{2}-1}-\sum_{J=j_{1}-j_{2}}^{j_{1}+j_{2}}q^{2J^{2}-1}\right) \\
&=\frac{q^{-2j_{1}(j_{1}+1)-2j_{2}(j_{2}+1)}}{q-q^{-1}}q^{-1}\left(q^{2(j_{1}+j_{2}+1)^{2}}-q^{2(j_{1}-j_{2})^{2}}\right)\\
&=\frac{1}{q-q^{-1}}\left(q^{4j_{1}j_{2}+2j_{1}+2j_{2}+1}-q^{-(4j_{1}j_{2}+2j_{1}+2j_{2}+1)}\right)=[(2j_{1}+1)(2j_{2}+1)]_q\;.
\end{split}
\ee
\end{proof}
\begin{lemma}
\label{lemma:tr_RpR_trun}
  For $\mathcal{U}^T_q(\mathfrak{su}(2))$, the tensor product of two representations $\rho^I\boxtimes\rho^J$ is decomposable and its irreducible components $\rho^K$ are all physical representations, \ie $|I-J|\leq K\leq u(I,J)$.
  \be
  \sum_{K=|I-J|}^{u(I,J)} \frac{v_Iv_J}{v_K}\tr_q^K(\Id_K)=[(2I+1)(2J+1)]_q.
  \ee
 \end{lemma}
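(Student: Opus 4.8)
The plan is to obtain the left-hand side by taking the quantum trace of the decomposition already established in Lemma \ref{lemma:RpR_qroot}, and then to evaluate the resulting finite sum by telescoping together with the root-of-unity relation. The decomposability of $\rho^I\boxtimes\rho^J$ into physical components $\rho^K$ with $|I-J|\leq K\leq u(I,J)$ is just the semi-simplicity of $\mathcal{U}^T_q(\mathfrak{su}(2))$ invoked in that lemma, so the real content is the closed-form evaluation. First I would apply $\tr_q^I\otimes\tr_q^J$ to both sides of $(\rho^I\otimes\rho^J)(\cR'\cR)=\sum_{K=|I-J|}^{u(I,J)}\frac{v_Iv_J}{v_K}\Id_K$. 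Using $\tr_q^K(\Id_K)=d_K=[2K+1]_q$ from \eqref{eq:quantum_dimension} and the value $v_K=q^{-2K(K+1)}$ of the ribbon element in the irreducible representation (Lemma \ref{lemma:vJ}), the expression becomes $q^{-2I(I+1)-2J(J+1)}\sum_{K=|I-J|}^{u(I,J)}q^{2K(K+1)}[2K+1]_q$.

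Next I would recast each summand, via the definition of the quantum number, in the telescoping form $q^{2K(K+1)}[2K+1]_q=\frac{1}{q-q^{-1}}\left(q^{2(K+1)^2-1}-q^{2K^2-1}\right)$, so that the sum over $K$ collapses to its two boundary terms at $K=|I-J|$ and $K=u(I,J)$. Taking $I\geq J$ without loss of generality, the lower term at $K=I-J$, once multiplied by the prefactor $q^{-2I(I+1)-2J(J+1)}$, contributes exactly $-q^{-(2I+1)(2J+1)}/(q-q^{-1})$, since the exponent $-2I(I+1)-2J(J+1)+2(I-J)^2-1$ simplifies to $-(2I+1)(2J+1)$; this is identical to the non-truncated computation of Lemma \ref{lemma:tr_RpR_gen}.

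The crux — and the step I expect to be the main obstacle — is the upper boundary term at $K=u(I,J)$, where the truncation $u(I,J)=\min(I+J,k-I-J)$ enters. When $u(I,J)=I+J$ the exponent $2(I+J+1)^2-1-2I(I+1)-2J(J+1)$ equals $(2I+1)(2J+1)$ identically and one recovers the generic answer. When the truncation is active, $u(I,J)=k-I-J$, and the plan is to reduce the exponent $2(k-I-J+1)^2-1-2I(I+1)-2J(J+1)$ modulo $k+2$ using $q^{k+2}=1$: writing $k-I-J+1=(k+2)-(I+J+1)$ and expanding shows that every term carrying a factor of $(k+2)$ is killed by $q^{k+2}=1$, leaving a residual exponent congruent to $2(I+J+1)^2-1-2I(I+1)-2J(J+1)=(2I+1)(2J+1)$. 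Hence the upper term again contributes $q^{(2I+1)(2J+1)}/(q-q^{-1})$, and adding the two boundary contributions yields $\frac{1}{q-q^{-1}}\left(q^{(2I+1)(2J+1)}-q^{-(2I+1)(2J+1)}\right)=[(2I+1)(2J+1)]_q$, as claimed. The delicate point is precisely the modular bookkeeping in this truncated case, which is what distinguishes the argument from the straightforward telescoping of the non-truncated Lemma \ref{lemma:tr_RpR_gen}.
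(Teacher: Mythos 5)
Your proposal is correct and follows essentially the same route as the paper: reduce to the quantum trace of $(\rho^I\otimes\rho^J)(\cR'\cR)$ via Lemma \ref{lemma:RpR_qroot}, insert $v_K=q^{-2K(K+1)}$ and $d_K=[2K+1]_q$, telescope the sum, and use $q^{k+2}=1$ (equivalently $q^p=1$) to reduce the upper boundary exponent back to $(2I+1)(2J+1)$. Your modular bookkeeping at the truncated endpoint, including the observation that $4(I+J+1)$ is an integer so the cross term is killed, is exactly the step the paper justifies with ``the third equality is due to $q^p=1$.''
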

 \begin{proof}
  For the case $|I-J|\le K\le I+J$,
  the derivation of the $s_{IJ}=\tr_{q}^{I}\otimes \tr_{q}^{J}((R'R)^{IJ})$ is the same as $q$ generic case.
  For the case $|I-J|\le K\le p-2-I-J$, the proof in the \eqref{eq:RpR_proof} can be directly translated to the truncated case. Based on the \eqref{eq:RpR_proof}, we continue the proof for the truncated case: 
  \be
  \begin{aligned}
  \sum_{K}\tr_{q}^{K}(\frac{v_{I}v_{J}}{v_{K}}\Id_{K})
 &=\frac{1}{q-q^{-1}}q^{-2I(I+1)-2J(J+1)}\left(\sum_{K'=I-J+1}^{p-2-I-J+1}q^{2K'^{2}-1}-\sum_{K=I-J}^{p-2-I-J}q^{2K^{2}-1}\right)\\
 &=\frac{1}{q-q^{-1}}q^{-2I(I+1)-2J(J+1)}\left(q^{2(p^{2}-2p(I+J+1)+(I+J+1)^{2})-1}-q^{2(I-J)^{2}-1}\right)\\
 &=\frac{1}{q-q^{-1}}q^{-2I(I+1)-2J(J+1)}\left(q^{2(I+J+1)^{2}-1}-q^{2(I-J)^{2}-1}\right)\\
 &=\frac{1}{q-q^{-1}}\left(q^{(4IJ+2I+2J+1)}-q^{(-4IJ-2I-2J-1)}\right)\\
 &=[(2I+1)(2J+1)]_q\;.
\end{aligned}
\ee
The third equality is due to $q^p=1$.
 \end{proof}

\nocite*{}
\bibliographystyle{bib-style} 
\bibliography{QI.bib}

\end{document}